\documentclass[10,journal]{IEEEtran}
\usepackage[moderate]{savetrees}
\usepackage{xcolor}
\usepackage{graphicx}
\usepackage[margin=0.75in]{geometry}

\usepackage{amsmath,amssymb,amsthm}
\usepackage{mathtools}
\usepackage{bm}

\usepackage{booktabs}
\usepackage{multirow}
\usepackage{siunitx}

\usepackage{enumitem}
\setlist{nosep}
\setitemize{leftmargin=*}
\usepackage{titlesec}

\usepackage{standalone}

\usepackage{tikz}
\usetikzlibrary{
  3d,
  arrows.meta,
  backgrounds,
  calc,
  positioning,
  shapes.geometric,
  shadows.blur
}
\usepackage{pgfplots}
\pgfplotsset{compat=1.18}
\usepgfplotslibrary{groupplots}

\usepackage[ruled,vlined,linesnumbered]{algorithm2e}
\SetKw{KwBreak}{break}

\definecolor{teal}{RGB}{0,128,128}
\definecolor{slate}{RGB}{112,128,144}
\definecolor{mainblue}{RGB}{25,70,140}
\definecolor{softgray}{RGB}{140,140,140}
\definecolor{aliceblue}{RGB}{240,248,255}

\newcommand{\bd}{\ensuremath{\bm{d}}}
\newcommand{\be}{\ensuremath{\bm{e}}}
\newcommand{\bfm}{\ensuremath{\bm{f}}} 

\newcommand{\bh}{\ensuremath{\bm{h}}}

\newcommand{\bo}{\ensuremath{\bm{o}}}
\newcommand{\bp}{\ensuremath{\bm{p}}}
\newcommand{\bq}{\ensuremath{\bm{q}}}
\newcommand{\br}{\ensuremath{\bm{r}}}
\newcommand{\bs}{\ensuremath{\bm{s}}}

\newcommand{\bu}{\ensuremath{\bm{u}}}

\newcommand{\bx}{\ensuremath{\bm{x}}}
\newcommand{\by}{\ensuremath{\bm{y}}}
\newcommand{\bz}{\ensuremath{\bm{z}}}

\newcommand{\bA}{\ensuremath{\bm{A}}}
\newcommand{\bB}{\ensuremath{\bm{B}}}
\newcommand{\bC}{\ensuremath{\bm{C}}}
\newcommand{\bD}{\ensuremath{\bm{D}}}
\newcommand{\bE}{\ensuremath{\bm{E}}}

\newcommand{\bJ}{\ensuremath{\bm{J}}}
\newcommand{\bK}{\ensuremath{\bm{K}}}
\newcommand{\bL}{\ensuremath{\bm{L}}}

\newcommand{\bN}{\ensuremath{\bm{N}}}

\newcommand{\bR}{\ensuremath{\bm{R}}}
\newcommand{\bS}{\ensuremath{\bm{S}}}

\newcommand{\bV}{\ensuremath{\bm{V}}}
\newcommand{\bW}{\ensuremath{\bm{W}}}
\newcommand{\bX}{\ensuremath{\bm{X}}}

\newtheorem{assumption}{Assumption}
\newtheorem{remark}{Remark}
\newtheorem{theorem}{Theorem}

\usepackage{standalone}
\usepgfplotslibrary{fillbetween,groupplots}

\definecolor{darkblue}{RGB}{0, 0, 139}

\usepackage[colorlinks=true,
  linkcolor=mainblue,
  citecolor=mainblue,
  urlcolor=mainblue]{hyperref}
\newtagform{blue_eq}{\color{mainblue}(}{)}

\usetagform{blue_eq}
\usepackage{lettrine}
\usepackage{cite}
\usepackage{xcolor, colortbl}
\usepackage{graphicx}
\newcommand{\blueRule}{\arrayrulecolor{mainblue}\specialrule{1.2pt}{0.5pt}{0.5pt}\arrayrulecolor{black}}
\newcommand{\thinBlueRule}{\arrayrulecolor{mainblue!50}\specialrule{0.5pt}{0.5pt}{0.5pt}\arrayrulecolor{black}}

\makeatletter
\renewcommand{\section}{\@startsection{section}{1}{\z@}%
  {1.0ex plus 0.6ex minus 0.4ex}%
  {0.4ex plus 0.3ex minus 0.1ex}%
  {\normalfont\large \centering\scshape\color{mainblue}}}

\renewcommand{\subsection}{\@startsection{subsection}{2}{\z@}%
  {0.8ex plus 0.5ex minus 0.3ex}%
  {0.3ex plus 0.2ex minus 0.1ex}%
  {\normalfont\normalsize\itshape\color{mainblue}}}

\renewcommand{\subsubsection}{\@startsection{subsubsection}{3}{\z@}%
  {0.7ex plus 0.4ex minus 0.2ex}%
  {0.2ex plus 0.2ex minus 0.1ex}%
  {\normalfont\normalsize\itshape\color{mainblue}}}
\makeatother

\AtBeginDocument{%
  \setlength{\abovedisplayskip}{3.7pt}%
  \setlength{\belowdisplayskip}{3.7pt}%
  \setlength{\abovedisplayshortskip}{3.5pt}%
  \setlength{\belowdisplayshortskip}{3.5pt}%
  \setlength{\textfloatsep}{2pt plus 1pt minus 2pt}%
  \setlength{\floatsep}{2pt plus 1pt minus 1pt}%
  \setlength{\intextsep}{2pt plus 1pt minus 1pt}%
  \setlength{\abovecaptionskip}{1pt}%
  \setlength{\belowcaptionskip}{2pt}%
}

\usepackage{makecell}

\title{\color{mainblue}{\textsc{\textbf{\LARGE Nonsmooth Hydraulics, Smooth Control: \\ System Theory Framework for Analyzing Water Networks}}}}

\author{Ahmad F. Taha, \textit{Member, IEEE} and Mohamad H. Kazma, \textit{Student Member, IEEE}\vspace{-0.8cm} 
\thanks{The authors are affiliated with the Civil and Environmental Engineering Department at Vanderbilt University, Nashville, Tennessee. Emails: \url{ahmad.taha@vanderbilt.edu} and \url{mohamad.h.kazma@vanderbilt.edu}.}}
\date{\today}

\begin{document}

\maketitle

\begin{abstract}
	This paper presents a comprehensive control-theoretic analysis of water distribution network (WDN) hydraulics. Starting from a general nonlinear differential algebraic equation (DAE) model of WDNs with arbitrary topology and network components (valves and pumps), we investigate three main questions. First, we study local well-posedness of the network dynamics and characterize the loss of differentiability introduced by pump and valve switching. Second, we introduce regularization methods that smooth flow and pressure trajectories under changing controls. Third, we establish error bounds for DAE linearization, local stability, and finite-horizon controllability, and quantify how network-induced parametric uncertainty impacts these properties. We demonstrate that the developed \textit{smoothed} DAE models produce trajectories closely matching EPANET, a widely used WDN simulator, for various benchmark networks.   The case studies also show that the WDN DAE exposes energy dissipation through a weighted Laplacian, ranks pipes by operating point sensitivity, and reveals that aggressive demand variation changes stability and controllability margins without eliminating local stability or pump authority.  The developed theoretical foundations enable network analysis, mitigation strategies, and system design.
\end{abstract}

\begin{IEEEkeywords}
Water distribution networks, differential algebraic equations, hydraulic modeling, graph theory, controllability.
\end{IEEEkeywords}

\section{Introduction and Paper Contributions}\label{sec:intro}

\lettrine{W}{HEN} a pump changes speed or a valve changes mode, a water network does not simply move more or less water. It changes the equations that define its hydraulic state. Beneath the familiar map of pipes is an actuated hydraulic system in which reservoirs, tanks, pumps, valves, demands, and pressure constraints interact through nonlinear pressure and flow laws. This perspective turns routine operational questions into control-theoretic ones. \textit{How do pump and valve actions move the hydraulic state? When does an operating point remain stable? Which links are vulnerable to demand-induced flow reversal, and which pressure or tank states can be influenced by the available actuators? }The societal relevance motivation is direct. Drinking water distribution assets account for a large fraction of public water system investment needs~\cite{epa2023dwinsa}, and operational decisions must be made on networks whose behavior is both nonlinear and constrained.

\vspace{0.2cm}

\textsc{\textbf{Control-Theoretic Gaps Persist}.} Large-scale infrastructure such as power grids and transportation networks have mature control-theoretic traditions, including stability, optimal control, and reachability and network analyses. WDNs face a different situation. Classical hydraulic analysis, state estimation, and control studies are all well developed~\cite{shamir1968analysis,todini1988gradient,walski2003advanced,rossman2020epanet}. \textit{Control-theoretic analysis is less so.} That is, there is an abundance of control algorithms to regulate valves and pumps, but very limited understanding of why WDNs behave as they do.

This gap is neither cosmetic nor coincidental. Pumps and valves introduce control inputs, but their setpoints and modes also introduce algebraic constraints and switching events---this naturally forms a nonlinear DAE (NDAE) hybrid system.

\vspace{0.2cm}

\textsc{\textbf{Paper Objective}.} The objective of this paper is to turn standard WDN hydraulic ingredients into a continuous-time nonlinear DAE model whose local regularity, graph structure, stability, and controllability margins can be analyzed without hiding pumps, valves, tanks, reservoirs, or demand uncertainty behind a purely quasi-steady-state simulator. Rather than invoking nonlinear hybrid systems theory, we use the more scalable---and arguably more interpretable and less inscrutable---DAE linear system theory when possible.  We cover the relevant literature next.

\vspace{0.2cm}

\textsc{\textbf{Existing WDN Tools}.} The software ecosystem around WDNs reflects the maturity of hydraulic simulation and operational analysis. \textit{EPANET} performs extended period hydraulic and water quality simulation, tracking pipe flows, nodal pressures, tank levels, and constituent concentrations over discrete hydraulic time steps~\cite{rossman2020epanet}. \textit{EPyT} mimics EPANET functionality through a Python implementation~\cite{kyriakou2023epyt}. \textit{WNTR} builds an EPANET-compatible Python workflow for resilience studies, disruptive events, repairs, pressure-dependent hydraulics, and resilience metrics~\cite{klise2017software}. \textit{WaterModels.jl} provides a Julia/JuMP environment for steady-state water flow, optimal water flow, and network design optimization formulations~\cite{tasseff2019watermodels}. These tools are indispensable for reproducible simulation and system control. Their standard modeling abstractions, however, are not intended to answer whether a controlled hydraulic model is a locally regular DAE, where differentiability is lost during device switching, or how stability and controllability margins depend on topology and hydraulic parameters.

\vspace{0.2cm}

\textsc{\textbf{Steady-State and Transient Models}.} A steady-state WDN model solves algebraic mass balance and energy balance equations for fixed demands and device settings. This steady-state problem is also called the \textit{water flow problem} (WFP). Its convergence properties are studied in~\cite{wang2020new}. Extended period simulation integrates many consecutive WFP solves while updating demand patterns and device statuses between time steps~\cite{todini1988gradient,rossman2020epanet}. At the other end, full hydraulic transient models describe water hammer wave propagation and require compressibility and pipe wall elasticity~\cite{chaudhry2014applied,wylie1993fluid,chaudhry2014applied,ghidaoui2005review}. This paper utilizes a model that sits between these levels via a continuous time rigid water column DAE that retains link inertance and tank storage, but it does \textit{not} resolve acoustic pressure waves. This intermediate regime is deliberate because it is the level at which pump and valve actions, tank dynamics, and hydraulics can be coupled in one tractable model. The rigid water column formulation in~\cite{nault2016improved} targets slow transients and controlled operations, including link inertance and tank storage but not wave propagation. The present DAE extends that level by adding a unified control input description, valve residuals, and a graph-theoretic linearization amenable to stability and controllability analysis.

Recent work revisits real-time dynamic hydraulic modeling and rigid water column formulations address slow transients and SCADA integration~\cite{nault2016improved,deuerlein2015parameterization}. Pressure-driven network analysis provides complementary algebraic solvability tools~\cite{piller2007unified} and \textit{Port-Hamiltonian} models support passivity based control~\cite{perryman2022port,torres2019port}. Contemporary literature presents spectral techniques that support WDN partitioning and resilience analysis~\cite{dinardo2018applications,yazdani2011complex}. These works motivate the present modeling level, but they do not provide a general DAE formulation that simultaneously tracks link inertance, tank storage, junction and reservoir algebraic constraints, pump curves, valve mode residuals, and the index and regularity conditions needed for subsequent control analysis.

\vspace{0.2cm}

\textsc{\textbf{Control in Water Networks}.} A substantial WDN control literature optimizes pumps, valves, pressure zones, or operating costs. Representative examples include predictive control for the urban water cycle~\cite{ocampo2013predictive}, geometric programming control for tree and general drinking water networks~\cite{sela2015control,wang2020receding}, economic MPC~\cite{wang2017nonlinear}, DAE-constrained geometric programming control~\cite{wang2019gpdae}, and optimal operations via advanced optimization~\cite{singh2020optimal}. Some closely related recent work also covers WDN control-theoretic studies on stochastic predictive control, distributionally robust pump control, and pressure regulation with stability guarantees~\cite{sampathirao2018gpu,guo2023optimal,galuppini2024realtime}. These studies are all relevant, but their primary objective is optimization or control synthesis, not system-level analysis.

\vspace{0.2cm}

\textsc{\textbf{Paper Contributions and Organization}.} To the best of our knowledge, no prior work gives a unified local DAE analysis of WDN hydraulics that connects fixed mode, switching-induced loss of differentiability, graph Laplacian structure, and controllability margin sensitivity. Individual subsets exist, including graph spectral properties of WDNs~\cite{dinardo2018applications,yazdani2011complex}, structural observability~\cite{vangervert2025structural}, and local sensitivity of steady-state solutions~\cite{piller2016sensitivity}, but not within a single DAE framework.  The paper contributions are as follows.
\begin{enumerate}
\item \textit{Full nonlinear DAE model and local regularity and switching analysis.} We derive a continuous-time DAE that retains pipe, pump, and valve flows, junction, tank, and reservoir heads, tank storage, pump curves, valve modes, and nodal demands in one unreduced state-space form. We then prove that, on fixed mode intervals satisfying a nonsingular algebraic Jacobian condition, the WDN DAE is locally index-1 and has a unique local hydraulic trajectory. 
\item \textit{Graph-theoretic local model.} We show how the fixed-direction linearization of the smoothed DAE induces a reservoir-referenced weighted Laplacian that makes the network interconnection explicit.
\item \textit{Stability, controllability, and perturbation margins.} We quantify how hydraulic parameter perturbations affect the local linearization, the stability margin, and finite-horizon controllability margins.
\item \textit{Thorough case studies and benchmarking.} We present detailed case studies for various benchmark water networks and demonstrate the engineering intuition the theoretical methods provide.
\end{enumerate}

Section~\ref{sec:notation} collects notation. Section~\ref{sec:methods} derives the full hydraulic DAE. Section~\ref{sec:local_theory_overall} studies local DAE regularity and smooth control regularization. Section~\ref{sec:graph_theory} develops the graph-theoretic linearization and local stability and controllability results. Section~\ref{sec:case_studies} validates the methods through detailed case studies. Section~\ref{sec:future} concludes the paper. The appendices provide supplemental modeling details, examples, and mathematical proofs.

\section{Paper Notation}\label{sec:notation}
Vectors are denoted in bold lowercase (e.g., $\bx$), matrices in bold uppercase (e.g., $\bA$), and node or link sets by calligraphic letters (e.g., $\mathcal{V}$). Bold Greek symbols are reserved for vector or matrix quantities, such as the hydraulic parameter vector $\boldsymbol{\vartheta}$, diagonal inertance matrices $\boldsymbol{\Gamma}^{\alpha}$, and the unified link nonlinearity $\boldsymbol{\eta}$. Scalar Greek symbols, such as $\gamma_e$, $\kappa_e$, and $\rho$, are not bold. We use $\|\cdot\|_p$ for $\ell_p$ norms, $\|\cdot\|$ for the Euclidean norm, and $|\bx|$ for the componentwise absolute value of a vector. Inequalities such as $\bx\le \by$ are interpreted componentwise. The operator $\mathrm{sign}(\cdot)$ is also applied componentwise. The notation $\operatorname{diag}(\bx)$ forms a diagonal matrix from a vector, $\sigma_{\min}(\cdot)$ denotes the smallest singular value, $\lambda_i(\cdot)$ denotes an eigenvalue, and $\kappa(\cdot)$ denotes the spectral condition number. Overbars denote nominal operating quantities, for example $(\bar{\bx},\bar{\bu},\bar{\bd})$, while $\Delta(\cdot)$ denotes deviations from those nominal values. A dot over a vector denotes a time derivative, so $\dot{\bz}$ is the derivative of the differential state and $\Delta\dot{\bz}$ is the derivative of its deviation. Quantity $\{\bx_1,\ldots,\bx_r\}=
\begin{bmatrix}
\bx_1^\top & \cdots & \bx_r^\top
\end{bmatrix}^{\!\top}$
denotes the stacked collection of compatible vectors. This convention is used repeatedly in the hydraulic model to avoid unnecessary transpose notation, while explicit column vectors are written with square brackets, for example $[1\ 0\ -1]^\top$. The symbol $\odot$ denotes the Hadamard (componentwise) product. Finally, the notation $\bX \succ 0$ means that matrix $\bX$ is a symmetric positive definite matrix, i.e., it has all positive eigenvalues (equivalently, all positive leading principal minors).
\section{DAE Hydraulic Model and State-Space}\label{sec:methods}
This section formulates the continuous time hydraulic model used throughout the paper. The governing ingredients are standard in drinking water hydraulics. The main objective here is to place those ingredients into a \textit{structure-preserving, unreduced} nonlinear DAE. This facilitates subsequent analysis.  The algebraic hydraulic equations used by classical texts are well established~\cite{bhave2006analysis,walski2003advanced,walski2001water,wang2020new}. The model used here is \emph{not} a purely steady-state model; it is a continuous time rigid water column transient DAE in which link flows have inertia, tank heads evolve through storage, and junction and reservoir constraints are enforced algebraically. This modeling level retains the slow hydraulic dynamics needed for control analysis while neglecting compressibility and wave propagation, as in standard rigid water column treatments~\cite{coulbeck1978modelling,coulbeck1980dynamic,chaudhry2014applied,masuda2019dynamical}. 
\subsection{Network Elements and Set Notation}
Let the node set be partitioned as  $\mathcal{V}=\mathcal{J}\cup\mathcal{A}\cup\mathcal{R}$, where $\mathcal{J}$, $\mathcal{A}$, and $\mathcal{R}$ denote junctions, tanks, and reservoirs.  Let $n_E$ be the number of active links, and let $n_J$, $n_A$, and $n_R$ be the numbers of junctions, tanks, and reservoirs. The superscript $A$ is used for tank related quantities to avoid confusion with the transpose symbol $(\cdot)^\top$. Let the link set be partitioned as $\mathcal{E}=\mathcal{P}\cup\mathcal{M}\cup\mathcal{W}$, where $\mathcal{P}$, $\mathcal{M}$, and $\mathcal{W}$ denote pipes, pumps, and valves. The main hydraulic objects are summarized in Tab.~\ref{tab:hydraulic_dae_objects}. The full head and flow vectors are $\bp(t)=\{\bp^J(t),\bp^A(t),\bp^R(t)\}$ and $\bq(t)=\{\bq^P(t),\bq^M(t),\bq^W(t)\}$. 

\begin{table}[t]
	\centering
	\caption{Compact notation for the full hydraulic DAE.}
	\label{tab:hydraulic_dae_objects}
	\renewcommand{\arraystretch}{1.25} 
	\setlength{\tabcolsep}{4pt}
	\fontsize{7.0}{8.0}\selectfont
	\begin{tabular}{>{\centering\arraybackslash\color{mainblue}\bfseries}m{0.56\columnwidth} >{\centering\arraybackslash}m{0.34\columnwidth}}
		\blueRule
		\rowcolor{aliceblue}
		\textbf{\color{black}Object} & \textbf{\color{black}Definition} \\ \midrule
		$\mathcal{V}=\mathcal{J}\cup\mathcal{A}\cup\mathcal{R}$, $\mathcal{E}=\mathcal{P}\cup\mathcal{M}\cup\mathcal{W}$ & Node and link sets \\ \thinBlueRule
		$\bq=\{\bq^P,\bq^M,\bq^W\}\in\mathbb{R}^{n_q}$, \newline $\bp=\{\bp^J,\bp^A,\bp^R\}\in\mathbb{R}^{n_p}$, $\bx=\{\bq,\bp\}\in\mathbb{R}^{n_x}$ & State variables \\ \thinBlueRule
		$\bz=\{\bq,\bp^A\}\in\mathbb{R}^{n_z}$, $\by=\{\bp^J,\bp^R\}\in\mathbb{R}^{n_y}$ & Differential and algebraic states \\ \thinBlueRule
		$\bu=\{\bs,\bo\}$, $\bd=\{\bd^J,\bd^A\}\in\mathbb{R}^{n_d}$ & Control input and external uncontrollable input \\ \thinBlueRule
		$\bC^\alpha$, $\boldsymbol{\Gamma}^\alpha$, $\bA^A$, $\bar{\bp}^R$, $\boldsymbol{\varphi}^{P}$, $\boldsymbol{\varphi}^{W}$, $\boldsymbol{\psi}^{M}$ & Incidence, parameters, fixed heads, and vector valued mappings \\
		\blueRule
	\end{tabular}
\end{table}
For each link class $\alpha\in\{P,M,W\}$, the incidence matrix $\bC^\alpha\in\mathbb{R}^{|\mathcal{V}|\times |\mathcal{E}^\alpha|}$ is defined from a chosen positive direction for each link. If link $e=(i,j)$ is directed from node $i$ to node $j$, then $[\bC^\alpha]_{ke}=+1$ when $k=i$, $[\bC^\alpha]_{ke}=-1$ when $k=j$, and $[\bC^\alpha]_{ke}=0$ otherwise. With this convention, $(\bC^\alpha)^\top\bp$ is the signed head drop from the start node to the end node of a positively oriented link. After ordering node rows as junctions, tanks, and reservoirs, the same matrix is partitioned as $
\bC^\alpha=
\begin{bsmallmatrix}
\bC_J^\alpha\\
\bC_A^\alpha\\
\bC_R^\alpha
\end{bsmallmatrix},
\bC_J^\alpha\in\mathbb{R}^{|\mathcal{J}|\times|\mathcal{E}^\alpha|},$
with analogous dimensions for $\bC_A^\alpha$ and $\bC_R^\alpha$. Thus these blocks are matrices obtained by selecting the appropriate node rows of $\bC^\alpha$, not new vectors. The hydraulic model can be written as
\begin{subequations}\label{eq:full_hydraulic_dae}
\begin{align}
\dot{\bq}^{P}(t) &= \boldsymbol{\Gamma}^{P}\Big((\bC^{P})^\top\bp(t)-\boldsymbol{\varphi}^{P}(\bq^{P}(t))\Big), \label{eq:full_hydraulic_dae_pipes}\\
\dot{\bq}^{M}(t) &= \boldsymbol{\Gamma}^{M}\Big((\bC^{M})^\top\bp(t)+\boldsymbol{\psi}^{M}(\bq^{M}(t),\bs(t))\Big), \label{eq:full_hydraulic_dae_pumps}\\
\dot{\bq}^{W}(t) &= \boldsymbol{\Gamma}^{W}\Big((\bC^{W})^\top\bp(t)-\boldsymbol{\varphi}^{W}(\bq^{W}(t),\bo(t))\Big), \label{eq:full_hydraulic_dae_valves}\\
\dot{\bp}^{A}(t) &= (\bA^{A})^{-1}\Big(-\bC_{A}^{P}\bq^{P}(t)-\bC_{A}^{M}\bq^{M}(t)\notag\\
&\qquad\qquad\quad -\bC_{A}^{W}\bq^{W}(t)-\bd^{A}(t)\Big), \label{eq:full_hydraulic_dae_tanks}\\
\mathbf{0} &= \bC_{J}^{P}\bq^{P}(t)+\bC_{J}^{M}\bq^{M}(t) +\bC_{J}^{W}\bq^{W}(t)+\bd^{J}(t), \label{eq:full_hydraulic_dae_junctions}\\
\mathbf{0} &= \bp^{R}(t)-\bar{\bp}^{R}. \label{eq:full_hydraulic_dae_reservoirs}
\end{align}
\end{subequations}
Equations~\eqref{eq:full_hydraulic_dae_pipes}--\eqref{eq:full_hydraulic_dae_valves} are link momentum balances, \eqref{eq:full_hydraulic_dae_tanks} is the tank storage equation, \eqref{eq:full_hydraulic_dae_junctions} is the junction mass balance equation, and \eqref{eq:full_hydraulic_dae_reservoirs} enforces the fixed head reservoir condition $p_r^R(t)=\bar p_r^R$ for every reservoir node $r\in\mathcal{R}$. With the incidence convention above, $\bC^\alpha\bq^\alpha$ is outflow minus inflow at each node, so a positive component of $\bd$ is a withdrawal. The diagonal entries of $(\boldsymbol{\Gamma}^\alpha)^{-1}$ are link inertances, so $\boldsymbol{\Gamma}^\alpha$ scales the rate at which each link flow responds to a head imbalance. Thus a link with inertance $I_e$ contributes the scalar multiplier $\gamma_e=1/I_e$ to the corresponding diagonal entry, so larger inertance means slower flow acceleration under the same head imbalance. For a rigid water column, the pipe inertance is $I_e = \ell_e / (g A_e)$, so $\gamma_e = g A_e / \ell_e$ is set directly from the pipe length and diameter in the network input file. Pumps and valves receive the median pipe inertance as a default, scaled by user supplied factors~\cite{nault2016improved}.  The vector valued nonlinearities in~\eqref{eq:full_hydraulic_dae} are
\begin{subequations}\label{eq:component_nonlinearities}
\begin{align}
\boldsymbol{\varphi}^{P}(\bq^{P}) &= \br^{P}(\bq^{P})\odot \bq^{P}\odot |\bq^{P}|, \label{eq:component_nonlinearities_pipe}\\
\boldsymbol{\varphi}^{W}(\bq^{W},\bo) &= \br^{W}(\bo)\odot \bq^{W}\odot |\bq^{W}|, \label{eq:component_nonlinearities_valve}\\
\boldsymbol{\psi}^{M}(\bq^{M},\bs) &= \{s_m^2\big(h_m^0-r_m^{M}(q_m^M s_m^{-1})^{\nu_m}\big)\}_{m=1}^{n_M}. \label{eq:component_nonlinearities_pump}
\end{align}
\end{subequations}
Equation~\eqref{eq:component_nonlinearities_pipe} is a nonlinear headloss law written in vector form, \eqref{eq:component_nonlinearities_valve} plays the same role for valve losses, and \eqref{eq:component_nonlinearities_pump} is the standard speed-scaled pump curve. In~\eqref{eq:component_nonlinearities_pump}, $h_m^0$ is the shutoff head coefficient, $r_m^M$ is the pump curve resistance coefficient, $\nu_m$ is the pump curve exponent, and $s_m$ is the pump speed. For a Reynolds dependent Darcy-Weisbach description, the pipe resistance entries satisfy
\begin{equation}\label{eq:pipe_resistance_law}
r_e^{P}(q_e^{P})=\frac{8\,\ell_e\,f_e(\mathrm{Re}_e(q_e^{P}))}{g\pi^2 d_e^5},
\qquad
\mathrm{Re}_e(q_e^{P})=\frac{4|q_e^{P}|}{\pi \nu_w d_e},
\end{equation}
where $\ell_e$ and $d_e$ are the pipe length and diameter, $g$ is the gravitational acceleration, $\nu_w$ is the kinematic viscosity of water, $\mathrm{Re}_e$ is the Reynolds number, and $f_e(\cdot)$ is the friction factor~\cite{chaudhry2014applied,rossman2020epanet}. Other standard headloss relations used in WDNs (such as Hazen-Williams) can be similarly embedded in $\br^P(\cdot)$. Fig.~\ref{fig:full_dae_schematic} illustrates how the link, junction, tank, and reservoir blocks of~\eqref{eq:full_hydraulic_dae} couple through the network incidence and the chosen positive flow directions.
\begin{figure}[t]
	\centering
	\resizebox{\columnwidth}{!}{\begin{tikzpicture}[
  x=1cm,
  y=1cm,
  font=\scriptsize,
  line cap=round,
  line join=round,
  every node/.style={inner sep=1pt}
]
  \definecolor{flowblue}{RGB}{25,70,140}
  \definecolor{demandred}{RGB}{150,25,25}
  \definecolor{tankblue}{RGB}{80,140,230}

  \tikzset{
    pipe/.style={draw=black, line width=0.9pt},
    flow/.style={draw=flowblue, -{Stealth[length=1.8mm]}, line width=0.7pt},
    jnode/.style={circle, fill=black, inner sep=1.5pt},
    demand/.style={draw=demandred, -{Stealth[length=1.8mm]}, line width=0.7pt},
    tankoutline/.style={draw=black, line width=0.8pt},
    eqnote/.style={align=center, fill=white, rounded corners=1pt, inner sep=1.6pt},
    topobox/.style={align=center, fill=black!3, draw=black!20, rounded corners=1pt, inner sep=2pt}
  }

  \coordinate (R) at (0.35,0);
  \coordinate (Pump) at (1.45,0);
  \coordinate (J1) at (3.05,0);
  \coordinate (J2) at (4.55,0.78);
  \coordinate (T) at (7.05,0.78);
  \coordinate (D1) at (3.05,-0.86);
  \coordinate (D2) at (7.05,-0.36);

  \draw[thick, fill=black] (0.15,-0.14) rectangle (0.55,0.10);
  \draw[thick] (0.05,0.10) -- (0.65,0.10);
  \node[eqnote, above=5pt] at (R) {$p^R=\bar p^R$};

  \draw[pipe] (0.55,0) -- (1.18,0);
  \draw[thick, fill=white] (Pump) circle (0.22);
  \draw[thick, fill=black] (1.34,-0.08) -- (1.34,0.08) -- (1.58,0) -- cycle;
  \node[eqnote, above=5pt] at (1.45,0.22) {pump};
  \draw[pipe] (1.67,0) -- (J1);
  \draw[flow] (1.92,0.14) -- (2.55,0.14);
  \node[flowblue, below=3pt] at (2.24,-0.03) {$q^M$};

  \node[jnode] at (J1) {};
  \node[eqnote, above=4pt] at (J1) {$p_1^J$};
  \draw[pipe] (J1) -- (J2);
  \draw[flow] (3.48,0.35) -- (4.05,0.63);
  \node[flowblue, eqnote, above=2pt] at (3.72,0.60) {$q^P$};
  \node[jnode] at (J2) {};
  \node[eqnote, above=4pt] at (J2) {$p_2^J$};

  \draw[pipe] (J2) -- (5.25,0.78);
  \draw[draw=flowblue, fill=white, line width=0.8pt] (5.25,0.57) -- (5.50,0.78) -- (5.25,0.99) -- cycle;
  \draw[draw=flowblue, fill=white, line width=0.8pt] (5.83,0.57) -- (5.50,0.78) -- (5.83,0.99) -- cycle;
  \node[flowblue, eqnote, below=5pt] at (5.54,0.53) {valve};
  \draw[pipe] (5.83,0.78) -- (6.78,0.78);
  \draw[flow] (5.98,1.02) -- (6.56,1.02);
  \node[flowblue, eqnote, above=1pt] at (6.26,1.08) {$q^W$};

  \draw[tankoutline] (6.78,0.43) -- (6.78,1.25);
  \draw[tankoutline] (7.32,0.43) -- (7.32,1.25);
  \draw[tankoutline] (6.78,1.25) arc[start angle=180,end angle=360,x radius=0.27,y radius=0.07];
  \draw[tankoutline] (6.78,0.43) arc[start angle=180,end angle=360,x radius=0.27,y radius=0.07];
  \fill[tankblue!25] (6.78,0.43) rectangle (7.32,1.05);
  \draw[tankblue!60, line width=0.55pt] (6.78,1.05) arc[start angle=180,end angle=360,x radius=0.27,y radius=0.04];
  \node[eqnote, right=2pt] at (7.32,1.00) {$p^A$};

  \draw[demand] (J1) -- (D1);
  \node[demandred, eqnote, right=3pt] at (D1) {$d_1^J$};
  \draw[demand] (7.05,0.43) -- (D2);
  \node[demandred, eqnote, right=3pt] at (D2) {$d^A$};

  \node[eqnote, text=flowblue, text width=6.4cm] at (3.95,-1.44)
    {\textbf{junction balance}: \hspace{0.2cm} $\bC_J^P\bq^P+\bC_J^M\bq^M+\bC_J^W\bq^W+\bd^J=0$};
  \node[eqnote, text=black, text width=3.25cm] at (3.45,1.72)
    {\textbf{link momentum:}\\[-1pt] $\dot{\bq}=\bfm_q(\bp,\bq,\bu)$};
  \node[eqnote, text=black, text width=2.75cm] at (6.55,1.92)
    {\textbf{tank storage:}\\[3pt] $\dot{\bp}^A=(\bA^A)^{-1}(\cdots)$};

  \node[topobox, text width=7.5cm] at (3.75,-2.6)
    {$\begin{gathered}
      \bC^P=(-1,1,0,0)^\top,
      \bC^M=(1,0,0,-1)^\top, 
      \bC^W=(0,-1,1,0)^\top, \\
      \bC_J^P=(-1,1)^\top,
      \bC_J^M=(1,0)^\top,
      \bC_J^W=(0,-1)^\top,\\[-1pt]
      \bC_A^P=0,  \bC_A^M=0, \bC_A^W=1,
      \bC_R^P=0,  \bC_R^M=-1,  \bC_R^W=0,\\[-1pt]
      \boldsymbol{\Gamma}^P=[\gamma^P],
      \boldsymbol{\Gamma}^M=[\gamma^M],
      \boldsymbol{\Gamma}^W=[\gamma^W].
    \end{gathered}$};
\end{tikzpicture}}
	\caption{Simple network describing~\eqref{eq:full_hydraulic_dae}. The red arrows denote positive withdrawals, and the blue arrows denote the chosen positive flow directions.}
	\label{fig:full_dae_schematic}
\end{figure}

Standard valve classes split into two groups. GPVs\footnote{GPV: general purpose valve, TCV: throttle control valve, PBV: pressure breaker valve, FCV: flow control valve, PRV: pressure reducing valve, and PSV: pressure sustaining valve.}, TCVs, and PBVs behave as dissipative links and are represented through the valve headloss law~\eqref{eq:full_hydraulic_dae_valves}. FCVs, PRVs, and PSVs additionally regulate flow or pressure when active and require mode dependent residuals. Related complementarity-based optimization formulations appear for PRV localization and pressure management~\cite{dai2016complementarity,hien2021complementarity}. All of these valves can be considered within the framework in this paper. Valve modeling details are omitted here for brevity.

\subsection{State Space Modeling}
For compactness, split the state as $\bx=\{\bz,\by\}$ with $\bz=\{\bq,\bp^A\}$ (as the \textit{differential state}) and $\by=\{\bp^J,\bp^R\}$ (as the \textit{algebraic state}). This partition assumes zero junction storage, which is standard for WDNs.  Then \eqref{eq:full_hydraulic_dae_pipes}--\eqref{eq:full_hydraulic_dae_tanks} assemble into the differential part, while \eqref{eq:full_hydraulic_dae_junctions}--\eqref{eq:full_hydraulic_dae_reservoirs} assemble into the algebraic part as follows
\begin{subequations}\label{eq:compact_dae_form}
\begin{align}
\dot{\bz}(t) &= \bfm(\bx(t),\bu(t),\bd(t)), \label{eq:compact_dae_form_differential}\\
\mathbf{0} &= \bh(\bx(t),\bu(t),\bd(t)). \label{eq:compact_dae_form_algebraic}
\end{align}
\end{subequations}
In this form, $\bfm$ collects the \textit{transient}\footnote{Not to be confused with the full transient models, described earlier, which our paper does not delve into.} momentum and tank storage equations, whereas $\bh$ collects the junction and reservoir constraints together with any active regulating valve residuals. A concrete construction of these blocks for a three node network, including the topology dependent incidence matrices, appears in Appendix~\ref{app:threenodes}. 

The DAE in~\eqref{eq:compact_dae_form} is a slow transient model because it evolves continuously in time and retains link inertance and tank storage~\cite{nault2016improved}. It is nevertheless a rigid water column approximation rather than a full water hammer model because pressure waves and fluid compressibility are not resolved~\cite{wylie1993fluid,chaudhry2014applied,ghidaoui2005review}. WDN toolboxes (like EPANET) solve the full nonlinear algebraic system at each hydraulic time step, but it does not explicitly treat link inertance---tank levels are updated between steps, while pipe flows are assumed to reach instantaneous equilibrium~\cite{todini1988gradient,rossman2020epanet}. The DAE model, by contrast, integrates pipe flow dynamics explicitly, so the two models agree closely when inertial transients decay quickly relative to the hydraulic time step.  If the inertial terms $\dot{\bq}^{P}$, $\dot{\bq}^{M}$, and $\dot{\bq}^{W}$ are neglected in~\eqref{eq:full_hydraulic_dae}, the familiar EPANET-like quasi-steady algebraic structure is recovered. The DAE should therefore agree closely with EPANET when link inertial transients decay quickly relative to the imposed demand and control variations. The numerical case studies test exactly that.

\section{Model Analysis and Simple Smoothing}\label{sec:local_theory_overall}

Water system operation is naturally piecewise-constant in time. Pump speeds, valve statuses, and setpoints are typically held fixed over each hydraulic interval and updated only at scheduled times or when a logic condition is triggered, as in extended period simulation~\cite{rossman2020epanet,walski2003advanced} through rule-based control. This section investigates theoretical properties of the DAE model with and without switching---then derives a smoothed DAE model.

 \subsection{Local Well Posedness Under Fixed Controls}\label{sec:local_theory}
 
 For the present paper, the first mathematical question is practical rather than abstract.\textit{ If the controls are frozen over one such interval, does the continuous-time DAE produce a unique local hydraulic trajectory?} Hydraulic simulators routinely compute such trajectories, but existence, uniqueness, and index properties determine when sensitivities, linearizations, and control-theoretic reductions are legitimate. To the best of our knowledge, this question has \textit{not} been stated explicitly for controlled DAE WDN models.
 
 The second question is: \textit{what} \textit{regularity} (i.e., the smoothness or differentiability of a solution) \textit{is lost when pump or valve statuses and setpoints change}? Assumption~\ref{ass:regular_fixed_mode} formalizes the fixed control regime, Theorem~\ref{thm:local_index_one} answers the first question, and Theorem~\ref{thm:switch_nondiff} answers the second.

\begin{assumption}[Fixed control hydraulic regime]\label{ass:regular_fixed_mode}
Fix a time interval $\mathcal{I}=[t_0,t_1]$ over which all pump speeds, valve modes, and valve setpoints are held constant, and assume the demand $\bd(t)$ remains continuous on $\mathcal{I}$. Assume \textit{(i)} the nonlinear pipe, pump, and valve link laws in~\eqref{eq:component_nonlinearities} are continuously differentiable on an open neighborhood of the trajectory of interest, with all active pump speeds satisfying $s_m\ge \underline{s}>0$, \textit{(ii)} all hydraulic parameters and tank areas remain finite and strictly positive, and \textit{(iii)} at the operating point of interest, the algebraic Jacobian $\partial_{\by}\bh(\bx,\bu,\bd)$ is nonsingular.
\end{assumption}
Assumption~\ref{ass:regular_fixed_mode} is standard and practical in water systems.\footnote{The continuously differentiable link laws exclude sharp transitions such as the laminar-to-turbulent regime change in the Darcy-Weisbach friction factor. For typical WDN operating points flows are turbulent. The edge case of very small night-time flows in dead-end branches is not addressed here.}  Fixed control settings over one hydraulic interval are exactly how most supervisory simulations are executed, while positive tank areas, finite link parameters, and smooth local headloss laws are basic physical requirements~\cite{rossman2020epanet,walski2003advanced,chaudhry2014applied}.

\begin{remark}[Connection with static hydraulic solvability]\label{rem:static_solvability}
The nonsingularity of $\partial_{\by}\bh(\bx,\bu,\bd)$ is the DAE analogue of the regular Jacobian condition used when Newton or gradient based hydraulic solvers compute a static water flow solution for the WFP~\cite{todini1988gradient,rossman2020epanet,wang2020new}. If that Jacobian is singular, the algebraic heads are locally nonunique or ill conditioned, so the corresponding steady hydraulic solve is already numerically fragile before any control analysis is attempted.
\end{remark}

Define a \emph{hydraulic operating point} as the combined vector $\{\bar{\bx},\bar{\bu},\bar{\bd}\}$ satisfying both $\bfm(\bar{\bx},\bar{\bu},\bar{\bd})=\mathbf{0}$ and $\bh(\bar{\bx},\bar{\bu},\bar{\bd})=\mathbf{0}$. The next result discusses uniqueness of hydraulic trajectories. 

\begin{theorem}[Unique hydraulic evolution under fixed controls]\label{thm:local_index_one}
Under Assumption~\ref{ass:regular_fixed_mode}, there exists a neighborhood of $\{\bar{\bx},\bar{\bu},\bar{\bd}\}$ on which the algebraic variables $\by$ are uniquely determined by the differential variables $\bz$ and inputs $\bd$ and $\bu$. Consequently, every consistent initial condition $\bx(t_0)$ chosen sufficiently close to the operating point generates a unique local DAE solution $\bx(t)$ for $t\in[t_0,t_0+\delta]\subseteq\mathcal{I}$ for some $\delta>0$. The solution can be continued on $\mathcal{I}$ as long as it remains in the regular neighborhood.
\end{theorem}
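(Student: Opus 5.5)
The plan is to reduce the semi-explicit DAE \eqref{eq:compact_dae_form} to an ODE in $\bz$ by the implicit function theorem. Then I will invoke Picard--Lindelöf for existence and uniqueness, and a standard continuation argument for the extension to $\mathcal{I}$.

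\textbf{Step 1 (smoothness of the blocks).} Under Assumption~\ref{ass:regular_fixed_mode}(i)--(ii), $\bfm$ and $\bh$ are $C^1$ in $(\bx,\bd)$ on an open neighborhood of the operating point, for the frozen $\bu$.
\begin{itemize}
\item Each component of $\bfm$ is an affine function of $\bp$, namely the incidence terms $(\bC^\alpha)^\top\bp$ and the tank balance, plus one of the link laws \eqref{eq:component_nonlinearities}.
\item The link laws are $C^1$ by (i). The pump law is well defined because $s_m\ge\underline{s}>0$.
\item The scaling factors $\boldsymbol{\Gamma}^\alpha$ and $(\bA^A)^{-1}$ are finite by (ii).
\item $\bh$ consists of the junction balance, which is linear in $\bq$ and $\bd^J$, the reservoir constraints, which are linear in $\bp^R$, and any active valve residuals, which are assumed $C^1$.
\end{itemize}
Because the mode is fixed, no switching occurs inside $\mathcal{I}$, and the dependence on $t$ enters only through the continuous input $\bd(t)$.

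\textbf{Step 2 (implicit function theorem).} Write $\bh(\bz,\by,\bd)=\mathbf{0}$. At $\{\bar{\bx},\bar{\bu},\bar{\bd}\}$ we have $\bh=\mathbf{0}$, and by (iii) $\partial_{\by}\bh$ is nonsingular. The implicit function theorem therefore gives neighborhoods $\mathcal{U}$ of $(\bar{\bz},\bar{\bd})$ and $\mathcal{Y}$ of $\bar{\by}$, together with a unique $C^1$ map $\bm{g}:\mathcal{U}\to\mathcal{Y}$, such that for $(\bz,\bd)\in\mathcal{U}$ and $\by\in\mathcal{Y}$ we have $\bh=\mathbf{0}$ if and only if $\by=\bm{g}(\bz,\bd)$. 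This proves the first claim.

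By continuity I will also shrink $\mathcal{U}$ so that $\partial_{\by}\bh$ stays nonsingular on it. That makes the DAE index one there: one differentiation of $\bh$ along solutions determines $\dot{\by}=-(\partial_{\by}\bh)^{-1}(\partial_{\bz}\bh\,\dot{\bz}+\partial_{\bd}\bh\,\dot{\bd})$, wherever $\dot{\bd}$ exists. Strictly speaking, the index-one property is only needed at a structural level, so it remains valid even if $\bd$ is merely continuous.

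\textbf{Step 3 (reduced ODE).} Substituting $\by=\bm{g}(\bz,\bd(t))$ gives
\[
\dot{\bz}=\bm{F}(t,\bz):=\bfm\bigl(\bz,\bm{g}(\bz,\bd(t)),\bu,\bd(t)\bigr).
\]
This right-hand side is continuous in $t$, because $\bd$ is continuous and $\bd(t)$ stays in the neighborhood for $t$ near $t_0$ when $\bd(t_0)$ is close to $\bar{\bd}$. It is also $C^1$, and hence locally Lipschitz, in $\bz$ uniformly on a compact subset.

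A consistent initial condition $\bx(t_0)$ near the operating point satisfies $\by(t_0)=\bm{g}(\bz(t_0),\bd(t_0))$, by the uniqueness part of the implicit function theorem. Picard--Lindelöf then gives a unique $\bz$ on $[t_0,t_0+\delta]$, and $\by(t)=\bm{g}(\bz(t),\bd(t))$ is the corresponding algebraic trajectory.

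Uniqueness of the full DAE solution needs one more check. Any DAE solution that stays in $\mathcal{U}\times\mathcal{Y}$ must have $\by=\bm{g}(\bz,\bd)$, so its $\bz$ solves the same ODE and therefore coincides with the Picard solution.

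\textbf{Step 4 (continuation).} This follows the standard maximal-solution argument for ODEs. As long as $(\bz(t),\bd(t))$ remains in a compact subset of the regular neighborhood, where $\partial_{\by}\bh$ is nonsingular and the link laws are $C^1$, we can reapply Steps 2--3 at the current point. The solution then extends until it leaves that set or reaches $t_1$.

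\textbf{Main obstacle.} The main care point is Step 2 bookkeeping rather than depth. The implicit function theorem is local around one operating point. To continue along a trajectory that drifts away from $\{\bar{\bx},\bar{\bu},\bar{\bd}\}$, I must either work in a single neighborhood or patch local implicit functions together. The patching relies on uniqueness of the branch $\by$ on overlaps, which in turn requires $\by$ to stay continuous along the solution. I would handle this by defining the ``regular neighborhood'' as a connected open set on which the implicit-function solution branch is unique, and restricting the continuation statement to that set.
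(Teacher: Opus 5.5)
Your proposal is correct and follows essentially the same route as the paper's proof. Both use the implicit function theorem to get a $C^1$ head map $\by=\boldsymbol{\pi}_y(\bz,\bu,\bd)$, substitute it into $\bfm$ to obtain a reduced ODE that is continuous in $t$ and Lipschitz in $\bz$ on a compact ball, then apply Picard--Lindel\"of, recover $\by$, and continue by reapplying the argument. Your extra checks (that a consistent $\by(t_0)$ must lie on the implicit-function branch, and that any DAE solution staying in $\mathcal{U}\times\mathcal{Y}$ must follow it) make explicit what the paper leaves implicit; the paper also carries $\bu$ as a parameter in the implicit function theorem, which you can add at no cost to match the statement's dependence on $\bu$.
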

The proof is given in Appendix~\ref{app:proof_local_index_one}. The key implication is that the algebraic part of the model is not an independent dynamic mechanism. On the same regular neighborhood, the algebraic constraints define a local map $\by=\boldsymbol{\pi}_y(\bz,\bu,\bd)$, so the differential state obeys
\begin{equation}\label{eq:local_reduced_ode}
\dot{\bz}(t)=\mathbf{F}\big(\bz(t),\bu(t),\bd(t)\big)=\bfm\big(\{\bz,\boldsymbol{\pi}_y(\bz,\bu,\bd)\},\bu,\bd\big).
\end{equation}
When the controls or demands are prescribed time functions on a fixed mode interval, we use the shorthand $\mathbf{F}(\bz,t)=\mathbf{F}(\bz,\bu,\bd(t))$.
Theorem~\ref{thm:local_index_one} means that, as long as the control settings are held fixed $\bu(t)=\bar{\bu}$, the hydraulic DAE behaves locally like an ordinary differential equation in link flows and tank heads, as illustrated in Fig.~\ref{fig:index1_reduction_flow}.

\begin{figure}[t]
\centering
\resizebox{\columnwidth}{!}{\begin{tikzpicture}[
  >=Latex,
  font=\rmfamily\scriptsize,
  node distance=5.5mm and 4mm,
  line cap=round,
  line join=round
]
\definecolor{ink}{RGB}{25,70,140}
\definecolor{gfill}{RGB}{240,248,255}
\definecolor{edgegray}{RGB}{95,105,115}
\tikzset{
  box/.style={draw=ink, rounded corners=1pt, line width=0.55pt, fill=gfill,
    align=center, inner sep=3pt, text width=2.35cm},
  emph/.style={draw=edgegray, rounded corners=1pt, line width=0.65pt, fill=ink!8,
    align=center, inner sep=3pt, text width=2.55cm},
  arr/.style={->, draw=ink!80!black, line width=0.65pt}
}

\node[box] (A) {Fixed mode DAE\\[-1pt]
$\dot{\bz}=\bfm(\bx,\bu,\bd)$\\[-1pt]
$0=\bh(\bx,\bu,\bd)$};
\node[box, right=of A] (B) {Regular algebraic solve\\[-1pt]
$\det(\partial_{\by}\bh)\neq0$};
\node[box, right=of B] (C) {Head map\\[-1pt]
$\by=\boldsymbol{\pi}_y(\bz,\bu,\bd)$};
\node[emph, below=of B, xshift=-0.3cm] (D) {Reduced ODE\\[-1pt]
$\dot{\bz}=\mathbf{F}(\bz,\bu,\bd)$};
\node[emph, below=of C] (E) {Local analysis\\[-1pt]
linearization, stability, controllability};

\draw[arr] (A)--(B);
\draw[arr] (B)--(C);
\draw[arr] (C)--(D);
\draw[arr] (D)--(E);
\draw[arr] (D.west) -| (A.south);
\end{tikzpicture}}
\caption{A fixed mode hydraulic DAE can be represented locally by an algebraic head map and a reduced ODE in the differential states.}
\label{fig:index1_reduction_flow}
\end{figure}

Before stating the switching result, we make the event convention explicit. A pump or valve status update changes the residual equations, but it does not by itself impose an impulsive reset on pipe flows or tank heads. Thus $\bz(t)$ remains continuous across the event unless a different hybrid model adds a reset map. For example, when a PRV switches from active to open, the downstream head changes from the regulated setpoint to the value determined by the pipe network because the valve no longer regulates that head.
\begin{theorem}[Hard status changes destroy differentiability]\label{thm:switch_nondiff}
Let $t_s$ be a time at which a pump or valve status or setpoint changes. Assume Assumption~\ref{ass:regular_fixed_mode} holds on the time intervals immediately before and after $t_s$, and let $\mathbf{F}^-$ and $\mathbf{F}^+$ be the reduced vector fields obtained from~\eqref{eq:local_reduced_ode} on those two intervals. Assume no impulsive reset is applied, so the differential state has the common value $\bz_s=\bz(t_s^-)=\bz(t_s^+)$. If
\begin{equation}\label{eq:switch_kink_condition}
\mathbf{F}^-(\bz_s,t_s)\neq \mathbf{F}^+(\bz_s,t_s),
\end{equation}
then $\bz$ is not differentiable at $t_s$. In addition, the algebraic state may jump if the algebraic solution maps on the two sides of $t_s$ do not agree at $\bz_s$.
\end{theorem}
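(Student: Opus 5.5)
The plan is to compare one-sided derivatives of $\bz$ at $t_s$. By Theorem~\ref{thm:local_index_one}, applied separately on the intervals $[t_s-\epsilon,t_s)$ and $(t_s,t_s+\epsilon]$, each side has a regular neighborhood and a head map: $\boldsymbol{\pi}_y^-$ built from the residual $\bh^-$ of the old mode, and $\boldsymbol{\pi}_y^+$ built from $\bh^+$ of the new mode. On each side, $\bz$ solves the reduced ODE~\eqref{eq:local_reduced_ode}, with vector field $\mathbf{F}^-$ or $\mathbf{F}^+$. These fields are continuous in $(\bz,t)$ on their closed one-sided intervals. The reason is that $\boldsymbol{\pi}_y^\pm$ is $C^1$ by the implicit function theorem (Assumption~\ref{ass:regular_fixed_mode}(i),(iii)), $\bfm$ is $C^1$, and $\bd$ is continuous. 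I would extend each one-sided solution to the endpoint $t_s$ using the no-reset convention $\bz(t_s^-)=\bz(t_s^+)=\bz_s$.

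The key step is a mean-value argument on each side. For $h>0$ small,
\begin{equation*}
\frac{\bz(t_s+h)-\bz_s}{h}=\frac{1}{h}\int_{t_s}^{t_s+h}\mathbf{F}^+(\bz(\tau),\tau)\,d\tau .
\end{equation*}
Continuity of $\tau\mapsto\mathbf{F}^+(\bz(\tau),\tau)$ at $t_s^+$ then makes the right-hand side tend to $\mathbf{F}^+(\bz_s,t_s)$. This gives the right derivative $\dot{\bz}(t_s^+)=\mathbf{F}^+(\bz_s,t_s)$, and symmetrically the left derivative $\dot{\bz}(t_s^-)=\mathbf{F}^-(\bz_s,t_s)$. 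Under~\eqref{eq:switch_kink_condition} the two one-sided derivatives differ, so $\bz$ cannot be differentiable at $t_s$. The point where care is needed is continuity of $\bz$ itself at $t_s$. That continuity comes from the event convention together with boundedness of $\mathbf{F}^\pm$ on the compact neighborhoods, which makes $\bz$ Lipschitz on each side. I expect the main obstacle to be justifying that $\mathbf{F}^\pm$ is well defined and continuous up to $t_s$, which means the regular neighborhoods must contain $\bz_s$. I would handle this by applying Assumption~\ref{ass:regular_fixed_mode} on closed intervals whose endpoint is $t_s$, so that $\partial_{\by}\bh^\pm$ is nonsingular at $(\bz_s,\boldsymbol{\pi}_y^\pm(\bz_s))$.

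For the algebraic statement, I would observe that $\by(t_s^\pm)=\boldsymbol{\pi}_y^\pm(\bz_s,\bu^\pm,\bd(t_s))$, using continuity of $\boldsymbol{\pi}_y^\pm$ and of $\bz$. If these two values differ, $\by$ has a jump discontinuity; nothing forces them to agree, since $\bh^-\neq\bh^+$. To show the phenomenon actually occurs, I would cite the PRV example mentioned before the theorem: in the active mode the downstream head is pinned at the setpoint, while in the open mode it is set by the pipe network. Finally, I would remark that a jump in $\by$ is the typical mechanism producing~\eqref{eq:switch_kink_condition}. This is because $\mathbf{F}^\pm$ depends on the heads through $(\bC^\alpha)^\top\bp$ in the momentum equations, and on the controls $\bs,\bo$ directly through $\boldsymbol{\psi}^M$ and $\boldsymbol{\varphi}^W$.
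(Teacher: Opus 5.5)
Your proposal is correct and takes essentially the same route as the paper's proof. Like the paper, you apply Theorem~\ref{thm:local_index_one} on each side of $t_s$, identify the one-sided derivatives $\dot{\bz}(t_s^\pm)$ with $\mathbf{F}^\pm(\bz_s,t_s)$, use~\eqref{eq:switch_kink_condition} to rule out a two-sided derivative, and read off a possible jump in $\by$ from the side-specific maps $\boldsymbol{\pi}_y^\pm$. Your integral representation and your discussion of continuity up to $t_s$ only make rigorous the step the paper asserts in one line (``hence the one-sided derivatives exist''), so they tighten the argument rather than change it.
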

The proof is given in Appendix~\ref{app:proof_switch_nondiff}. Theorem~\ref{thm:switch_nondiff} does not say that the hydraulic trajectory becomes invalid at a control update. It says something more specific and more useful. A hard on/off or active/open/shut change generally creates a corner in the continuous time state trajectory, so the derivative does not exist exactly at the update instant. For water systems analysis, this matters because generic solvers such as EPANET handle such events by solving a new static hydraulic problem after the status change, whereas a continuous time DAE solver expects a smooth residual over each integration interval. The practical implication is that local stability and controllability analyses cannot be applied through a hard switching instant as if one differentiable vector field existed there. They should either be restricted to fixed control intervals or applied to a deliberately smoothed DAE. The latter path is what motivates the regularization below.

\subsection{Smooth Control Regularization for Multistep Simulation}\label{sec:smoothed_switching}

Theorem~\ref{thm:switch_nondiff} explains why exact pump and valve switching is \textit{awkward} for continuous time simulation. If no reset is applied, pipe flows and tank heads remain continuous, but the reduced vector field is generally not differentiable at the switching instant. For extended period simulation invoked by changing controls, a standard DAE integrator therefore benefits from a \textit{short regularization} of the control law.  

The regularization idea is simple. One keeps the same scheduled device values, but replaces each abrupt update by a smooth transition over a narrow time window. Let $0=t_0<t_1<\cdots<t_K$ be the control update times, and let $\bar{\bu}_k$ denote the control vector prescribed on $[t_k,t_{k+1})$. Choose a smoothing width $\tau_s$ such that $0<\tau_s<t_{k+1}-t_k$ for every $k$.  On the switching window $[t_k,t_k+\tau_s]$, define
\begin{subequations}\label{eq:smooth_control_interpolation} 
\begin{align}
\bu_{\tau_s}(t)
&=
\big(1-\chi(\alpha_k(t))\big)\bar{\bu}_k
+\chi(\alpha_k(t))\bar{\bu}_{k+1}, \label{eq:smooth_control_interpolation_main}\\
\alpha_k(t)
&=
\frac{t-t_k}{\tau_s}, \qquad
\chi(\alpha)=10\alpha^3-15\alpha^4+6\alpha^5. \label{eq:quintic_smoothstep}
\end{align}
\end{subequations}
Outside these windows, $\bu_{\tau_s}(t)$ is held equal to the scheduled value. This construction follows the regularization idea used for piecewise-smooth dynamical systems, where a jump is replaced by a continuous transition function over a small neighborhood of the switching surface~\cite{guglielmi2022piecesmooth}. Other strategies for handling valve mode transitions include event-driven switching between rigid and elastic formulations~\cite{nault2016improved}, extended-period simulation with transient models~\cite{filion2002transient}, and quadratic headloss approximations that are everywhere smooth~\cite{pecci2017quadratic}. The quintic smoothing used here is simple and differentiable but introduces a tuning parameter $\tau_s$. The specific polynomial used here is the scalar quintic Hermite interpolant with zero first and second endpoint derivatives, a standard choice when one wants a twice differentiable transition between successive setpoints~\cite{lind2020quintic}. It is obtained by solving the six endpoint conditions
\[
\chi(0)=0,\quad \chi(1)=1, \quad
\chi'(0)=\chi'(1)=0, \quad
\chi''(0)=\chi''(1)=0.
\]
Those conditions uniquely determine the quintic polynomial in~\eqref{eq:quintic_smoothstep}. Thus the smoothed control reaches the same endpoint values as the scheduled control while matching zero slope and zero curvature at both ends of the transition window. Its time slope is
\[
\frac{d\bu_{\tau_s}}{dt}
=
\frac{\bar{\bu}_{k+1}-\bar{\bu}_k}{\tau_s}\,\chi'(\alpha_k(t)),
\]
so smaller $\tau_s$ makes the transition steeper while preserving the same endpoint values. The same componentwise formula is used for pump speeds, valve openings, and the relaxed mode weights that represent open/active/shut valve logic in our implementations. Fig.~\ref{fig:smooth_control_regularization} shows the smoothing idea. 

The extended period simulations therefore use the smoothed DAE
\begin{figure}[t]
	\includegraphics[scale=0.48]{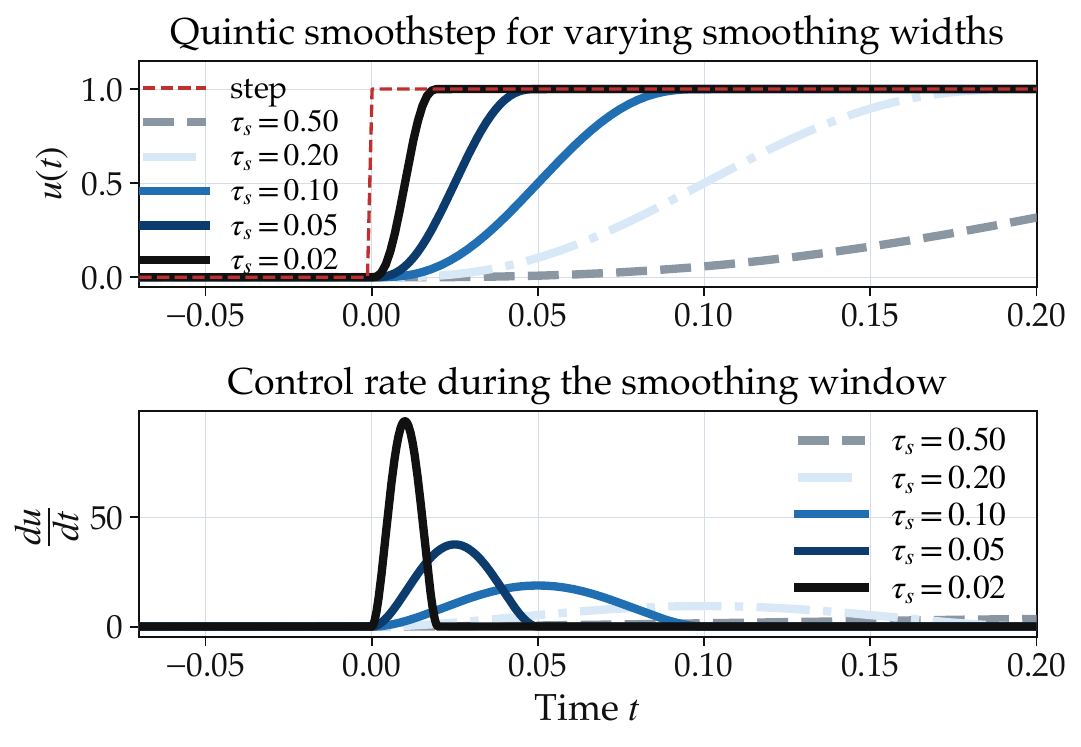}
	\caption{Effect of the smoothing width $\tau_s$ on the quintic transition. The top figure shows control trajectories for several $\tau_s$ values and the bottom figure shows the corresponding control rate $du/dt$ inside the window.}
	\label{fig:smooth_control_regularization}
\end{figure}
\begin{equation}\label{eq:smoothed_compact_dae}
\dot{\bz}_{\tau_s}=\bfm(\bx_{\tau_s},\bu_{\tau_s},\bd),\qquad
\mathbf{0}=\bh(\bx_{\tau_s},\bu_{\tau_s},\bd),
\end{equation}
where $\bx_{\tau_s}$ denotes the trajectory generated by the regularized control path. The reference trajectory $\bx$ is the piecewise smooth trajectory obtained by solving the fixed mode DAE between hard update instants and restarting the residual after each update. The comparison is therefore meaningful only away from the update instants, where the hard schedule is smooth inside each interval but generally not differentiable at the event itself. Here index 1 means that one local differentiation of the algebraic constraints is enough to recover an ordinary differential equation for the differential state.
\begin{theorem}[Convergence of the smoothed DAE]\label{thm:smoothed_switching}
Assume that the quintic interpolation in~\eqref{eq:smooth_control_interpolation} remains inside a regular fixed mode branch for every smoothing window. Then the smoothed DAE~\eqref{eq:smoothed_compact_dae} is locally index-1 and admits a unique local solution on each such window. Furthermore, for any compact subinterval $\mathcal{T}_{\mathrm{c}}\subset(t_k,t_{k+1})$ whose distance from the switching instants is positive, $\bx_{\tau_s}$ converges uniformly to the corresponding piecewise smooth hard-switch trajectory $\bx$ on $\mathcal{T}_{\mathrm{c}}$ as $\tau_s$ tends to zero.
\end{theorem}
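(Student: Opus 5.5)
The plan has two parts: well-posedness of the smoothed DAE on each window, then uniform convergence away from the switching instants.

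\textbf{Step 1: regularity on each window.} Fix a window $[t_k,t_k+\tau_s]$. By hypothesis, the path $\bu_{\tau_s}(t)$ stays inside a regular fixed mode branch. Hence Assumption~\ref{ass:regular_fixed_mode}(i)--(iii) hold along the trajectory with $\bu$ replaced by $\bu_{\tau_s}(t)$. The algebraic Jacobian $\partial_{\by}\bh$ is nonsingular along this path, and $\bu_{\tau_s}$ is $C^2$ in $t$ because $\chi\in C^2$ with matching endpoint derivatives. The implicit function theorem then gives a $C^1$ map $\by=\boldsymbol{\pi}_y(\bz,\bu_{\tau_s}(t),\bd(t))$, exactly as in the proof of Theorem~\ref{thm:local_index_one}, except that the control now enters as a continuous time-dependent parameter rather than a constant. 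Differentiating $\bh=\mathbf{0}$ once yields $\dot{\by}=-(\partial_{\by}\bh)^{-1}(\partial_{\bz}\bh\,\dot{\bz}+\partial_{\bu}\bh\,\dot{\bu}_{\tau_s}+\partial_{\bd}\bh\,\dot{\bd})$. Here I would assume $\bd$ is $C^1$ where this derivative is needed; otherwise the argument stays at the level of the reduced ODE. One differentiation therefore suffices, which establishes index 1. The reduced field $\mathbf{F}(\bz,\bu_{\tau_s}(t),\bd(t))$ is continuous in $t$ and locally Lipschitz in $\bz$. Picard--Lindel\"of then gives a unique local solution on the window, which is continued by the same argument as in Theorem~\ref{thm:local_index_one}.

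\textbf{Step 2: state discrepancy entering $\mathcal{T}_{\mathrm c}$.} Let $\bz$ be the hard-switch trajectory. Both trajectories start from the same consistent state at $t_k$, assuming they agree before the window (by induction over $k$, or from a common initial condition at $t_0$). On $[t_k,t_k+\tau_s]$, both $\bz$ and $\bz_{\tau_s}$ solve ODEs whose right-hand sides are uniformly bounded by some $M$ on a compact neighborhood. This holds because $\bu_{\tau_s}$ is a convex combination of $\bar{\bu}_k$ and $\bar{\bu}_{k+1}$, and the regular branch hypothesis makes $\mathbf{F}$ continuous on the compact set of such controls. Hence $\|\bz_{\tau_s}(t_k+\tau_s)-\bz(t_k+\tau_s)\|\le 2M\tau_s$. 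After $t_k+\tau_s$, both trajectories obey the same field $\mathbf{F}^+$ with $\bu=\bar{\bu}_{k+1}$. Gr\"onwall with Lipschitz constant $L$ on the compact neighborhood gives
\[
\sup_{t\in[t_k+\tau_s,\,t_{k+1}]}\|\bz_{\tau_s}(t)-\bz(t)\|\le 2M\tau_s e^{L(t_{k+1}-t_k)}.
\]
Since $\mathcal{T}_{\mathrm c}$ has positive distance from $t_k$, it lies in $[t_k+\tau_s,t_{k+1}]$ for all small $\tau_s$. The bound also shows that the trajectory stays in the regular neighborhood for small $\tau_s$, which justifies using the constants $M$ and $L$.

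\textbf{Step 3: algebraic states and induction.} The algebraic states then converge through the continuous map $\boldsymbol{\pi}_y$: $\by_{\tau_s}=\boldsymbol{\pi}_y(\bz_{\tau_s},\bar{\bu}_{k+1},\bd)\to\boldsymbol{\pi}_y(\bz,\bar{\bu}_{k+1},\bd)=\by$ uniformly on $\mathcal{T}_{\mathrm c}$. This uses uniform continuity on compacta and the fact that both trajectories lie on the same branch, since $\bz$ is continuous with no reset. Induction over $k$ propagates the $O(\tau_s)$ state error to later windows. The initial error at $t_{k+1}$ is $O(\tau_s)$ rather than zero, so the Step 2 bound becomes $(c\tau_s+2M\tau_s)e^{L\Delta t}$. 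The conclusion is uniform convergence, in fact at rate $O(\tau_s)$.

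The main obstacle is the a priori confinement in Steps 2--3: making sure that both trajectories remain in a single compact regular neighborhood on the whole interval, so that $M$ and $L$ are uniform in $\tau_s$. I would handle this with a bootstrap argument. Take a tube of radius $\rho$ around the hard trajectory on which the regularity constants hold, then show that the error bound is below $\rho$ for small $\tau_s$, so the smoothed solution cannot exit the tube. The hypothesis that the interpolation stays in a regular branch is exactly what keeps $\partial_{\by}\bh$ uniformly invertible during the window.
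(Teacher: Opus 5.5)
Your proposal is correct and follows essentially the same route as the paper. Well-posedness comes from rerunning the implicit-function and Picard--Lindel\"of argument of Theorem~\ref{thm:local_index_one} along the $C^2$ control path. Convergence comes from an $O(\tau_s)$ Gr\"onwall estimate across the smoothing window, propagated by the common post-window vector field, with continuity of $\boldsymbol{\pi}_y$ handling the algebraic states.

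The differences are refinements only. You bound the in-window discrepancy by the cruder $2M\tau_s$ displacement estimate, which needs only boundedness of $\mathbf{F}$, rather than the paper's forcing term $M\|\bar{\bu}_{k+1}-\bar{\bu}_k\|$ obtained from $\|\partial_{\bu}\mathbf{F}\|\le M$. Your induction over $k$ and your tube/bootstrap argument also make explicit two points the paper simply asserts: that $e(t_k)=\mathbf{0}$ at the start of each window, and that both trajectories stay in one regular ball uniformly in $\tau_s$.
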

The proof is given in Appendix~\ref{app:proof_smoothed_switching}. This convergence statement is deliberately local in time. For any finite smoothing width, the smoothed and hard switch trajectories are not identical, and the largest discrepancy is expected near the transition windows. The theorem only says that this discrepancy disappears on compact intervals strictly between switching instants as the smoothing width is made shorter. This justifies using the smoothed DAE as the local model for the graph based analyses that follow, because those analyses require a differentiable residual and fixed active branch.
\begin{remark}[Smoothing parameter $\tau_s$]
The smoothing width $\tau_s$ is a tuning parameter. If $\tau_s$ is too small, the residual remains very steep and the numerical advantage of smoothing is lost. If $\tau_s$ is too large, the simulated controls no longer resemble the intended and actual control schedules. 
\end{remark}
\section{Graph Analysis, Parametric Uncertainty, and System Margins}\label{sec:graph_theory}
The preceding section develops the nonlinear hydraulic DAE in its full state form and introduces the smoothing procedure. This section extracts the local graph structure hidden in that model and quantifies what can be inferred from it. We proceed in two steps. First, we show that the local linearization has a DAE structure governed by a weighted Laplacian that makes the network interconnection explicit. Second, we quantify how pipe and device parameter variations perturb the local linearization, its stability and controllability margins.

\subsection{Graph Structure of the Linearized Smoothed DAE}\label{sec:graph_structure}
This section rewrites the fixed-control, smoothed DAE so that the underlying network graph topology appears explicitly in the linearized state-space matrices. Throughout this section, we strictly operate on the smoothed hydraulic laws defined earlier on a fixed control regime. Fix the positive flow directions and define the signed incidence blocks $
\bN_J\in\mathbb{R}^{n_J\times n_E}, 
\bN_A\in\mathbb{R}^{n_A\times n_E},$ and $
\bN_R\in\mathbb{R}^{n_R\times n_E},$ where $[\bN]_{ke}=+1$ if link $e$ leaves node $k$ and $-1$ if it enters. This is the same outflow minus inflow convention as the incidence matrices in Section~\ref{sec:methods}, but now written on the unified active link ordering. The full incidence matrix is $\bN=\{\bN_J,\bN_A,\bN_R\}\in\mathbb{R}^{(n_J+n_A+n_R)\times n_E}$. 
For the signed link flow vector $\bq\in\mathbb{R}^{n_E}$, define $\boldsymbol{\eta}$ as the unified signed link law
\[
\boldsymbol{\eta}(\bq,\bu)=\{\boldsymbol{\varphi}^{P}(\bq^{P}),-\boldsymbol{\psi}^{M}(\bq^{M},\bs),\boldsymbol{\varphi}^{W}(\bq^{W},\bo)\}.
\]
Here $\boldsymbol{\eta}$ stacks head losses for dissipative links and the negative of the pump head gain for pumps. The flow and pressure drops through valves are absorbed into the smoothed headloss functions $\boldsymbol{\psi}^{M}$ and $\boldsymbol{\varphi}^{W}$ because the discrete modes are now approximated by continuous interpolations. Positive demand components represent withdrawals. The symbols $\widetilde{\bd}^{J}$ and $\widetilde{\bd}^{A}$ are signed demand vectors in the incidence convention above, namely $\widetilde{\bd}^{J}=-\bd^{J}$ and $\widetilde{\bd}^{A}=-\bd^{A}$ (i.e., they are not new demand profiles). They only allow the graph DAE to use one incidence sign convention in every nodal row. The fixed control, smoothed DAE can therefore be rewritten as 
\begin{subequations}\label{eq:graph_form_dae}
\begin{align}
\boldsymbol{\Lambda}_q\dot{\bq} &= \bN_J^\top\bp^J+\bN_A^\top\bp^A+\bN_R^\top\bp^R-\boldsymbol{\eta}(\bq,\bu), \label{eq:graph_form_dae_links}\\
\mathbf{0} &= \bN_J\bq-\widetilde{\bd}^{J}, \label{eq:graph_form_dae_junctions}\\
\bA^A\dot{\bp}^{A} &= -\bN_A\bq+\widetilde{\bd}^{A}, \label{eq:graph_form_dae_tanks}\\
\mathbf{0} &= \bp^R-\bar{\bp}^{R}. \label{eq:graph_form_dae_reservoirs}
\end{align}
\end{subequations}
Equation~\eqref{eq:graph_form_dae} represents the full DAE in~\eqref{eq:full_hydraulic_dae}, rewritten by stacking all active links into a unified flow vector $\bq$ and embedding all nonreservoir heads into a standard graph incidence convention. Thus $\bN$ captures the topology, while the nonlinear hydraulic physics are isolated in $\boldsymbol{\eta}$, $\boldsymbol{\Lambda}_q$, and $\bA^A$. We now consider an equilibrium $(\bar{\bq},\bar{\bp},\bar{\bu},\bar{\bd})$ of~\eqref{eq:graph_form_dae}. 
\begin{assumption}[Positive link slopes]\label{ass:incremental_slopes}
Every active link law is differentiable at that equilibrium and has positive incremental slope
\begin{equation}~\label{eq:kappae}
\kappa_e=\frac{\partial \eta_e}{\partial q_e}(\bar{\bq},\bar{\bu})>0,
\qquad e\in\mathcal{E}.
\end{equation}
\end{assumption}
Assumption~\ref{ass:incremental_slopes} is local and physically interpretable. It states that, near the operating point, pushing more flow through a pipe or open valve increases the headloss, while pushing more flow through a pump decreases its available head gain. Such strict monotonicity naturally parameterizes the resulting linearized network dynamics via a weighted graph Laplacian. The Laplacian introduced in Theorem~\ref{thm:graph_linear_dae} differs from the topological Laplacians used in graph spectral studies of WDNs~\cite{dinardo2018applications,yazdani2011complex}. Its weights $w_e = \kappa_e^{-1}$ are not binary or length-based; they are the incremental hydraulic conductances at the operating point.

Let the local deviations be $\Delta\bq(t)=\bq(t)-\bar{\bq}$, $\Delta\bp^\alpha(t)=\bp^\alpha(t)-\bar{\bp}^\alpha$ for $\alpha\in\{J,A,R\}$, and $\Delta\bx_h(t)=\{\Delta\bq(t),\Delta\bp^{J}(t),\Delta\bp^{A}(t)\}.$ The subscript $h$ denotes the hydraulic graph state used in this section. It excludes the fixed reservoir heads and the discrete valve mode variables that are replaced by continuously differentiable smoothing parameters embedded in $\bar{\bu}$. Linearizing~\eqref{eq:graph_form_dae} at the equilibrium $(\bar{\bq},\bar{\bp},\bar{\bu},\bar{\bd})$ gives the linear DAE state-space model
\begin{equation}\label{eq:linearized_graph_dae}
\bE_h\,\Delta\dot{\bx}_h(t)
=
\bA_h\Delta\bx_h(t)+\bB_u\Delta\bu(t)+\bB_d\Delta\bd(t),
\end{equation}
where the state-space matrices are constructed from the incidence matrices and the local link slopes $\bK=\operatorname{diag}(\kappa_e)$ as follows
\begin{equation}\label{eq:linear_dae_blocks}
	\resizebox{0.8\columnwidth}{!}{$
		\displaystyle
		\bE_h=
		\begin{bmatrix}
			\boldsymbol{\Lambda}_q & \mathbf{0} & \mathbf{0}\\
			\mathbf{0} & \mathbf{0} & \mathbf{0}\\
			\mathbf{0} & \mathbf{0} & \bA^A
		\end{bmatrix},
		\qquad
		\bA_h=
		\begin{bmatrix}
			-\bK & \bN_J^\top & \bN_A^\top\\
			-\bN_J & \mathbf{0} & \mathbf{0}\\
			-\bN_A & \mathbf{0} & \mathbf{0}
		\end{bmatrix}.
		$}
\end{equation}
Appendix~\ref{app:threenodes_linearized} shows how to derive this model~\eqref{eq:linearized_graph_dae} for a simple network.
Let $\mathcal{G}_{\mathrm{act}}=(\mathcal{V},\mathcal{E}_{\mathrm{act}})$ denote the graph induced by the links whose differentiable pipe, pump, or valve law is included at the fixed operating point. For a time varying head perturbation $\Delta\bp(t)$ and the induced link flow perturbation $\Delta\bq(t)$, define the signed nodal imbalance $
\Delta\boldsymbol{\iota}(t):=\bN\Delta\bq(t).$

We use the full matrix $\bN$, rather than only $\bN_J$, because the Laplacian relation acts on junction, tank, and reservoir nodes together. The vector $\Delta\boldsymbol{\iota}(t)$ is not an imposed demand perturbation. It is the signed nodal flow imbalance generated by a link flow deviation. If the algebraic junction rows and the tank storage rows are not enforced, a nonzero value moves the DAE away from its balanced operating point. We next demonstrate that the linearized DAE is asymptotically stable on the set of states satisfying the linearized algebraic constraints, provided every hydraulically active subnetwork is anchored to a reservoir. 
\begin{theorem}[Laplacian-parameterized local stability analysis]\label{thm:graph_linear_dae}
Under Assumption~\ref{ass:incremental_slopes}, define the local slope matrix $\bK=\operatorname{diag}(\kappa_e)$. Then $\bK\succ0$ and the inverse conductance matrix $\bW=\bK^{-1}\succ0$ forms the weighted graph Laplacian
\begin{equation}\label{eq:weighted_laplacian}
\bL_w=\bN\bW\bN^\top
\end{equation}
representing the linearized hydraulic conductance of the network, ensuring that small signed nodal imbalances satisfy $\Delta\boldsymbol{\iota}(t)=\bL_w\Delta\bp(t)$.  Furthermore, defining a Lyapunov function as the incremental hydraulic energy $V(\Delta\bx_h(t)) = \frac{1}{2}\Delta\bx_h^\top(t)\bE_h\Delta\bx_h(t) \geq 0$, any consistent unforced trajectory of the linearized DAE satisfying the algebraic constraints obeys the energy dissipation identity
\begin{equation}\label{eq:linear_dae_energy}
\dot V(\Delta\bx_h(t)) =-\Delta\bq^\top(t)\bK\Delta\bq(t) \leq 0.
\end{equation} 
Consequently, if every connected component of the active link graph $\mathcal{G}_{\mathrm{act}}$ contains at least one reservoir node, the equilibrium is locally asymptotically stable.
\end{theorem}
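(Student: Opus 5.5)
I would prove the three claims in the order stated: positivity and the Laplacian identity, then the energy identity, then asymptotic stability. Each follows from the block structure in~\eqref{eq:linear_dae_blocks}.

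\textbf{Laplacian and nodal imbalance.} Assumption~\ref{ass:incremental_slopes} gives $\kappa_e>0$ for every active link. Hence $\bK=\operatorname{diag}(\kappa_e)\succ0$, and $\bW=\bK^{-1}=\operatorname{diag}(\kappa_e^{-1})\succ0$. The matrix $\bL_w=\bN\bW\bN^\top$ is then symmetric and positive semidefinite, since $\bx^\top\bL_w\bx=\sum_e w_e(\bN^\top\bx)_e^2$. Its rows sum to zero because $\mathbf{1}^\top\bN=\mathbf{0}$ (each column of $\bN$ has one $+1$ and one $-1$), so it is a weighted Laplacian with weights $w_e$. For the relation $\Delta\boldsymbol{\iota}=\bL_w\Delta\bp$, I would use the quasi-static link rows of~\eqref{eq:linearized_graph_dae}: when the inertial term is neglected, the linearized link law reads $\bK\Delta\bq=\bN^\top\Delta\bp$, so $\Delta\bq=\bW\bN^\top\Delta\bp$. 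Multiplying by $\bN$ gives the claim. I would state explicitly that this is the instantaneous conductance relation, meaning the head-to-flow response of the link law, and not a statement about the full inertial dynamics.

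\textbf{Energy identity.} Take $V=\tfrac12\Delta\bq^\top\boldsymbol{\Lambda}_q\Delta\bq+\tfrac12(\Delta\bp^A)^\top\bA^A\Delta\bp^A\ge0$. Differentiating and substituting the unforced rows of~\eqref{eq:linearized_graph_dae} gives
\[
\dot V=\Delta\bq^\top(-\bK\Delta\bq+\bN_J^\top\Delta\bp^J+\bN_A^\top\Delta\bp^A)-(\Delta\bp^A)^\top\bN_A\Delta\bq .
\]
Reservoir heads are fixed, so $\Delta\bp^R=0$. On consistent trajectories the linearized junction constraint gives $\bN_J\Delta\bq=0$, so the $\bN_J$ term vanishes. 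The two $\bN_A$ terms cancel by skew symmetry. What remains is $\dot V=-\Delta\bq^\top\bK\Delta\bq\le0$, which is~\eqref{eq:linear_dae_energy}.

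\textbf{Asymptotic stability.} This is the main obstacle, because $V$ does not control $\Delta\bp^J$ and $\dot V$ is only negative semidefinite. First I would show the pencil is regular and index-1, so that $\Delta\bp^J$ is a linear function of the differential state $(\Delta\bq,\Delta\bp^A)$. Differentiating the constraint $\bN_J\Delta\bq=0$ and substituting the link equation gives
\[
\bN_J\boldsymbol{\Lambda}_q^{-1}\bN_J^\top\Delta\bp^J=\bN_J\boldsymbol{\Lambda}_q^{-1}(\bK\Delta\bq-\bN_A^\top\Delta\bp^A).
\]
The matrix $\bN_J\boldsymbol{\Lambda}_q^{-1}\bN_J^\top$ is a grounded Laplacian, obtained by deleting the tank and reservoir rows. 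It is nonsingular when $\bN_J^\top$ has trivial kernel, i.e. when every component contains a non-junction node. That is implied by the reservoir-anchoring hypothesis. So $V$ is a positive definite quadratic form on the reduced state $(\Delta\bq,\Delta\bp^A)$.

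Next I would apply LaSalle, or equivalently linear observability of $(\Delta\bq)$. On the set $\dot V=0$ we have $\Delta\bq\equiv0$, so $\Delta\dot{\bq}=0$. The link equation then gives $\bN_J^\top\Delta\bp^J+\bN_A^\top\Delta\bp^A=0$, i.e. $\bN^\top\Delta\bp=0$ with $\Delta\bp^R=0$. This says $\Delta\bp$ is constant on each connected component of $\mathcal{G}_{\mathrm{act}}$, and each component contains a reservoir with zero deviation, so $\Delta\bp=0$. The largest invariant set is therefore the origin, and asymptotic stability follows. Since the reduced system is linear, it is in fact exponentially stable. Finally I would note that this is local for the nonlinear smoothed DAE by Theorem~\ref{thm:local_index_one}, applied through linearization of the reduced ODE~\eqref{eq:local_reduced_ode}.
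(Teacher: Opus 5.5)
Your proposal is correct and follows the paper's proof. The three steps match: $\bL_w$ comes from linearizing the static link law $\bN^\top\Delta\bp=\bK\Delta\bq$; the energy identity uses $\bN_J\Delta\bq=\mathbf{0}$ and the cancellation of the two $\bN_A$ terms; and LaSalle is closed by the invariant-set argument, in which $\bN^\top\Delta\bp=\mathbf{0}$, $\Delta\bp^R=\mathbf{0}$ and reservoir anchoring force $\Delta\bp=\mathbf{0}$.

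Your added elimination of $\Delta\bp^J$ through the grounded Laplacian $\bN_J\boldsymbol{\Lambda}_q^{-1}\bN_J^\top$ usefully supplies the boundedness of $\Delta\bp^J$ that the paper's LaSalle step takes for granted. However, $\Delta\bp^J$ is recovered only after differentiating the constraint $\bN_J\Delta\bq=\mathbf{0}$, so the pencil is (Hessenberg) index-2, not index-1 as you say. For the same reason, your closing transfer to the nonlinear DAE should rest on that same elimination rather than on Theorem~\ref{thm:local_index_one}.
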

The proof is given in Appendix~\ref{app:proof_graph_linear_dae}. It details the connection between the Laplacian structure, the energy dissipation rate, and the reservoir connectivity requirement. The result is local: it certifies the fixed active branch and the operating point for which Assumption~\ref{ass:incremental_slopes} holds. Stability of this DAE means stability of all finite DAE modes for consistent initial conditions, with the algebraic junction and reservoir constraints enforced throughout the motion. This differs from an unconstrained ODE statement because the admissible perturbations must remain on the linearized constraint set.

The Lyapunov function $V$ is the incremental hydraulic energy, i.e., the sum of kinetic energy in the links and potential energy in the tanks. Along unforced trajectories it decays monotonically because the network headloss dissipates energy. The weighted Laplacian enters through the static part of the linearization. The relation $\bN^\top\Delta\bp=\bK\Delta\bq$ gives $\Delta\boldsymbol{\iota}=\bN\bK^{-1}\bN^\top\Delta\bp=\bL_w\Delta\bp$. The derivative $\dot V$ is written with $\bK$ rather than $\bL_w$ because it measures the same dissipation in link flow coordinates. When the controls are fixed, the local monotonicity imposed by Assumption~\ref{ass:incremental_slopes} guarantees that the network dissipates energy and returns to the nominal equilibrium following small disturbances. Fig.~\ref{fig:graph_structure_summary} summarizes how the assumptions, transformations, and stability conclusion connect.

\begin{figure}[t]
	\centering
	\resizebox{\columnwidth}{!}{\begin{tikzpicture}[
  font=\scriptsize,
  box/.style={
    draw=mainblue,
    line width=0.5pt,
    rounded corners=2pt,
    align=center,
    inner xsep=5pt,
    inner ysep=4pt,
    minimum width=2.6cm,
    minimum height=0.72cm,
    fill=aliceblue
  },
  result/.style={
    draw=mainblue,
    line width=0.7pt,
    rounded corners=2pt,
    align=center,
    inner xsep=6pt,
    inner ysep=4pt,
    minimum width=2.95cm,
    minimum height=0.78cm,
    fill=white
  },
  arrow/.style={-{Latex[length=2.2mm]}, line width=0.55pt, draw=mainblue},
  note/.style={align=center, text=gray, font=\small}
]
\node[box] (dae) {smoothed DAE};
\node[box, right=0.55cm of dae] (lin) {linearize at\\$(\bar{\bq},\bar{\bp})$};
\node[box, right=0.55cm of lin] (slopes) {positive slopes~\eqref{eq:kappae}\\$\bK=\operatorname{diag}(\kappa_e)$};
\node[result, below=0.58cm of lin] (lap) {$\bW=\bK^{-1}$\\$\bL_w=\bN\bW\bN^\top$};
\node[box, below=0.58cm of dae] (anchor) {each active\\component has\\a reservoir};
\node[result, below=0.58cm of slopes] (energy) {$\dot V=-\Delta\bq^\top\bK\Delta\bq\le0$\\local decay};

\draw[arrow] (dae) -- (lin);
\draw[arrow] (lin) -- (slopes);
\draw[arrow] (slopes) -- (lap);
\draw[arrow] (lin) -- (lap);
\draw[arrow] (anchor) -- (lap);
\draw[arrow] (lap) -- (energy);
\draw[arrow] (slopes) -- (energy);

\node[note, above=0.15cm of lin] {topology enters through incidence $\bN$};
\node[note, below=0.18cm of lap] {stability holds on algebraic constraints};
\end{tikzpicture}}
	\caption{High-level description of Theorem~\ref{thm:graph_linear_dae}.}
	\label{fig:graph_structure_summary}
\end{figure}
 
\subsection{Accuracy of the Linearization Under Parameter Variations}\label{sec:linearization_accuracy}
The objective of this section is to quantify when the nominal linear DAE in~\eqref{eq:linearized_graph_dae} remains \textit{trustworthy}, meaning that it predicts the first-order shift of both the fixed mode equilibrium and the graph state matrix within a prescribed local error tolerance, after hydraulic parameters are perturbed. The practical question is straightforward. Suppose a pipe roughness or pump coefficient is not known exactly. How far may that parameter move before the nominal linearization should no longer be used for local analysis? To answer this, let $\boldsymbol{\vartheta}\in\mathbb{R}^{n_\vartheta}$ denote the vector of positive nominal values for uncertain hydraulic parameters in the nonlinear fixed control DAE~\eqref{eq:graph_form_dae}, such as pipe roughnesses, pipe diameters, valve coefficients, pump curve coefficients, and tank areas. These could be a subset of all WDN parameters. Fix the controls and demands at $(\bar{\bu},\bar{\bd})$ in the nonlinear residual, solve the corresponding equilibrium, and then form the linearized matrix in~\eqref{eq:linearized_graph_dae}. Define the parameter-focused equilibrium residual
\begin{equation}\label{eq:equilibrium_map}
\mathbf{r}_{\vartheta}(\bx,\boldsymbol{\vartheta})
=
\{\bfm(\bx,\bar{\bu},\bar{\bd};\boldsymbol{\vartheta}),\bh(\bx,\bar{\bu},\bar{\bd};\boldsymbol{\vartheta})\}.
\end{equation}
An equilibrium therefore satisfies $\mathbf{r}_{\vartheta}(\bar{\bx},\bar{\boldsymbol{\vartheta}})=\mathbf{0}$. Define the nominal Jacobians $\bJ_x=\partial_{\bx}\mathbf{r}_{\vartheta}(\bar{\bx},\bar{\boldsymbol{\vartheta}})$ and $
\bJ_\vartheta=\partial_{\boldsymbol{\vartheta}}\mathbf{r}_{\vartheta}(\bar{\bx},\bar{\boldsymbol{\vartheta}}).$
When $\bJ_x$ is nonsingular, the first order equilibrium sensitivity is
\begin{equation}\label{eq:equilibrium_sensitivity}
\bS_\vartheta=-\bJ_x^{-1}\bJ_\vartheta .
\end{equation}
Let $\bx^\star(\boldsymbol{\vartheta})$ denote the local equilibrium map of the full fixed mode state $\bx$ supplied by the implicit function theorem, so $\mathbf{r}_{\vartheta}(\bx^\star(\boldsymbol{\vartheta}),\boldsymbol{\vartheta})=\mathbf{0}$ near $\bar{\boldsymbol{\vartheta}}$. Let $\bA_h(\boldsymbol{\vartheta})$ denote the state matrix in the graph linearized DAE~\eqref{eq:linearized_graph_dae}--\eqref{eq:linear_dae_blocks}, evaluated at the equilibrium associated with $\boldsymbol{\vartheta}$. In particular, the nominal matrix in Theorem~\ref{thm:graph_linear_dae} is $\bA_h(\bar{\boldsymbol{\vartheta}})$. The next theorem compares the exact perturbed equilibrium and the exact perturbed graph state matrix $\bA_h(\boldsymbol{\vartheta})$ with their first order predictions.

\begin{theorem}[Quadratic error bounds for parameter perturbations]\label{thm:linearization_bounds}
Suppose $\mathbf{r}_{\vartheta}(\cdot)$ is twice continuously differentiable near $(\bar{\bx},\bar{\boldsymbol{\vartheta}})$ and $\bJ_x$ is nonsingular. Then there exist $r_0,L_x,L_A>0$ and an equilibrium map $\bx^\star(\boldsymbol{\vartheta})$ for $\|\boldsymbol{\vartheta}-\bar{\boldsymbol{\vartheta}}\|<r_0$ such that, for every $\Delta\boldsymbol{\vartheta}$ in this ball,
\begin{equation}\label{eq:equilibrium_shift_bound}
\Big\|
\underbrace{\bx^\star(\bar{\boldsymbol{\vartheta}}+\Delta\boldsymbol{\vartheta})-\bar{\bx}}_{\text{true equilibrium shift}}
-
\underbrace{\bS_\vartheta\Delta\boldsymbol{\vartheta}}_{\text{first order prediction}}
\Big\|
\le
\frac{L_x}{2}\,\|\Delta\boldsymbol{\vartheta}\|^2
\end{equation}
and
\begin{equation}\label{eq:linearization_matrix_bound}
\Big\|
\underbrace{\bA_h(\bar{\boldsymbol{\vartheta}}+\Delta\boldsymbol{\vartheta})-\bA_h(\bar{\boldsymbol{\vartheta}})}_{\text{true matrix shift}}
-
\underbrace{\nabla_{\boldsymbol{\vartheta}}\bA_h(\bar{\boldsymbol{\vartheta}})\Delta\boldsymbol{\vartheta}}_{\text{first order prediction}}
\Big\|
\le
\frac{L_A}{2}\,\|\Delta\boldsymbol{\vartheta}\|^2.
\end{equation}
Consequently, a prescribed relative matrix error tolerance $\varepsilon_{\mathrm{lin}}\in(0,1)$ is guaranteed whenever
\begin{equation}\label{eq:linearization_validity_radius}
\|\Delta\boldsymbol{\vartheta}\|
\le
r_{\mathrm{lin}}
=
\min\left\{
r_0,\,
\sqrt{\frac{2\varepsilon_{\mathrm{lin}}\|\bA_h(\bar{\boldsymbol{\vartheta}})\|}{L_A}}
\right\}.
\end{equation}
\end{theorem}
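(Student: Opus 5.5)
The plan is to combine the implicit function theorem with second-order Taylor remainders. Since $\mathbf{r}_{\vartheta}$ is $C^2$ near $(\bar{\bx},\bar{\boldsymbol{\vartheta}})$ and $\bJ_x$ is nonsingular, the implicit function theorem gives a radius $r_1>0$ and a unique $C^2$ map $\bx^\star(\boldsymbol{\vartheta})$ on $\|\boldsymbol{\vartheta}-\bar{\boldsymbol{\vartheta}}\|<r_1$ with $\mathbf{r}_{\vartheta}(\bx^\star(\boldsymbol{\vartheta}),\boldsymbol{\vartheta})=\mathbf{0}$ and $\bx^\star(\bar{\boldsymbol{\vartheta}})=\bar{\bx}$. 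The map is $C^2$ because the regularity of the implicit function matches that of $\mathbf{r}_{\vartheta}$.

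Differentiating the identity $\mathbf{r}_{\vartheta}(\bx^\star(\boldsymbol{\vartheta}),\boldsymbol{\vartheta})=\mathbf{0}$ at $\bar{\boldsymbol{\vartheta}}$ gives $\bJ_x\,\partial_{\boldsymbol{\vartheta}}\bx^\star(\bar{\boldsymbol{\vartheta}})+\bJ_\vartheta=\mathbf{0}$. Hence $\partial_{\boldsymbol{\vartheta}}\bx^\star(\bar{\boldsymbol{\vartheta}})=\bS_\vartheta$, as in~\eqref{eq:equilibrium_sensitivity}.

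Shrink the radius to some $r_0\le r_1$ so that the closed ball $\bar B$ of radius $r_0$ lies inside the domain of $\bx^\star$. Set $L_x=\sup_{\bar B}\|\nabla^2_{\boldsymbol{\vartheta}}\bx^\star\|$, which is finite by continuity and compactness. Taylor's theorem with integral remainder along the segment from $\bar{\boldsymbol{\vartheta}}$ to $\bar{\boldsymbol{\vartheta}}+\Delta\boldsymbol{\vartheta}$ then gives~\eqref{eq:equilibrium_shift_bound} directly.

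For~\eqref{eq:linearization_matrix_bound}, I would write $\bA_h(\boldsymbol{\vartheta})=\mathcal{A}(\bx^\star(\boldsymbol{\vartheta}),\boldsymbol{\vartheta})$. From~\eqref{eq:linear_dae_blocks}, the incidence blocks $\bN_J,\bN_A$ are constant. The only varying block is $-\bK$, where $\kappa_e=\partial_{q_e}\eta_e(\bar{\bq},\bar{\bu};\boldsymbol{\vartheta})$. These entries are first derivatives of the link laws, which are components of the $C^2$ residual, so they are $C^1$ in $(\bx,\boldsymbol{\vartheta})$.

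This is the step I expect to be the main obstacle. A second-order Taylor bound on $\bA_h$ requires $\bA_h\in C^2$, which needs third derivatives of $\boldsymbol{\eta}$. Plain $C^2$ of $\mathbf{r}_{\vartheta}$ only yields a $C^1$ matrix map, and with it an $o(\|\Delta\boldsymbol{\vartheta}\|)$ remainder rather than a quadratic one. I would handle this in one of two ways:
\begin{itemize}
\item Note that the smoothed link laws~\eqref{eq:component_nonlinearities}, with turbulent friction factors and $s_m\ge\underline{s}$, are in fact $C^3$ (even analytic away from $q_e=0$) near the operating point. Under Assumption~\ref{ass:incremental_slopes}, $\bar q_e\neq0$ for pipes, since the slope of $r q|q|$ vanishes at $q=0$. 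This supplies the needed smoothness.
\item Alternatively, read the hypothesis as $C^2$ of the map $\boldsymbol{\vartheta}\mapsto\bA_h(\boldsymbol{\vartheta})$.
\end{itemize}
With $\bA_h\in C^2$ on $\bar B$ (shrinking $r_0$ if needed), the chain rule gives $\nabla_{\boldsymbol{\vartheta}}\bA_h(\bar{\boldsymbol{\vartheta}})=\partial_{\bx}\mathcal{A}\,\bS_\vartheta+\partial_{\boldsymbol{\vartheta}}\mathcal{A}$. Set $L_A=\sup_{\bar B}\|\nabla^2_{\boldsymbol{\vartheta}}\bA_h\|$, and the same Taylor remainder argument gives~\eqref{eq:linearization_matrix_bound}.

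Finally, the relative tolerance follows by direct substitution. If $\|\Delta\boldsymbol{\vartheta}\|\le r_{\mathrm{lin}}$, then the ball condition $\|\Delta\boldsymbol{\vartheta}\|<r_0$ holds, after taking a strict inequality or replacing $r_0$ by a slightly smaller value. Then
\[
\frac{L_A}{2}\|\Delta\boldsymbol{\vartheta}\|^2\le\frac{L_A}{2}\cdot\frac{2\varepsilon_{\mathrm{lin}}\|\bA_h(\bar{\boldsymbol{\vartheta}})\|}{L_A}=\varepsilon_{\mathrm{lin}}\|\bA_h(\bar{\boldsymbol{\vartheta}})\|.
\]
So the first-order matrix prediction error is at most $\varepsilon_{\mathrm{lin}}$ relative to $\|\bA_h(\bar{\boldsymbol{\vartheta}})\|$. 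This norm is nonzero because $\bK\succ0$ by Theorem~\ref{thm:graph_linear_dae}, so the ratio is well defined.
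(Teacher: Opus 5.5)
Your proposal is correct and follows essentially the same route as the paper: the implicit function theorem gives $\bx^\star$ and $\bS_\vartheta$, second-order Taylor remainders give the two quadratic bounds, and direct substitution gives $r_{\mathrm{lin}}$. The obstacle you flag is genuine and is exactly where the paper's proof is loose. The paper simply asserts that $\bA_h(\boldsymbol{\vartheta})$ is twice continuously differentiable, whereas a $C^2$ residual only makes it $C^1$, and the quadratic matrix bound can then fail; so one of your remedies (the $C^3$ smoothness of the actual link laws at a nonzero-flow operating point, or an explicit $C^2$ hypothesis on $\bA_h$) is needed to make the matrix bound rigorous.
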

The proof is given in Appendix~\ref{app:proof_linearization_bounds}. This theorem establishes an explicit validity radius bounding the shift in both the equilibrium and the linearized state matrix caused by uncertain hydraulic parameters. Intuitively, the linear sensitivity matrix $\bS_\vartheta$ predicts the leading-order shifts, while the quadratic bounds~\eqref{eq:equilibrium_shift_bound} and~\eqref{eq:linearization_matrix_bound} mathematically quantify the error introduced by discarding higher-order nonlinearities. The theoretical radius derived in~\eqref{eq:linearization_validity_radius} guarantees that as long as the parameter deviations are smaller than $r_{\mathrm{lin}}$, the linear model remains faithful to the prescribed tolerance $\varepsilon_{\mathrm{lin}}$. 

Procedure~\ref{proc:linearization_screen} directly operationalizes this theorem by screening candidate perturbations by checking them against the certified radius. If a perturbation exceeds $r_{\mathrm{lin}}$, the nominal linear matrix is rejected as insufficiently accurate, prompting an automatic relinearization of the network model at the perturbed parameter state. The case studies test this on realistic setups.

\SetAlgorithmName{Procedure}{Procedure}{List of Procedures}
\begin{algorithm}[t]
\caption{Screening validity  of the linear DAE}
\label{proc:linearization_screen}
\DontPrintSemicolon
\footnotesize
\KwIn{Nominal values $(\bar{\bx},\bar{\boldsymbol{\vartheta}})$, parameter perturbation $\Delta\boldsymbol{\vartheta}$, tolerance $\varepsilon_{\mathrm{lin}}$}
\KwOut{Decision to keep or reject $\bA_h(\bar{\boldsymbol{\vartheta}})$}
Compute the equilibrium sensitivity $\bS_\vartheta$ from~\eqref{eq:equilibrium_sensitivity}\;
Estimate $L_A$ locally from finite differences of $\bA_h(\boldsymbol{\vartheta})$\;
Evaluate the certified radius $r_{\mathrm{lin}}$ from~\eqref{eq:linearization_validity_radius}\;
\eIf{$\|\Delta\boldsymbol{\vartheta}\|\le r_{\mathrm{lin}}$}{
Accept the nominal matrix $\bA_h(\bar{\boldsymbol{\vartheta}})$ for local analysis\;
Use~\eqref{eq:equilibrium_shift_bound}--\eqref{eq:linearization_matrix_bound} to report matrix errors\;
}{
Reject nominal matrix as insufficiently accurate at that pert. level\;
Relinearize at $\bar{\boldsymbol{\vartheta}}+\Delta\boldsymbol{\vartheta}$ or repeat analysis over other parameters\;
}
\end{algorithm}

\subsection{Linear DAE Properties and Sensitivity to Hydraulic Parameters}\label{sec:linear_dae_properties}
The objective of this section is to analyze the local control-theoretic properties---stability and controllability---of the linearized hydraulic network, and to quantify how robust these properties are to parametric perturbations. Throughout this analysis, we operate on the reduced, smooth, linearized DAE.
Once the smoothed DAE is locally reduced via Theorem~\ref{thm:local_index_one}, the minimal differential deviation is restricted to $\Delta\bz=\{\Delta\bq,\Delta\bp^A\}$. Starting from the linearized DAE in~\eqref{eq:linearized_graph_dae}, the algebraic junction and reservoir perturbations are eliminated by solving the nonsingular algebraic Jacobian guaranteed by Assumption~\ref{ass:regular_fixed_mode}. Equivalently, this takes the Schur complement of that algebraic block in the local linearized residual. The resulting reduced continuous-time state-space matrices are $\bA_{\mathrm{red}}(\boldsymbol{\vartheta})$, $\bB_{\mathrm{red}}(\boldsymbol{\vartheta})$, and $\bD_{\mathrm{red}}(\boldsymbol{\vartheta})$, with
\begin{equation}\label{eq:reduced_linear_system}
\Delta\dot{\bz}
=
\bA_{\mathrm{red}}(\boldsymbol{\vartheta})\,\Delta\bz
+\bB_{\mathrm{red}}(\boldsymbol{\vartheta})\,\Delta\bu
+\bD_{\mathrm{red}}(\boldsymbol{\vartheta})\,\Delta\bd.
\end{equation}
At the nominal parameter vector, suppose $\bA_{\mathrm{red}}(\bar{\boldsymbol{\vartheta}})=\bV\boldsymbol{\Lambda}\bV^{-1}$ and define the stability margin
$\alpha_s=-\max_i \Re \lambda_i(\boldsymbol{\Lambda})>0.$
For a horizon $N\ge 1$, define the controllability matrix and its smallest singular value by
\begin{equation}\label{eq:kalman_matrix_def}
\begin{aligned}
\boldsymbol{\mathcal{K}}_N(\boldsymbol{\vartheta})
=
\big[
&\bB_{\mathrm{red}}(\boldsymbol{\vartheta}),\
\bA_{\mathrm{red}}(\boldsymbol{\vartheta})\bB_{\mathrm{red}}(\boldsymbol{\vartheta}),\ \ldots,\\
&\bA_{\mathrm{red}}(\boldsymbol{\vartheta})^{N-1}\bB_{\mathrm{red}}(\boldsymbol{\vartheta})
\big],
\end{aligned}
\end{equation}
\begin{equation}\label{eq:controllability_margin_def}
\sigma_c=\sigma_{\min}\big(\boldsymbol{\mathcal{K}}_N(\bar{\boldsymbol{\vartheta}})\big)>0.
\end{equation}
Finally, let $c_A$ and $c_B$ be local constants satisfying
\begin{subequations}\label{eq:state_space_perturbation_bound}
\begin{align}
\|\bA_{\mathrm{red}}(\bar{\boldsymbol{\vartheta}}+\Delta\boldsymbol{\vartheta})-\bA_{\mathrm{red}}(\bar{\boldsymbol{\vartheta}})\|
&\le c_A\|\Delta\boldsymbol{\vartheta}\|, \label{eq:state_space_perturbation_bound_A}\\
\|\bB_{\mathrm{red}}(\bar{\boldsymbol{\vartheta}}+\Delta\boldsymbol{\vartheta})-\bB_{\mathrm{red}}(\bar{\boldsymbol{\vartheta}})\|
&\le c_B\|\Delta\boldsymbol{\vartheta}\|, \label{eq:state_space_perturbation_bound_B}
\end{align}
\end{subequations}
and let $c_{\mathcal{K}}$ satisfy
\begin{equation}\label{eq:kalman_perturbation_bound}
\big\|\boldsymbol{\mathcal{K}}_N(\bar{\boldsymbol{\vartheta}}+\Delta\boldsymbol{\vartheta})-\boldsymbol{\mathcal{K}}_N(\bar{\boldsymbol{\vartheta}})\big\|
\le
c_{\mathcal{K}}(c_A+c_B)\|\Delta\boldsymbol{\vartheta}\|.
\end{equation}
Such constants exist locally whenever the reduced matrices are continuously differentiable functions of $\boldsymbol{\vartheta}$.
Here $\kappa(\bV)=\|\bV\|\,\|\bV^{-1}\|$ is the condition number of the eigenvector matrix, $\alpha_s$ is the nominal exponential stability margin, $\sigma_c$ is the nominal finite horizon controllability margin, and $\sigma_{\min}(\cdot)$ is the smallest singular value operator.

\begin{algorithm}[t]
\caption{Parameter importance identification}
\label{proc:margin_screen}
\DontPrintSemicolon
\footnotesize
\KwIn{Matrices $\bA_{\mathrm{red}}(\bar{\boldsymbol{\vartheta}})$, $\bB_{\mathrm{red}}(\bar{\boldsymbol{\vartheta}})$,  candidate perturbations $\delta\vartheta_i$}
\KwOut{Predicted critical parameters for stability and controllability}
Compute $\alpha_s$, $\sigma_c$, and $c_{\mathcal{K}}$ from~\eqref{eq:kalman_matrix_def}--\eqref{eq:kalman_perturbation_bound}\;
Estimate sensitivities
$g_{A,i}=\|\partial_{\vartheta_i}\bA_{\mathrm{red}}(\bar{\boldsymbol{\vartheta}})\|$, 
$g_{B,i}=\|\partial_{\vartheta_i}\bB_{\mathrm{red}}(\bar{\boldsymbol{\vartheta}})\|$\;
For each parameter $\vartheta_i$, evaluate the certified remaining margins
\[
\widehat{\alpha}_{s,i}=\alpha_s-\kappa(\bV)g_{A,i}|\delta\vartheta_i|,
\qquad
\widehat{\sigma}_{c,i}=\sigma_c-c_{\mathcal{K}}(g_{A,i}+g_{B,i})|\delta\vartheta_i|
\] \vspace{-0.4cm}\;
Rank parameters by increasing $\widehat{\alpha}_{s,i}$ and $\widehat{\sigma}_{c,i}$\;
Declare smallest remaining margin as the most critical parameter\;
\end{algorithm}

\begin{theorem}[Stability and controllability analysis]\label{thm:property_robustness}
A sufficiently small perturbation $\Delta\boldsymbol{\vartheta}$ preserves the following properties.
\begin{enumerate}
\item The reduced model remains exponentially stable whenever $\kappa(\bV)c_A\|\Delta\boldsymbol{\vartheta}\|<\alpha_s$.
\item The reduced model remains controllable over horizon $N$ whenever $c_{\mathcal{K}}(c_A+c_B)\|\Delta\boldsymbol{\vartheta}\|<\sigma_c$.
\item The margin changes obey
\begin{subequations}\label{eq:margin_perturbation_pair}
\begin{align}
\hspace{-0.5cm}\big|\alpha_s(\bar{\boldsymbol{\vartheta}}+\Delta\boldsymbol{\vartheta})-\alpha_s\big|
&\le
\kappa(\bV)c_A\|\Delta\boldsymbol{\vartheta}\|+O(\|\Delta\boldsymbol{\vartheta}\|^2), \label{eq:stability_margin_perturbation}\\
\hspace{-0.5cm} \sigma_{\min}\big(\boldsymbol{\mathcal{K}}_N(\bar{\boldsymbol{\vartheta}}+\Delta\boldsymbol{\vartheta})\big)
&\ge
\sigma_c-c_{\mathcal{K}}(c_A+c_B)\|\Delta\boldsymbol{\vartheta}\|. \label{eq:controllability_margin_perturbation}
\end{align}
\end{subequations}
\end{enumerate}
\end{theorem}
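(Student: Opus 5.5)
The plan is to treat the three claims as consequences of two classical matrix perturbation results, each applied to the reduced matrices in \eqref{eq:reduced_linear_system}. For the stability part we use the Bauer--Fike theorem. For the controllability part we use Weyl's inequality for singular values (equivalently, the 1-Lipschitz property of $\sigma_{\min}$).

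The linear bounds \eqref{eq:state_space_perturbation_bound}--\eqref{eq:kalman_perturbation_bound} are the only link between $\Delta\boldsymbol{\vartheta}$ and the matrices. They exist locally by the mean value theorem, because the reduced matrices are $C^1$ in $\boldsymbol{\vartheta}$: the Schur complement elimination uses the nonsingular algebraic Jacobian of Assumption~\ref{ass:regular_fixed_mode}, and the equilibrium map $\bx^\star(\boldsymbol{\vartheta})$ is $C^1$ by Theorem~\ref{thm:linearization_bounds}. For the Kalman matrix, a telescoping expansion gives
\[
\bA'^{k}\bB'-\bA^k\bB=\sum_{j}\bA'^{j}(\bA'-\bA)\bA^{k-1-j}\bB+\bA'^{k}(\bB'-\bB),
\]
so $c_{\mathcal{K}}$ can be taken as a polynomial in $\max\{\|\bA\|,\|\bA'\|,1\}^{N-1}$ and $\|\bB\|$ times a $\sqrt{N}$ block factor. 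This justifies the form \eqref{eq:kalman_perturbation_bound}.

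\textbf{Stability (items 1 and 3a).} Set $\bE=\bA_{\mathrm{red}}(\bar{\boldsymbol{\vartheta}}+\Delta\boldsymbol{\vartheta})-\bA_{\mathrm{red}}(\bar{\boldsymbol{\vartheta}})$. Bauer--Fike gives that every eigenvalue $\mu$ of the perturbed matrix satisfies $\min_i|\mu-\lambda_i|\le\kappa(\bV)\|\bE\|\le\kappa(\bV)c_A\|\Delta\boldsymbol{\vartheta}\|$. Hence $\Re\mu\le-\alpha_s+\kappa(\bV)c_A\|\Delta\boldsymbol{\vartheta}\|<0$ under the stated condition, which is item 1 and one side of \eqref{eq:stability_margin_perturbation}.

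The reverse inequality needs every nominal eigenvalue, in particular the rightmost one, to be tracked by some perturbed eigenvalue. I will obtain this from continuity of the spectrum: on a small disk around each $\lambda_i$, a Rouché/argument principle count shows a perturbed eigenvalue lies within $\kappa(\bV)\|\bE\|$ of $\lambda_i$ for small $\|\Delta\boldsymbol{\vartheta}\|$. Bauer--Fike is symmetric only up to the condition number of the perturbed eigenvector matrix, which equals $\kappa(\bV)+O(\|\Delta\boldsymbol{\vartheta}\|)$. That correction is why the statement carries the $O(\|\Delta\boldsymbol{\vartheta}\|^2)$ term.

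\textbf{Controllability (items 2 and 3b).} Weyl's inequality gives
\[
\sigma_{\min}(\boldsymbol{\mathcal{K}}_N')\ge\sigma_{\min}(\boldsymbol{\mathcal{K}}_N)-\|\boldsymbol{\mathcal{K}}_N'-\boldsymbol{\mathcal{K}}_N\|.
\]
Combining this with \eqref{eq:kalman_perturbation_bound} yields \eqref{eq:controllability_margin_perturbation} directly. Positivity under the item 2 condition means $\boldsymbol{\mathcal{K}}_N'$ has full row rank $n_z$, i.e., controllability over horizon $N$. Here $\sigma_{\min}$ is understood as the $n_z$-th singular value of the wide matrix.

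\textbf{Main obstacle.} The delicate step is the lower side of the stability margin bound, i.e., showing the margin cannot grow by more than $\kappa(\bV)c_A\|\Delta\boldsymbol{\vartheta}\|$ to first order. I will handle it with the first-order eigenvalue expansion $\dot\lambda_i=\bw_i^{*}\bE\bv_i/(\bw_i^{*}\bv_i)$, where $\bw_i^{*}$ is the row of $\bV^{-1}$ and $\bv_i$ the column of $\bV$. This gives $|\dot\lambda_i|\le\|\bV^{-1}\|\|\bV\|\|\bE\|$, and the higher-order terms are absorbed into the $O(\cdot)$ remainder. This argument assumes $\bA_{\mathrm{red}}$ is diagonalizable with simple eigenvalues. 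For repeated eigenvalues I would fall back on the Rouché/continuity argument together with semisimplicity, which is already implicit in the factorization $\bV\boldsymbol{\Lambda}\bV^{-1}$.
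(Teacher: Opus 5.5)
Your proposal is correct and follows the paper's proof: Bauer--Fike combined with \eqref{eq:state_space_perturbation_bound_A} gives item 1 and the lower bound $\alpha_s(\bar{\boldsymbol{\vartheta}}+\Delta\boldsymbol{\vartheta})\ge\alpha_s-\kappa(\bV)c_A\|\Delta\boldsymbol{\vartheta}\|$, and the 1-Lipschitz property of $\sigma_{\min}$ combined with \eqref{eq:kalman_perturbation_bound} gives item 2 and \eqref{eq:controllability_margin_perturbation}. The one real difference is that the paper never argues the reverse side of \eqref{eq:stability_margin_perturbation} (it only shows $\Re\lambda\le-\alpha_s+\kappa(\bV)c_A\|\Delta\boldsymbol{\vartheta}\|$ and attributes the rest to a Taylor remainder), whereas your disk-counting argument along $\bA_{\mathrm{red}}(\bar{\boldsymbol{\vartheta}})+t\Delta\bA$ does supply it; in fact it gives that side with no $O(\|\Delta\boldsymbol{\vartheta}\|^2)$ term once the Bauer--Fike disks about distinct eigenvalues are disjoint, so you do not need the first-order eigenvalue expansion or the remark about the perturbed eigenvector condition number (that remark can fail for repeated eigenvalues, where the perturbed matrix need not even be diagonalizable).
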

The proof is given in Appendix~\ref{app:proof_property_robustness}. Theorem~\ref{thm:property_robustness} makes the role of hydraulic parameters explicit. The bounds have the virtue of being explicit and computable from a single SVD and finite-difference gradient evaluation. Pipe roughness, pipe diameter, pump coefficients, and valve coefficients influence stability and controllability in two ways. They perturb the local state-space matrices through the same \textit{parameter-to-linearization map} bounded in Theorem~\ref{thm:linearization_bounds}. They also shrink or enlarge the margins in~\eqref{eq:margin_perturbation_pair}. In that sense, local control properties depend not only on topology, but also on how strongly each hydraulic element couples heads and flows.

Procedure~\ref{proc:margin_screen} turns Theorem~\ref{thm:property_robustness} into a practical ranking tool.  The procedure checks one parameter direction at a time through the local Jacobians of $\bA_{\mathrm{red}}$ and $\bB_{\mathrm{red}}$. In this form, the most critical parameter is simply the one that leaves the smallest certified margin after the candidate perturbation is applied. The case studies investigate this point by testing the model responses on networks whose parameters span very different regimes.

\subsection{Computational Analysis}\label{sec:computational_analysis}
The practical results in Theorems~\ref{thm:graph_linear_dae}--\ref{thm:property_robustness} require hydraulic simulation, equilibrium solves, matrix assembly, eigenvalue calculations, singular value calculations, and finite difference sensitivity sweeps. They do not solve a large-scale control synthesis problem and they do not rely on any optimization per se. The only nonlinear numerical solve is the fixed mode hydraulic equilibrium solve used before linearization. After that, the computations are based on simple matrix computations. Tab.~\ref{tab:computational_analysis} summarizes the dominant costs.  

\begin{table}[t]
\centering
\caption{Computational overhead of the practical analysis steps ($n_x$: local DAE state dimension, $n_E$: number of active links, $n_\vartheta$: number of screened parameters, $N_{\mathrm{fd}}$: number of finite difference evaluations, $N_{\mathrm{it}}$: number of nonlinear solver iterations).}
\label{tab:computational_analysis}
\renewcommand{\arraystretch}{1.12}
\setlength{\tabcolsep}{2.4pt}
\fontsize{6.9}{7.6}\selectfont
\begin{tabular}{>{\color{mainblue}\bfseries}p{0.23\columnwidth} p{0.28\columnwidth} p{0.38\columnwidth}}
\blueRule
\rowcolor{aliceblue}
\textbf{\color{black}Result} & \textbf{\color{black}Dominant computation} & \textbf{\color{black}Overhead and optimization status} \\ \midrule
Theorem~\ref{thm:graph_linear_dae} & Assemble $\bE_h,\bA_h,\bL_w$ and compute generalized eigenvalues & $O(n_E)$ assembly plus dense eigensolve cost up to $O(n_x^3)$  \\ \thinBlueRule
Theorem~\ref{thm:linearization_bounds} & Fixed mode equilibrium and finite difference matrix shifts & About $N_{\mathrm{fd}}$ equilibrium solves, each roughly $O(N_{\mathrm{it}}n_x^3)$ with dense linear algebra \\ \thinBlueRule
Theorem~\ref{thm:property_robustness} & Parameter matrix sensitivities, stability margins, and SVD based margins & About $O(n_\vartheta)$ sensitivity evaluations plus eigensolves and SVDs. Ranking tool \\ 
\blueRule
\end{tabular}
\end{table}

\section{Case Studies}\label{sec:case_studies}
The case studies investigate the validity and usefulness of the developed theoretical foundations summarized in Fig.~\ref{fig:theory_roadmap_flow}. All simulations reported here are generated through a Python implementation and use EPyT~\cite{kyriakou2023epyt} as the EPANET interface. The studies were run on an M5 Max MacBook Pro (2026) with 128 GB of RAM. We test six WDNs of different sizes, thereby exposing the theoretical methods and the DAE model to different combinations of storage, pumping, and valve logic. Net1 and Net3 are standard EPANET examples~\cite{rossman2020epanet}. \texttt{Anytown} (design benchmark with one pump and three reservoirs), \texttt{MOD} (Modena, four reservoirs, no pumps or valves), and \texttt{BELL\_CL1} are taken from the KIOS Research EPANET Benchmarks collection~\cite{kios2026benchmarks}. The \texttt{BELL\_CL1} benchmark has one throttle control valve (TCV) and no pumps. Fig.~\ref{fig:network_overview} in Appendix~\ref{app:case_study_parameters} depicts the topology of the five larger networks and uses the ordered component tuple $\{n_J,n_A,n_R,n_{\mathrm{pipe}},n_M,n_W\}$ for junctions, tanks, reservoirs, pipes, pumps, and valves.  Appendix~\ref{app:case_study_parameters} also lists the numerical parameters used for the case studies.

\begin{figure}[t]
	\centering
	\resizebox{\columnwidth}{!}{\begin{tikzpicture}[
    >=Latex,
    font=\rmfamily\fontsize{7.3pt}{8.2pt}\selectfont,
    line cap=round,
    line join=round,
    node distance=4.6mm and 3.0mm,
    box/.style={
      draw=mainblue!85,
      rounded corners=1.5pt,
      line width=0.55pt,
      fill=aliceblue!50,
      align=center,
      inner sep=3.0pt,
      text width=#1
    },
    emphbox/.style={
      draw=mainblue,
      rounded corners=1.5pt,
      line width=0.75pt,
      fill=mainblue!6,
      align=center,
      inner sep=3.0pt,
      text width=#1
    },
    arr/.style={->, draw=mainblue!75, line width=0.65pt}
]

\def\colW{3.98cm}

\node[box=\colW] (A) {%
{\color{mainblue}\bfseries 1) Full hydraulic DAE}\\[-1pt]
write pipes, pumps, valves, tanks, reservoirs, and demands in~\eqref{eq:full_hydraulic_dae}
};

\node[box=\colW, right=of A] (B) {%
{\color{mainblue}\bfseries 2) Fixed-mode regularity}\\[-1pt]
use $\partial_{\by}\bh$ nonsingularity to obtain the local ODE reduction in Theorem~\ref{thm:local_index_one}
};

\node[box=\colW, below=of A] (C) {%
{\color{mainblue}\bfseries 3) Switching and smoothing}\\[-1pt]
hard device updates create kinks; quintic smoothing gives the regular surrogate in Theorems~\ref{thm:switch_nondiff}--\ref{thm:smoothed_switching}
};

\node[box=\colW, below=of B] (D) {%
{\color{mainblue}\bfseries 4) Graph linearization}\\[-1pt]
obtain the weighted Laplacian $\bL_w=\bN\bK^{-1}\bN^\top$ in Theorem~\ref{thm:graph_linear_dae}
};

\node[emphbox=8.7cm, below=5.5mm of $(C.south)!0.5!(D.south)$, anchor=north] (E) {%
{\color{mainblue}\bfseries 5) Robust local properties}\\[-1pt]
bound parameter effects on linearization, stability, and controllability in Theorems~\ref{thm:linearization_bounds}--\ref{thm:property_robustness}};

\node[emphbox=8.32cm, below=5.5mm of E.south, anchor=north] (F) {%
{\color{mainblue}\bfseries Paper takeaway:} a continuous-time WDN DAE can be simulated like a hydraulic model, linearized and thoroughly analyzed for parameter screening, topology criticality, stability and controllability assessment.
};

\draw[arr] (A) -- (B);
\draw[arr] (B.south) -- ++(0,-0.15) -| (C.north);
\draw[arr] (C) -- (D);
\draw[arr] (C.south) -- ++(0,-0.15) -| (E.north);
\draw[arr] (D.south) -- ++(0,-0.15) -| (E.north);
\draw[arr] (E.south) -- (E.south |- F.north);

\end{tikzpicture}}
	\caption{Summary of the paper's technical contributions.}
	\label{fig:theory_roadmap_flow}
\end{figure}

\subsection{Unique Hydraulics Under Fixed Controls}\label{sec:case_theorem1}
The theoretical methods are not very useful if the DAE model produces trajectories that are substantially different from EPANET's. To that end, this section tests the validity of the DAE model for one time step with fixed controls (Theorem~\ref{thm:local_index_one}). The one step runs use~\eqref{eq:compact_dae_form}. We ensure a fair comparison by matching the component parameters between EPANET and the DAE solver. The only time varying input for the single time step is a smooth positive demand multiplier. Daily water demands are commonly represented through deterministic patterns or periodic demand multipliers in EPANET style simulation, and spatially variable demand patterns are known to affect both design and operation~\cite{rossman2020epanet,walski2003advanced,diao2019demand}. The implementation parameters for this multiplier and the other numerical studies are reported in Appendix~\ref{app:case_study_parameters}.

\begin{table}[t]
\centering
\caption{Maximum errors over one-hour fixed-control and 24-hour smoothed-control simulations. The pressure error is the maximum over junction pressure head states $p^J$, and the flow error is the maximum over supported link flows.}\label{tab:epanet_dae_error_suite}
\renewcommand{\arraystretch}{1.10}
\setlength{\tabcolsep}{2.2pt}
\fontsize{7.0}{7.7}\selectfont
\begin{tabular}{>{\centering\arraybackslash\color{mainblue}\bfseries}m{0.19\columnwidth} c c c c}
\blueRule
\rowcolor{aliceblue}
\textbf{\color{black}Network} &
\multicolumn{2}{c}{\textbf{\color{black}$1$ h fixed control}} &
\multicolumn{2}{c}{\textbf{\color{black}$24$ h smoothed control}} \\ \thinBlueRule
\rowcolor{aliceblue}
\textbf{\color{black}} &
\textbf{\color{black}\makecell{$p^J$\\(m)}} &
\textbf{\color{black}\makecell{$q$\\(m$^3$/s)}} &
\textbf{\color{black}\makecell{$p^J$\\(m)}} &
\textbf{\color{black}\makecell{$q$\\(m$^3$/s)}} \\ \midrule
\texttt{Threenodes} & $8.48{\times}10^{-3}$ & $3.38{\times}10^{-6}$ & $2.01{\times}10^{-1}$ & $1.30{\times}10^{-5}$ \\ \thinBlueRule
\texttt{Net1} & $2.25{\times}10^{-1}$ & $6.46{\times}10^{-4}$ & $4.28{\times}10^{0}$ & $1.78{\times}10^{-2}$ \\ \thinBlueRule
\texttt{Net3} & $1.60{\times}10^{-1}$ & $2.04{\times}10^{-3}$ & $1.05{\times}10^{0}$ & $8.03{\times}10^{-2}$ \\ \thinBlueRule
\texttt{Anytown} & $4.11{\times}10^{-2}$ & $2.63{\times}10^{-3}$ & $9.40{\times}10^{-2}$ & $1.48{\times}10^{-3}$ \\ \thinBlueRule
\texttt{MOD} & $5.33{\times}10^{-1}$ & $6.32{\times}10^{-4}$ & $1.57{\times}10^{-2}$ & $1.15{\times}10^{-5}$ \\ \thinBlueRule
\texttt{BELL\_CL1} & $8.00{\times}10^{-1}$ & $1.67{\times}10^{-2}$ & $8.31{\times}10^{-1}$ & $7.18{\times}10^{-3}$ \\
\blueRule
\end{tabular}
\end{table}

\begin{figure}[t]
\centering
\includegraphics[width=\columnwidth]{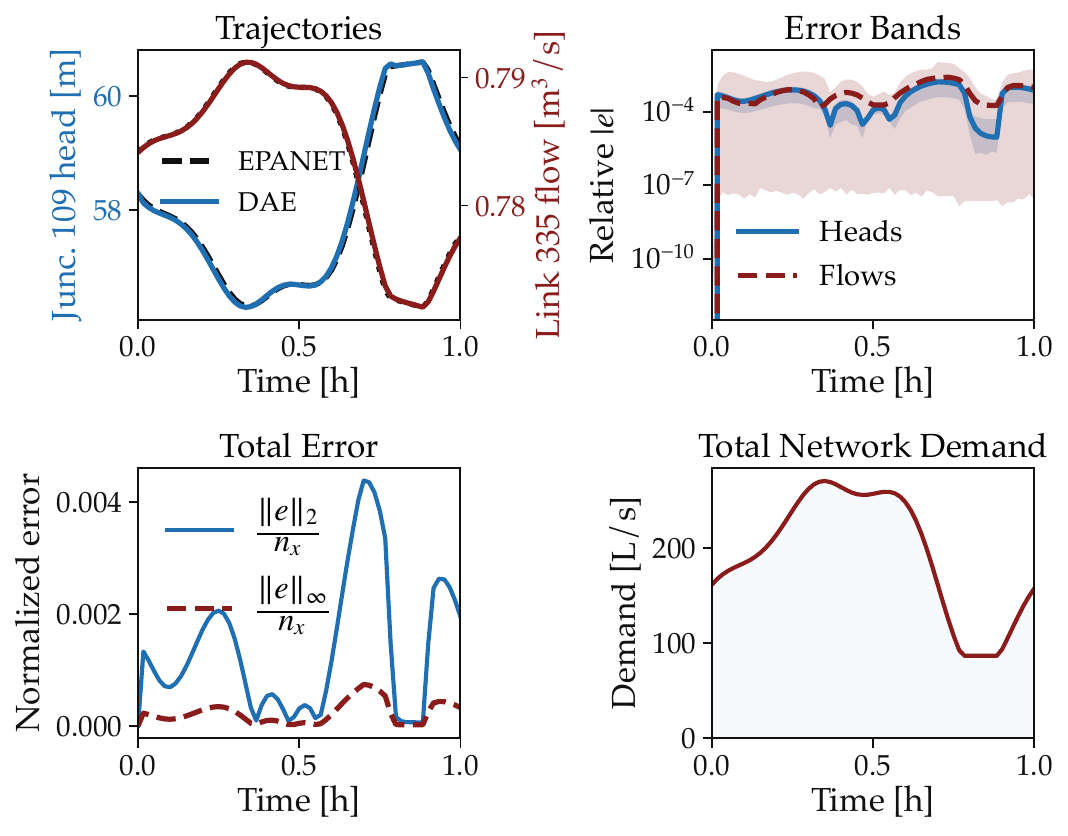}
\caption{DAE-EPANET comparison for the \texttt{Net3} one hour fixed control run (Theorem~\ref{thm:local_index_one}). (a) Trajectory overlay for a representative junction head and supported link flow (DAE solid and EPANET dashed are nearly identical). (b) Component-wise relative error envelope across heads (blue) and supported flows (red). The solid line is the median, and the shaded band is the $10$--$90\%$ percentile range. (c) Total state error norms $\|e(t)\|_2/n_x$ and $\|e(t)\|_\infty/n_x$. (d) Total network demand.}
\label{fig:net3_one_step_comparison}
\end{figure}

Tab.~\ref{tab:epanet_dae_error_suite} and Fig.~\ref{fig:net3_one_step_comparison} support the following observation. Theorem~\ref{thm:local_index_one} states that the fixed-control DAE has a unique local solution. Matching trajectories against EPANET cannot \emph{prove} uniqueness, but it does verify that the trajectory selected by the implicit Euler step lands on the regular branch predicted by the theorem. Across the six networks the largest head deviation for the one hour simulations is $5.33{\times}10^{-1}$~m for the \texttt{MOD} network and the median deviation is at the centimeter level, which places every one-hour run inside the regime where Theorem~\ref{thm:local_index_one} applies, since the device modes are fixed and the DAE is integrated on a locally regular branch. The results therefore demonstrate acceptable performance for the single-step DAE simulation under varying demand and fixed controls.

\subsection{Convergence of the Smoothed DAE}\label{sec:case_theorem3}

This section tests Theorem~\ref{thm:smoothed_switching} for the same networks over a 24-hour horizon with the smoothed control law in~\eqref{eq:smooth_control_interpolation}. The implementation first extracts the resolved EPANET schedule at $300$~s intervals and then applies the same quintic interpolation to all link-level control channels: pump open fractions, pump speeds, valve open fractions, valve settings, and, when needed, relaxed valve mode weights. Thus pump on/off actions in \texttt{Net1} and \texttt{Net3} become finite-window ramps in the DAE rather than instantaneous jumps. For control valves, the case studies only require dissipative valve behavior in the active suite: \texttt{BELL\_CL1} contains one TCV, which EPANET represents through a setting-dependent loss coefficient rather than through a pressure or flow regulating setpoint. When a regulating valve of type FCV, PRV, or PSV is open but no longer actively regulating, the implementation has a fallback surrogate that replaces the ideal regulator by an open-mode quadratic loss law calibrated from the EPANET operating point,
\begin{equation}\label{eq:open_mode_valve_calibration}
\Delta h_v=r_v^{\mathrm{op}} q_v|q_v|,
\qquad
r_v^{\mathrm{op}}=\frac{|\Delta h_v^\star|}{|q_v^\star|^2+\varepsilon_q^2}.
\end{equation}
This is consistent with the way EPANET treats an inactive regulatory valve: the ideal regulator is dropped and a fixed minor loss representation is used~\cite[Sec.~3.1]{rossman2020epanet}. The constant $\varepsilon_q=10^{-5}$~m$^3$/s prevents division by zero at near stagnant operating points and has no effect outside that regime. 

The 24-hour errors in Tab.~\ref{tab:epanet_dae_error_suite} are larger than the one-hour errors because tanks move, controls change, and the DAE uses the smoothed schedule~\eqref{eq:smoothed_compact_dae} rather than EPANET's hard switching. This behavior is consistent with Theorem~\ref{thm:switch_nondiff} and Theorem~\ref{thm:smoothed_switching}. The smoothing replaces a discontinuous control action with a differentiable one, so the reduced DAE vector field is well defined inside the switching window and the implicit Euler step does not have to step across a jump in the residual. It does not, however, claim exact agreement with EPANET inside the window, because EPANET still uses an instantaneous switch there. Across the studied networks the $24$~h pressure errors split along network size and control logic. The \texttt{Anytown} and \texttt{MOD} runs sit at or below their $1$~h fixed control levels because their EPANET schedules trigger no pump switches over the horizon and the smoothed pump speeds match the patterns directly. The \texttt{Threenodes}, \texttt{Net1}, and \texttt{Net3} grow by roughly one order of magnitude because each carries tanks plus pumps controlled by tank level rules. The smoothed schedule replaces an instantaneous switch with a finite window ramp, and the largest pressure deviations therefore concentrate around those switching events. The $4.28$~m peak for \texttt{Net1} corresponds to a single pump on/off event that drives a tank level overshoot of a few meters during the smoothing window. The $1.05$~m peak for \texttt{Net3} corresponds to two pump cycles in the EPANET schedule. Figs.~\ref{fig:net3_multi_step_comparison} and~\ref{fig:bell_cl1_multi_step_comparison} show that the normalized total error $\|\be(t)\|_2/n_x$ stays at the $10^{-3}$ level outside switching events for both networks, so the worst-case junction head error is concentrated in narrow windows rather than reflecting a steady drift. The \texttt{BELL\_CL1} result mainly tests smoothed demand and TCV loss settings, while \texttt{Net1} and \texttt{Net3} test pump switching. In short, the results show acceptable recovery of flows and heads throughout the network for extended period simulations.

\begin{figure}[t]
\centering
\includegraphics[width=\columnwidth]{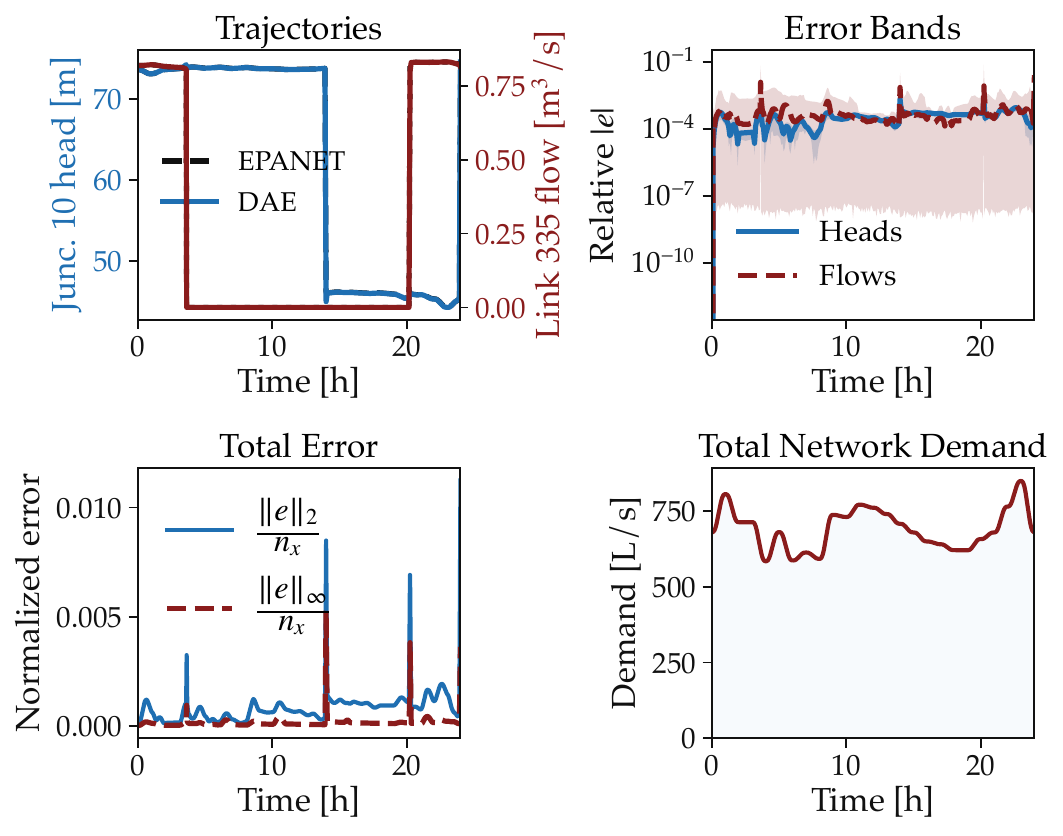}
\caption{DAE-EPANET comparison for the 24-hour smoothed control run for \texttt{Net3} (Theorem~\ref{thm:smoothed_switching}). figures (a)--(d) follow the same layout as Fig.~\ref{fig:bell_cl1_multi_step_comparison}. The two pump cycling events visible in figure (a) align with the spikes in $\|e(t)\|_2/n_x$ in figure (c).}
\label{fig:net3_multi_step_comparison}
\end{figure}

\begin{figure}[t]
\centering
\includegraphics[width=\columnwidth]{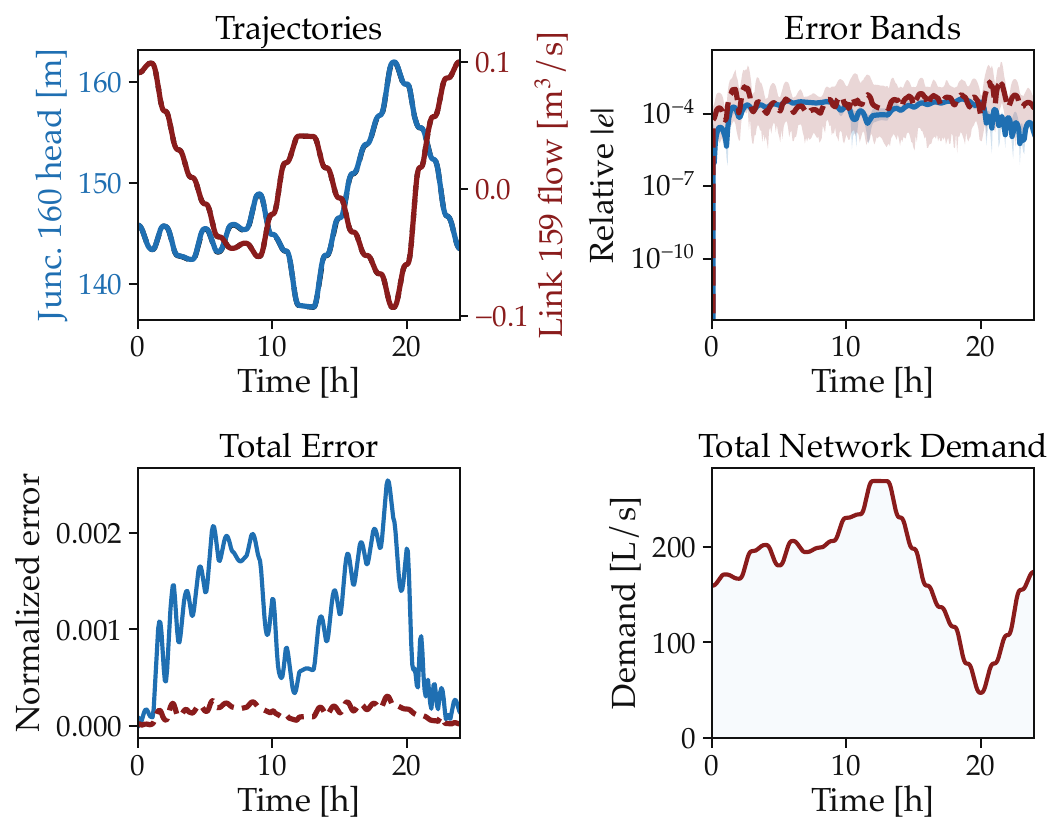}
\caption{DAE-EPANET comparison for the 24-hour smoothed control run for \texttt{BELL\_CL1} (Theorem~\ref{thm:smoothed_switching}). The figure layout and legends are the same as in Fig.~\ref{fig:net3_multi_step_comparison}; legends are omitted here for conciseness.}
\label{fig:bell_cl1_multi_step_comparison}
\end{figure}

\subsection{Laplacian-Parameterized Local Stability}\label{sec:case_theorem5}

Theorem~\ref{thm:graph_linear_dae} connects the local linearized DAE to a weighted graph Laplacian and proves asymptotic stability through an incremental hydraulic energy Lyapunov function. To test this connection numerically, we construct the linearized DAE matrices $(\bE_h,\bA_h)$ and the weighted Laplacian $\bL_w=\bN\bW\bN^\top$ with $\bW=\bK^{-1}$ at the \texttt{Net3} fixed control operating point, and we repeat the linearization for \texttt{Anytown}. All finite eigenvalues of $(\bA_h,\bE_h)$ lie in the open left half plane for both networks, with stability margin $\alpha_s=2.06\times10^{-5}$ for \texttt{Net3} and $\alpha_s=3.67\times10^{-3}$ for \texttt{Anytown}. The spectrum of the full $\bL_w$ reflects active graph connectivity. In these simulations, the full $\bL_w$ has one near zero eigenvalue for \texttt{Anytown} and none for \texttt{Net3} at the $10^{-8}$ threshold, while the finite eigenvalue and energy dissipation checks satisfy the theorem's stability conclusion.

To test the time-domain prediction, a perturbation is applied to the equilibrium and the linearized DAE is integrated forward with an implicit Euler step. The incremental hydraulic energy $V(\Delta\bx_h)=\tfrac{1}{2}\Delta\bx_h^\top\bE_h\Delta\bx_h$ decays monotonically on both networks, and the dissipation rate $\dot V=-\Delta\bq^\top\bK\Delta\bq$ remains nonpositive throughout the four hour horizon. This corroborates the Lyapunov identity derived in Theorem~\ref{thm:graph_linear_dae}.

\begin{figure}[t]
\centering
\includegraphics[scale=0.4]{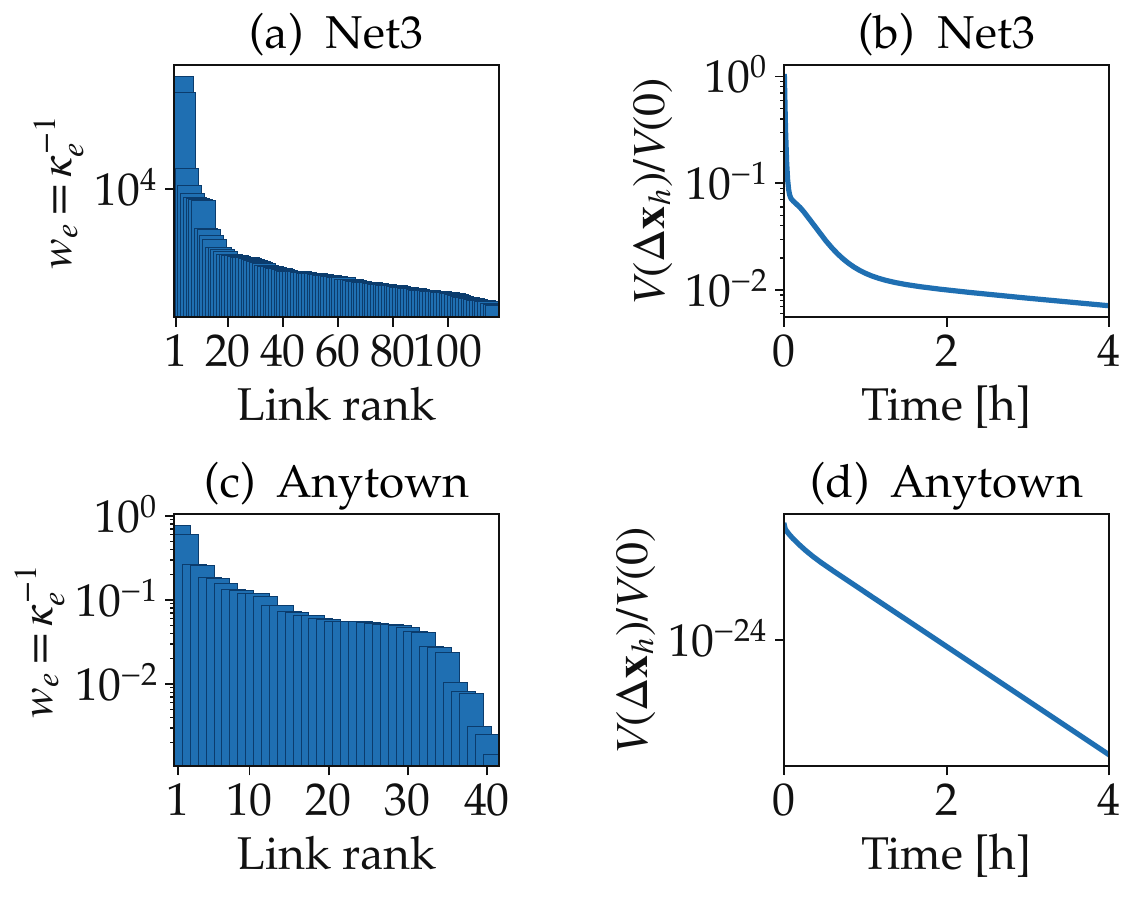}
\caption{Top row: \texttt{Net3} fixed control operating point ($118$ active links, two pumps). Bottom row: \texttt{Anytown} fixed control operating point ($41$ active links, one pump, three reservoirs). Left column (figures (a) and (c)): ranked incremental conductance weights $w_e=\kappa_e^{-1}$ on a log scale. Right column (figures (b) and (d)): normalized incremental hydraulic energy $V(\Delta\bx_h)/V(0)$ on a log scale over a four hour horizon, showing monotone energy decay consistent with the Lyapunov identity in Theorem~\ref{thm:graph_linear_dae}.}
\label{fig:theorem_5_case_study}
\end{figure}

Fig.~\ref{fig:theorem_5_case_study} summarizes the numerical assessment. The right column shows that the incremental hydraulic Lyapunov function behaves exactly as the theorem predicts on both networks, decaying monotonically over the four hour horizon. The left column reports the ranked conductance weights $w_e=\kappa_e^{-1}$ at the operating point on a log scale. The two networks show very different operating regimes. For \texttt{Anytown} the weights are concentrated within a narrow range, whereas for \texttt{Net3} they spread over many orders of magnitude (the underlying range of $\kappa_e$ is reported quantitatively in Tab.~\ref{tab:procedure_theorem_case_study}). This spread matters because $\bL_w=\bN\bW\bN^\top$ inherits it, the smallest weights set the slowest dissipation modes, and the largest weights set the fastest. The weighted Laplacian is therefore not an abstract construction. Each weight reflects the incremental hydraulic conductance of an active link at the operating point, the spread of weights across the network controls the spread of dissipation rates in $\dot V=-\Delta\bq^\top\bK\Delta\bq$, and the heterogeneity is what makes a single Lyapunov certificate cover all links of the network simultaneously.
Operationally, the \texttt{Net3} result says that a few links dominate the slow hydraulic recovery modes, so pressure monitoring or calibration around those links is more valuable than treating all pipes as equally informative. For \texttt{Anytown}, the narrower conductance spread indicates a more uniform local dissipation pattern, which is useful for design because local parameter changes are less likely to create isolated slow modes. Thus Theorem~\ref{thm:graph_linear_dae} turns the operating point into a design diagnostic: it identifies whether the network behaves like a balanced hydraulic fabric or like a network whose recovery is governed by a small number of weakly conducting links.

\subsection{Corroborating Error Bounds for Parameter Perturbations}\label{sec:case_theorem6}

This section focuses on Theorem~\ref{thm:linearization_bounds} and asks how the graph theoretic results would be used by an engineer before trusting a linearized DAE. The graph DAE matrix in Theorem~\ref{thm:graph_linear_dae} is constructed at the nominal equilibrium of each network, Procedure~\ref{proc:linearization_screen} is applied to a global roughness perturbation, and Procedure~\ref{proc:margin_screen} is applied to one pipe at a time perturbations. We perturb pipe roughness because in looped distribution networks the single pipe hydraulic resistances are not directly observable from the kind of pressure/flow data that is typically available, and roughness is the parameter that classical calibration targets~\cite{berardi2021calibration}. Other sources of parametric uncertainty can be similarly tested. 

For Procedure~\ref{proc:linearization_screen}, all Hazen-Williams roughness coefficients are multiplied by $1+\delta$. The exact perturbed matrix $\bA_h(1+\delta)$ is recomputed after solving the fixed control DAE equilibrium, and it is compared with the first order prediction
\[
\bA_h(1+\delta)\approx \bA_h(1)+
\left.\frac{\partial \bA_h}{\partial \theta}\right|_{\theta=1}\delta .
\]
For Procedure~\ref{proc:margin_screen}, let $\theta_i$ denote the multiplicative roughness factor applied to pipe $i$, with $\theta_i=1$ at the nominal operating point. Each pipe roughness is perturbed one at a time and the local sensitivity $\|\partial \bA_h/\partial\theta_i\|_2$ is recorded. The DAE stability margin is evaluated from the finite generalized eigenvalues of $(\bA_h,\bE_h)$, and the finite-mode Popov-Belevitch-Hautus (PBH) controllability margin is computed from the smallest singular value of $[\lambda\bE_h-\bA_h,\ \bB_u]$ over those eigenvalues.

\begin{figure}[t]
\centering
\includegraphics[width=\columnwidth]{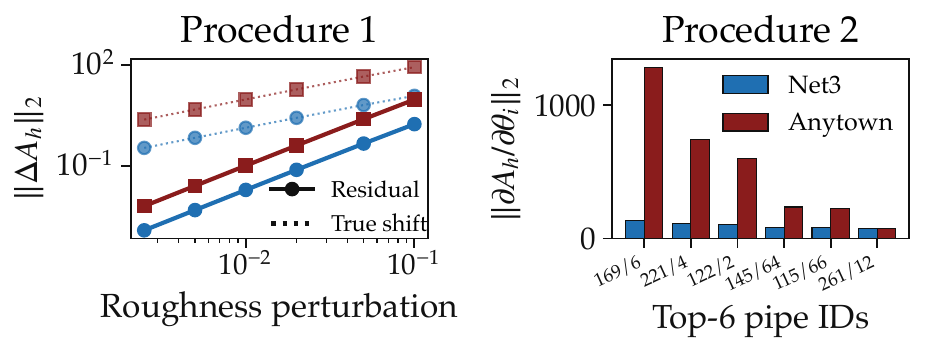}
\caption{Numerical assessment of Procedures~\ref{proc:linearization_screen} and~\ref{proc:margin_screen}, overlaying \texttt{Net3} (blue, circle markers) and \texttt{Anytown} (red, square markers). (a) Residual from the left hand side of~\eqref{eq:linearization_matrix_bound}, namely $\|\bA_h(1+\delta)-\bA_h(1)-(\partial \bA_h/\partial\theta)|_{\theta=1}\delta\|_2$ (solid), compared with the true matrix shift $\|\Delta\bA_h\|_2=\|\bA_h(1+\delta)-\bA_h(1)\|_2$ (dotted). The residual stays well below the true shift across the swept range, confirming the bound in Theorem~\ref{thm:linearization_bounds}. (b) Top $6$ pipe roughness parameters ranked by local matrix sensitivity $\|\partial\bA_h/\partial\theta_i\|_2$ for each network. Tick labels stack the \texttt{Net3}/\texttt{Anytown} pipe identifiers.}
\label{fig:procedure_theorem_case_study}
\end{figure}

\begin{table}[t]
\centering
\caption{Procedure study quantities for \texttt{Net3} and \texttt{Anytown}. Equilibrium residuals (infinity norm) are $2.90\times10^{-9}$ for \texttt{Net3} ($118$ active links) and $2.87\times10^{-10}$ for \texttt{Anytown} ($41$ active links).}
\label{tab:procedure_theorem_case_study}
\renewcommand{\arraystretch}{1.10}
\setlength{\tabcolsep}{1.5pt}
\fontsize{6.8}{7.5}\selectfont
\resizebox{0.99\columnwidth}{!}{%
\begin{tabular}{>{\centering\arraybackslash\color{mainblue}\bfseries}m{0.24\columnwidth} >{\centering\arraybackslash}m{0.18\columnwidth} >{\centering\arraybackslash}m{0.18\columnwidth} >{\centering\arraybackslash}m{0.30\columnwidth}}
\blueRule
\rowcolor{aliceblue}
\textbf{\color{black}Quantity} & \textbf{\color{black}\texttt{Net3}} & \textbf{\color{black}\texttt{Anytown}} & \textbf{\color{black}Interpretation} \\ \midrule
$\alpha_s$ & $2.06{\times}10^{-5}$ & $3.67{\times}10^{-3}$ & Finite generalized-eigenvalue stability margin \\ \thinBlueRule
PBH margin & $2.03{\times}10^{-4}$ & $7.50{\times}10^{-5}$ & Smallest finite-mode DAE controllability singular value \\ \thinBlueRule
$[\min\;\kappa_e,\max\;\kappa_e]$ & \makecell[c]{$2.17{\times}10^{-10}$\\ $7.33{\times}10^{1}$} & \makecell[c]{$1.29{\times}10^{0}$\\ $6.93{\times}10^{2}$} & Range of incremental link slopes in Theorem~\ref{thm:graph_linear_dae} \\ \thinBlueRule
largest relative residual & $1.45{\times}10^{-1}$ & $1.09{\times}10^{-1}$ & Procedure~\ref{proc:linearization_screen} error at $10\%$ global roughness perturbation \\ \thinBlueRule
top sensitive pipes & \texttt{169}, \texttt{221}, \texttt{122} & \texttt{6}, \texttt{4}, \texttt{2} & Largest $\|\partial\bA_h/\partial\theta_i\|_2$ values under one-pipe screening \\
\blueRule
\end{tabular}
}
\end{table}

From Fig.~\ref{fig:procedure_theorem_case_study} and Tab.~\ref{tab:procedure_theorem_case_study} two observations follow. First, the global roughness screening behaves as Theorem~\ref{thm:linearization_bounds} predicts. The relative residual between the true perturbed matrix and its first order approximation grows from $3.6\times10^{-3}$ at $\delta=0.25\%$ to $1.45\times10^{-1}$ for \texttt{Net3}, and from $2.7\times10^{-3}$ to $1.09\times10^{-1}$ over the same range for \texttt{Anytown}. Both curves stay well below the true matrix shift $\|\Delta\bA_h\|_2$, which is exactly the bound asserted in~\eqref{eq:linearization_matrix_bound}. The linearization is therefore credible for modest roughness uncertainty, and a roughly ten percent network wide shift is the point at which relinearizing at the perturbed equilibrium becomes safer than reusing the nominal matrix. Second, the pipes with the largest matrix sensitivity are not necessarily the pipes carrying the largest absolute flow. They are links whose roughness changes produce a large change in the incremental headloss slope at the operating point. For \texttt{Anytown} the top three are the pump discharge trunks (pipes \texttt{6}, \texttt{4}, \texttt{2}), while for \texttt{Net3} they are interior pipes carrying modest flows. In water system terms, these are candidates for calibration attention, where a pipe can be hydraulically dominant for the local stability and controllability matrix even when a flow map of the network provides no visual cue of its importance, because what matters for $\bA_h$ is the operating point slope $\kappa_e = \partial \eta_e / \partial q_e$ rather than the flow magnitude itself. These are therefore the pipes that should receive priority in field calibration, since a small roughness error on them perturbs the local hydraulic stiffness matrix more than a larger roughness error on a visually prominent main.

\subsection{Demand-Driven Operating Point Margins}\label{sec:case_operating_point_margins}

The previous section keeps the operating point fixed. We now ask how the same DAE stability and controllability quantities change when demand moves the network through different local operating points. For \texttt{Net3}, we impose an intentionally aggressive total demand profile over a 24-hour period, ranging from $200$ to $800$~L/s and containing both slow daily variation and smaller oscillations. This profile is \textit{not} meant to represent a normal forecast; it is a \textit{stress} case that makes the operating point move enough to reveal whether the margins are fragile. At each sample, the EPANET junction demand vector is rescaled by one scalar, so the spatial demand split remains inherited from \texttt{Net3}.  The graph DAE matrices $(\bE_h,\bA_h,\bB_u)$ are reconstructed, and the DAE stability margin is recorded as
$
\alpha_s(t)=-\max_{\lambda\in\Lambda_f(t)}\Re(\lambda),
$
where $\Lambda_f$ is the set of finite generalized eigenvalues of $(\bA_h,\bE_h)$ and $\Re(\lambda)$ denotes the real part of $\lambda$. The finite-mode PBH controllability margin is
\[
\sigma_{\mathrm{PBH}}(t)=
\min_{\lambda\in\Lambda_f}
\sigma_{\min}\!\left([\lambda\bE_h-\bA_h,\ \bB_u]\right).
\]
These are the DAE analogues of the stability and controllability quantities used in Theorem~\ref{thm:property_robustness}, evaluated across operating points rather than across parameter perturbations. Since the pump schedule itself remains fixed, $\sigma_{\mathrm{PBH}}$ is best interpreted as a local rank margin for infinitesimal pump speed perturbations. To add a more operational strength measure, let $t_i$ denote one demand sample and define $\bE_i=\bE_h(t_i)$, $\bA_i=\bA_h(t_i)$, and $\bB_i=\bB_u(t_i)$. We hold this linearized model fixed only over a local horizon $H=2$~h with step $\tau$, compute a finite-horizon reachability metric, and then repeat the same calculation at the next sample of the 24-hour sweep:
\[
(\bE_i-\tau\bA_i)\Delta\bx_{k+1}
=\bE_i\Delta\bx_k+\tau\bB_i\Delta\bu_k.
\]
The corresponding reachability matrix is
\[
\bR_{H,i}=\left[\boldsymbol{\Gamma}_i, \; \boldsymbol{\Phi}_i\boldsymbol{\Gamma}_i, \; \ldots,\;\boldsymbol{\Phi}_i^{N_H-1}\boldsymbol{\Gamma}_i\right],
\]
where $\boldsymbol{\Phi}_i=(\bE_i-\tau\bA_i)^{-1}\bE_i$, $\boldsymbol{\Gamma}_i=(\bE_i-\tau\bA_i)^{-1}\tau\bB_i$, and $N_H=H/\tau$. The pump authority gain is
$
G_{H,i}=\sigma_{\max}\!\left(\bS_x^{-1}\bR_{H,i}\right),
$
where $\bS_x$ is a diagonal scaling matrix that scales flow states by $1$~m$^3$/s and head states by $10$~m (see Appendix~\ref{app:case_study_parameters}).

\begin{figure}[t]
\centering
\includegraphics[width=\columnwidth]{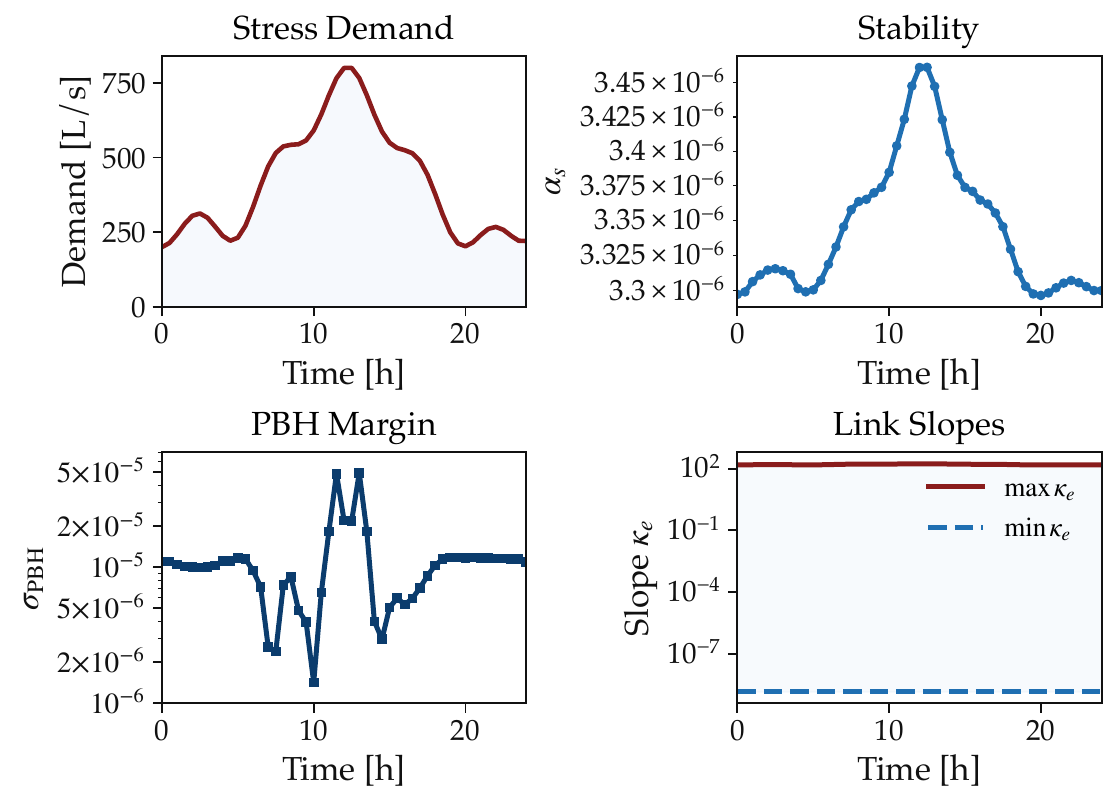}
\caption{Aggressive operating point margin study for \texttt{Net3}. The total demand is swept from $200$ to $800$~L/s while preserving the EPANET spatial demand split. At each demand level, a DAE-consistent operating snapshot is computed, then $(\bE_h,\bA_h,\bB_u)$, the DAE stability margin $\alpha_s$, the finite-mode PBH margin, and the range of incremental slopes $\kappa_e$ are evaluated.}
\label{fig:net3_hourly_operating_margins}
\end{figure}

\begin{figure}[t]
\centering
\includegraphics[width=\columnwidth]{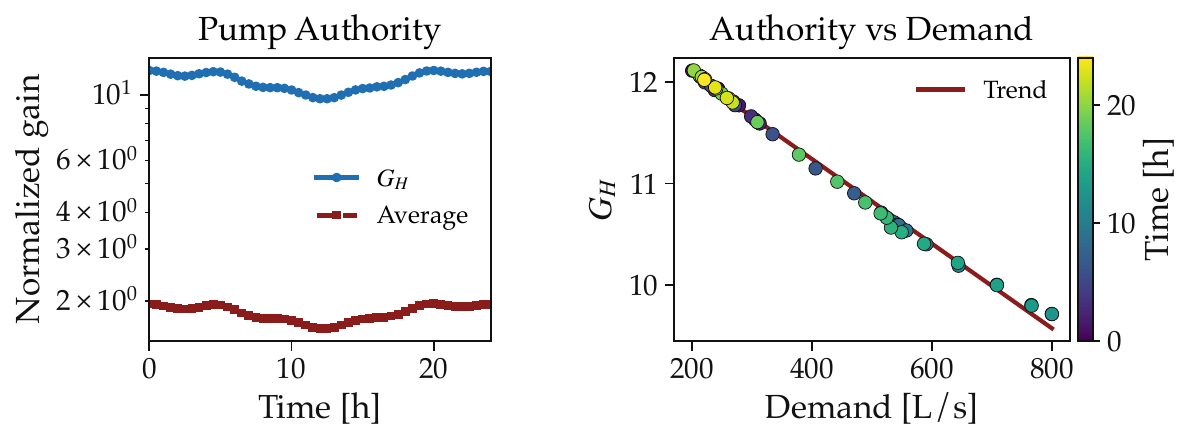}
\caption{Finite-horizon pump authority for the same \texttt{Net3} stress demand study. At each sample of the 24-hour sweep, the linearized DAE is frozen and $G_{H,i}$ is computed over a local two-hour horizon. The left figure shows the repeated local authority calculation over time, and the right figure shows how the largest singular value changes with total demand.}
\label{fig:net3_hourly_control_authority}
\end{figure}

Figs.~\ref{fig:net3_hourly_operating_margins} and~\ref{fig:net3_hourly_control_authority} show that aggressive demand changes \textit{do} affect the linearized model, but they \textit{do not} produce a stability or pump authority collapse. Over the sweep, the stability margin remains positive and changes only from $3.30\times10^{-6}$ to $3.46\times10^{-6}$ (note the strong positive correlation between the demand pattern and the \textit{stress} it pushes onto the stability margin). The PBH margin is more jagged, ranging from $1.42\times10^{-6}$ to $4.95\times10^{-5}$, but it remains positive throughout. The largest incremental slope changes mildly from $1.51\times10^2$ to $1.71\times10^2$, while the smallest stays near $1.41\times10^{-9}$. The finite horizon authority metric gives the more physical picture. The value $G_{H,i}$ varies from $9.71$ to $12.1$, and the average singular value varies from $1.62$ to $1.96$. The right figure of Fig.~\ref{fig:net3_hourly_control_authority} shows an almost monotone decrease of $G_{H,i}$ as total demand increases. For this \texttt{Net3} stress sweep, higher demand therefore weakens pump authority in the sense that the same pump speed input energy produces a smaller normalized hydraulic state response. This is consistent with high flow operating points being more dissipative through larger incremental headloss effects. At the same time, the gain change is modest and the PBH margin stays positive, so the result does not suggest an imminent loss of controllability. This concludes the system-theoretic analysis of the DAE hydraulic model.

\section{Concluding Remarks, Paper Limitations, and Future Work}~\label{sec:future}

This paper develops a control engineering perspective on the analysis of WDN hydraulics. It connects fixed mode regularity, switching nonsmoothness, smoothed simulation, graph-structured linearization, and parameter-sensitive stability and controllability margins. The contribution is \textit{not} that each control-theoretic ingredient is new in isolation---we do not make that claim. The value is that these foundations are tailored to water system dynamics and assembled into one DAE analysis framework that can be useful for operators and researchers.

No one work is devoid of limitations, and this one is no exception.  First, the theory is local in the operating point and in the active pump and valve modes. Although the case studies show that the margins are not extremely sensitive to the tested operating point changes, extending the analysis to the full nonlinear DAE model~\eqref{eq:full_hydraulic_dae}--\eqref{eq:compact_dae_form} remains important. Second, the model does not include water hammer, pressure-dependent leakage, or other important phenomena in water systems. Third, the methods diagnose stability and controllability margins but do not prescribe system-level modifications through control node placement or pipe upgrades. Future work should pursue hybrid system theory, pressure-dependent demand and leakage models, and network modification methods that improve stability and controllability.

\bibliographystyle{IEEEtran}
\bibliography{refs}

\appendices

\section{NDAE Model for a Simple Network}\label{app:threenodes}
This appendix writes the full DAE of Section~\ref{sec:methods} for the \texttt{Threenodes} benchmark shown in Fig.~\ref{fig:threenodes_toy_v2}. The goal is not to introduce a new model, but to make the general construction concrete by displaying the incidence blocks and the resulting equations explicitly. Here we use the physical EPANET labels of the benchmark, namely one pump ($9$), one pipe ($1$), one junction ($2$), one tank ($8$), and one reservoir ($1$), rather than introducing anonymous edge labels.
 Then the block incidence matrices are
\begin{equation}\label{eq:threenodes_incidence_blocks}
	\begin{aligned}
		\bC_J^M&=[-1],\qquad \bC_A^M=[0],\qquad \bC_R^M=[1],\\
		\bC_J^P&=[1],\qquad \bC_A^P=[-1],\qquad \bC_R^P=[0].
	\end{aligned}
\end{equation}
Because the illustrative GPV is placed on the same delivery branch and kept fully open, it uses the same signed incidence column
\begin{equation}\label{eq:threenodes_valve_incidence}
	\bC_J^W=[1],\qquad \bC_A^W=[-1],\qquad \bC_R^W=[0].
\end{equation}
With control vector $\bu=\{s_9,o_v\}$ and demand vector $\bd=\{d_2^J,d_8^A\}$, the specialized DAE becomes
\begin{subequations}\label{eq:threenodes_dae}
	\begin{align}
		\dot q_9^{M}(t) &= \gamma_9^{M}\Big(p_1^{R}(t)-p_2^{J}(t) \notag\\
		&\qquad +\psi_9^{M}(q_9^{M}(t),s_9(t))\Big), \label{eq:threenodes_dae_pump}\\
		\dot q_1^{P}(t) &= \gamma_1^{P}\Big(p_2^{J}(t)-p_8^{A}(t)-\varphi_1^{P}(q_1^{P}(t)) \notag\\
		&\qquad -\varphi_v^{W}(q_1^{P}(t),o_v(t))\Big), \label{eq:threenodes_dae_pipe}\\
		\dot p_8^{A}(t) &= \frac{1}{a_8^{A}}\big(q_1^{P}(t)-d_8^{A}(t)\big), \label{eq:threenodes_dae_tank}\\
		\mathbf{0} &= -q_9^{M}(t)+q_1^{P}(t)+d_2^{J}(t), \label{eq:threenodes_dae_junction}\\
		\mathbf{0} &= p_1^{R}(t)-\bar p_1^{R}. \label{eq:threenodes_dae_reservoir}
	\end{align}
\end{subequations}
The corresponding link laws are
\begin{subequations}\label{eq:threenodes_component_laws}
	\begin{align}
		\varphi_1^{P}(q_1^{P}) &= r_1^{P}(q_1^{P})\,q_1^{P}|q_1^{P}|, \label{eq:threenodes_component_laws_pipe}\\
		\varphi_v^{W}(q_1^{P},o_v) &= r_v^{W}(o_v)\,q_1^{P}|q_1^{P}|, \label{eq:threenodes_component_laws_valve}\\
		\psi_9^{M}(q_9^{M},s_9) &= s_9^2\Big(h_9^0-r_9^{M}(q_9^{M}s_9^{-1})^{\nu_9}\Big). \label{eq:threenodes_component_laws_pump}
	\end{align}
\end{subequations}
Since the GPV is open, $r_v^{W}(o_v)=0$ and the valve loss term in~\eqref{eq:threenodes_component_laws_valve} vanishes. It is kept in the appendix equations to show exactly where an additional valve term enters the unreduced DAE.

\begin{figure}[h]
	\centering
	\vspace{-0.4cm}
	\begin{tikzpicture}[
  x=1cm,
  y=1cm,
  font=\small,
  line cap=round,
  line join=round,
  every node/.style={inner sep=1pt}
]
  \definecolor{waterblue}{RGB}{80,140,230}
  \definecolor{valveblue}{RGB}{35,95,220}
  \definecolor{demandred}{RGB}{150,25,25}
  \definecolor{flowblue}{RGB}{25,70,140}

  \tikzset{
    pipe/.style={draw=black, line width=1.2pt},
    tankoutline/.style={draw=black, line width=1pt},
    flowarrow/.style={draw=flowblue, -{Stealth[length=2.0mm]}, line width=0.7pt}
  }

  \coordinate (R) at (0,0);
  \coordinate (P1) at (1.15,0);
  \coordinate (P2) at (1.95,0);
  \coordinate (J) at (3.55,0);
  \coordinate (Tbase) at (3.55,2.45);

  \draw[thick, fill=black] (-0.28,-0.18) rectangle (0.28,0.10);
  \draw[thick] (-0.38,0.10) -- (0.38,0.10);
  \node[below=5pt] at (R) {$1$};
  \node[above=7pt] at (R) {Reservoir};

  \draw[pipe] (0.28,0) -- (P1);
  \draw[flowarrow] (0.48,0.12) -- (0.92,0.12);
  \draw[thick, fill=white] (1.55,0) circle (0.28);
  \draw[thick, fill=black] (1.42,-0.10) -- (1.42,0.10) -- (1.70,0) -- cycle;
  \draw[pipe] (P2) -- (J);
  \draw[flowarrow] (2.16,0.12) -- (2.82,0.12);
  \node[above=6pt] at (1.55,0.02) {Pump 1};

  \filldraw[black] (J) circle (0.07);
  \node[below=4pt] at (J) {$2$};

  \draw[pipe] (J) -- (Tbase);
  \draw[flowarrow] (3.70,0.28) -- (3.70,0.86);
  \draw[thick, draw=valveblue, fill=white] (3.28,1.10) -- (3.55,1.34) -- (3.82,1.10) -- cycle;
  \draw[thick, draw=valveblue, fill=white] (3.28,1.54) -- (3.55,1.30) -- (3.82,1.54) -- cycle;
  \draw[flowarrow] (3.70,1.76) -- (3.70,2.26);
  \node[valveblue, right=6pt] at (3.82,1.32) {GPV, $o_v$};
  \node[left=5pt] at (3.55,1.95) {Pipe 1};

  \draw[tankoutline] (3.16,2.45) -- (3.16,3.22);
  \draw[tankoutline] (3.94,2.45) -- (3.94,3.22);
  \draw[tankoutline] (3.16,3.22) arc[start angle=180,end angle=360,x radius=0.39,y radius=0.11];
  \draw[tankoutline] (3.16,2.45) arc[start angle=180,end angle=360,x radius=0.39,y radius=0.11];
  \fill[waterblue!25] (3.16,2.45) rectangle (3.94,3.02);
  \draw[waterblue!55, line width=0.8pt] (3.16,3.02) arc[start angle=180,end angle=360,x radius=0.39,y radius=0.07];
  \node[above=6pt] at (3.55,3.34) {Tank 1};
  \node[right=4pt] at (3.94,2.52) {$8$};

  \draw[thick, demandred, ->] (4.45,-0.48) -- (3.84,-0.06);
  \node[demandred, right=1pt] at (4.47,-0.48) {$d_2^J$};
\end{tikzpicture}
	\caption{Toy network based on three nodes water system. The GPV is shown for modeling completeness and is kept fully open in this appendix.}
	\label{fig:threenodes_toy_v2}
\end{figure}
\section{Proof of Theorem~\ref{thm:local_index_one}}\label{app:proof_local_index_one}
\begin{proof}
Define the algebraic residual $\mathbf{r}_{\mathrm{alg}}(\bz,\by,\bu,\bd)=\bh(\{\bz,\by\},\bu,\bd)$. By Assumption~\ref{ass:regular_fixed_mode}, the map $\mathbf{r}_{\mathrm{alg}}$ is continuously differentiable in a neighborhood of $(\bar{\bz},\bar{\by},\bar{\bu},\bar{\bd})$ and its Jacobian with respect to $\by$ is nonsingular at that point. The implicit function theorem therefore provides open neighborhoods $\mathsf{U}_z$, $\mathsf{U}_u$, $\mathsf{U}_d$ of $\bar{\bz}$, $\bar{\bu}$, $\bar{\bd}$, together with a unique continuously differentiable map $\boldsymbol{\pi}_y:\mathsf{U}_z\times\mathsf{U}_u\times\mathsf{U}_d\to\mathbb{R}^{n_y}$ such that
\begin{equation}\label{eq:implicit_function_identity}
\mathbf{r}_{\mathrm{alg}}\big(\bz,\boldsymbol{\pi}_y(\bz,\bu,\bd),\bu,\bd\big)=\mathbf{0}
\end{equation}
for all $(\bz,\bu,\bd)\in\mathsf{U}_z\times\mathsf{U}_u\times\mathsf{U}_d$, with $\boldsymbol{\pi}_y(\bar{\bz},\bar{\bu},\bar{\bd})=\bar{\by}$.

Substituting $\by=\boldsymbol{\pi}_y(\bz,\bu,\bd)$ into the differential equation~\eqref{eq:compact_dae_form_differential} yields the reduced vector field $\mathbf{F}$ in~\eqref{eq:local_reduced_ode}. Pick a closed ball $\mathsf{B}_z\subset\mathsf{U}_z$ centered at $\bar{\bz}$ and closed balls $\mathsf{B}_u\subset\mathsf{U}_u$, $\mathsf{B}_d\subset\mathsf{U}_d$ on which $\bu(\cdot)$ and $\bd(\cdot)$ remain by continuity for $t\in[t_0,t_0+\delta]$ with $\delta>0$ sufficiently small. Because $\bfm$ and $\boldsymbol{\pi}_y$ are continuously differentiable, their Jacobians are bounded on the compact set $\mathsf{B}_z\times\mathsf{B}_u\times\mathsf{B}_d$. The mean value theorem then yields a constant $L_F>0$ such that
\[
\|\mathbf{F}(\bz_1,\bu,\bd)-\mathbf{F}(\bz_2,\bu,\bd)\|\le L_F\|\bz_1-\bz_2\|
\]
for all $\bz_1,\bz_2\in\mathsf{B}_z$ and $(\bu,\bd)\in\mathsf{B}_u\times\mathsf{B}_d$. Since $\bu(t)$ and $\bd(t)$ are continuous in $t$, the right-hand side $\mathbf{F}(\bz,\bu(t),\bd(t))$ is jointly continuous in $(t,\bz)$ and uniformly Lipschitz in $\bz$ on $\mathsf{B}_z$. The local existence and uniqueness result for ODEs with a continuous, locally Lipschitz right-hand side~\cite[Thm.~2.2]{teschl2012ode} then guarantees, for every consistent initial condition $\bz(t_0)=\bz_0\in\mathsf{B}_z$, a unique solution $\bz:[t_0,t_0+\delta']\to\mathsf{B}_z$ of~\eqref{eq:local_reduced_ode} for some $0<\delta'\le\delta$. The algebraic variables are then recovered uniquely by $\by(t)=\boldsymbol{\pi}_y(\bz(t),\bu(t),\bd(t))$, and identity~\eqref{eq:implicit_function_identity} guarantees that the algebraic constraints remain satisfied along the trajectory. This proves local existence and uniqueness of the DAE solution. Because exactly one differentiation of the algebraic constraints through the implicit map $\boldsymbol{\pi}_y$ suffices to recover an explicit ODE in $\bz$, the fixed control DAE has differentiation index one in the sense of~\cite[Def.~3.41]{kunkel2006dae}. See also~\cite[Ch.~VI.1]{hairer1996ode2}. Continuation on $\mathcal{I}$ proceeds by reapplying the above argument as long as $(\bz(t),\bu(t),\bd(t))$ stays in $\mathsf{U}_z\times\mathsf{U}_u\times\mathsf{U}_d$.
\end{proof}

\section{Proof of Theorem~\ref{thm:switch_nondiff}}\label{app:proof_switch_nondiff}
\begin{proof}
Apply Theorem~\ref{thm:local_index_one} separately on $[t_0,t_s)$ and $(t_s,t_1]$, where the residual maps and active constraints differ on the two intervals. On each side, the differential state obeys the reduced ODE~\eqref{eq:local_reduced_ode}, namely $\dot{\bz}=\mathbf{F}^-(\bz,t)$ for $t<t_s$ and $\dot{\bz}=\mathbf{F}^+(\bz,t)$ for $t>t_s$. Both vector fields are continuous at the common state value $\bz_s$ by Assumption~\ref{ass:regular_fixed_mode} on each side. Hence the one-sided derivatives of $\bz$ at $t_s$ exist and equal
\[
\dot{\bz}(t_s^-)=\mathbf{F}^-(\bz_s,t_s),\qquad
\dot{\bz}(t_s^+)=\mathbf{F}^+(\bz_s,t_s).
\]
If the two-sided derivative existed at $t_s$, the one-sided values would coincide. Condition~\eqref{eq:switch_kink_condition} forces them to differ, so $\bz$ fails to be differentiable at $t_s$. The algebraic variables are recovered through the side-specific implicit maps $\boldsymbol{\pi}_y^\pm$ supplied by Theorem~\ref{thm:local_index_one}. Thus
\[
\by(t_s^\pm)=\boldsymbol{\pi}_y^\pm\!\big(\bz_s,\bu(t_s^\pm),\bd(t_s)\big),
\]
and $\by$ is continuous at $t_s$ if and only if these two values agree. Otherwise $\by$ exhibits a finite jump while $\bz$ stays continuous.
\end{proof}

\section{Proof of Theorem~\ref{thm:smoothed_switching}}\label{app:proof_smoothed_switching}
\begin{proof}
\textit{(Local well-posedness.)} The quintic blend in~\eqref{eq:quintic_smoothstep} satisfies $\chi(0)=0$, $\chi(1)=1$, $\chi'(0)=\chi'(1)=0$, $\chi''(0)=\chi''(1)=0$, so $\bu_{\tau_s}\in C^2(\mathbb{R};\mathbb{R}^{n_u})$ and coincides with the scheduled values outside the smoothing windows. The path $t\mapsto\bu_{\tau_s}(t)$ stays in a regular branch of $\bh$, so Assumption~\ref{ass:regular_fixed_mode} holds along the path. Theorem~\ref{thm:local_index_one} then provides a continuously differentiable algebraic map $\boldsymbol{\pi}_y$ on a neighborhood of the trajectory, and the reduced vector field $\mathbf{F}_{\tau_s}(\bz,t)=\mathbf{F}(\bz,\bu_{\tau_s}(t),\bd(t))$ is jointly continuous in $(t,\bz)$ and locally Lipschitz in $\bz$ uniformly on compact time intervals. Existence and uniqueness of a local solution then follow from the same ODE argument as in the proof of Theorem~\ref{thm:local_index_one}, and the index-1 reduction is identical.

\textit{(Convergence.)} Fix $\mathcal{T}_{\mathrm{c}}=[a,b]\subset(t_k,t_{k+1})$ with $a>t_k+\tau_s$, so $\bu_{\tau_s}(t)=\bar{\bu}_{k+1}=\bu(t)$ for all $t\in\mathcal{T}_{\mathrm{c}}$ and all $\tau_s\le a-t_k$. The smoothed and hard-switch reduced vector fields therefore coincide on $\mathcal{T}_{\mathrm{c}}$
\[
\mathbf{F}_{\tau_s}(\bz,t)=\mathbf{F}(\bz,t),\quad t\in\mathcal{T}_{\mathrm{c}}.
\]
Choose $\bar\tau>0$ so that, for every $\tau_s\in(0,\bar\tau]$, both trajectories remain in the same closed ball $\mathsf{B}$ over $[t_k,b]$. Because $\mathbf{F}$ is continuously differentiable, compactness of $\mathsf{B}\times[t_k,b]$ yields constants $L,M>0$ independent of $\tau_s$ such that $\mathbf{F}(\cdot,t)$ is $L$-Lipschitz on $\mathsf{B}$ and $\|\partial_{\bu}\mathbf{F}\|\le M$ there. Define $e(t)=\bz_{\tau_s}(t)-\bz(t)$. For $t\in[t_k,t_k+\tau_s]$ the two vector fields differ only through the controls, so $\|\dot e\|\le L\|e\|+M\|\bu_{\tau_s}(t)-\bar{\bu}_{k+1}\|$, while on $[t_k+\tau_s,b]$ one has $\|\dot e\|\le L\|e\|$. Since $\|\bu_{\tau_s}(t)-\bar{\bu}_{k+1}\|\le\|\bar{\bu}_{k+1}-\bar{\bu}_k\|$ on the smoothing window, $\|\dot e\|\le L\|e\|+M\|\bar{\bu}_{k+1}-\bar{\bu}_k\|$ there. Applying the integral comparison lemma~\cite[Lem.~2.7]{teschl2012ode}, which states that any non-negative scalar function satisfying $f(t)\le \int_{t_k}^{t}(Lf(s)+c)\,ds$ obeys $f(t)\le(c/L)(e^{L(t-t_k)}-1)$, with $c=M\|\bar{\bu}_{k+1}-\bar{\bu}_k\|$ and starting from $e(t_k)=\mathbf{0}$ gives
\[
\|e(t_k+\tau_s)\|\le M\|\bar{\bu}_{k+1}-\bar{\bu}_k\|\,\tau_s\, e^{L\tau_s},
\]
where the bound $(e^{L\tau_s}-1)/L\le \tau_s e^{L\tau_s}$ has been used. Propagating to $t\in\mathcal{T}_{\mathrm{c}}$ with the same estimate yields $\|e(t)\|\le M\|\bar{\bu}_{k+1}-\bar{\bu}_k\|\,\tau_s\,e^{L(b-t_k)}\to 0$ uniformly on $\mathcal{T}_{\mathrm{c}}$ as $\tau_s\to 0$. Continuity of the algebraic map $\boldsymbol{\pi}_y$ then gives $\sup_{t\in\mathcal{T}_{\mathrm{c}}}\|\by_{\tau_s}(t)-\by(t)\|\to 0$, hence uniform convergence of the full state.
\end{proof}
\section{Linearized DAE for Simple Network}\label{app:threenodes_linearized}
This appendix derives the graph linearized DAE in~\eqref{eq:linearized_graph_dae} for the same network, starting from the nonlinear equations in~\eqref{eq:threenodes_dae}. Let $\lambda_9^M=(\gamma_9^M)^{-1}$ and $\lambda_1^P=(\gamma_1^P)^{-1}$. At a fixed equilibrium $(\bar q_9^M,\bar q_1^P,\bar p_2^J,\bar p_8^A,\bar s_9,\bar o_v)$, define the local slopes
\begin{equation}\label{eq:threenodes_kappa}
\hspace{-0.55cm}\kappa_9=-\frac{\partial \psi_9^M}{\partial q_9^M}(\bar q_9^M,\bar s_9)>0,	\kappa_1=\frac{\partial}{\partial q_1^P}
		\big(\varphi_1^P+\varphi_v^W\big)(\bar q_1^P,\bar o_v)>0.
\end{equation}
The pump sign is negative because the pump head gain decreases with increasing flow on the regular branch. With graph state
\[
\Delta\bx_h=
[\Delta q_9^M\ \Delta q_1^P\ \Delta p_2^J\ \Delta p_8^A]^\top,
\]
and with $\Delta\bu=\mathbf{0}$ and $\Delta\bd=\mathbf{0}$, linearizing~\eqref{eq:threenodes_dae} gives
\begin{equation}\label{eq:threenodes_linear_matrices}
	\displaystyle
	\bE_h\Delta\dot{\bx}_h=\bA_h\Delta\bx_h,
	\quad
	\bE_h=
	\begin{bmatrix}
		\lambda_9^M & 0 & 0 & 0\\
		0 & \lambda_1^P & 0 & 0\\
		0 & 0 & 0 & 0\\
		0 & 0 & 0 & a_8^A
	\end{bmatrix},
\end{equation} 
and $$
\bA_h=
\begin{bmatrix}
	-\kappa_9 & 0 & -1 & 0\\
	0 & -\kappa_1 & 1 & -1\\
	-1 & 1 & 0 & 0\\
	0 & 1 & 0 & 0 \end{bmatrix}.$$
The control input columns omitted from~\eqref{eq:threenodes_linear_matrices} are obtained by differentiating the two link laws with respect to $s_9$ and $o_v$, and the demand input columns enter the junction and tank rows. The corresponding head difference incidence blocks for the signed flows in this appendix are
\[
\widehat{\bN}_J=[-1\ \ 1],\qquad
\widehat{\bN}_A=[0\ \ -1],\qquad
\widehat{\bN}_R=[1\ \ 0],
\]
so that $\bK=\operatorname{diag}(\kappa_9,\kappa_1)$, $\bW=\bK^{-1}$, and $\widehat{\bL}_w=\widehat{\bN}\bW\widehat{\bN}^\top$. The static linearized link relation gives $\widehat{\bN}^\top\Delta\bp=\bK\Delta\bq$, hence $\Delta\boldsymbol{\iota}=\widehat{\bL}_w\Delta\bp$. The incremental energy is
\[
V=\tfrac{1}{2}\lambda_9^M(\Delta q_9^M)^2
+\tfrac{1}{2}\lambda_1^P(\Delta q_1^P)^2
+\tfrac{1}{2}a_8^A(\Delta p_8^A)^2.
\]
Using the algebraic constraint $-\Delta q_9^M+\Delta q_1^P=0$ in~\eqref{eq:threenodes_linear_matrices} yields
\[
\dot V=-\kappa_9(\Delta q_9^M)^2-\kappa_1(\Delta q_1^P)^2\le 0,
\]
which is the two link instance of Theorem~\ref{thm:graph_linear_dae}. Since the active component contains the reservoir node, the only invariant set with $\dot V=0$ is the origin on the consistency set.

\section{Proof of Theorem~\ref{thm:graph_linear_dae}}\label{app:proof_graph_linear_dae}
\begin{proof}
Linearizing~\eqref{eq:graph_form_dae_links} around the equilibrium gives
\begin{align*}
\boldsymbol{\Lambda}_q\Delta\dot{\bq}
= {}&
\bN_J^\top\Delta\bp^J+\bN_A^\top\Delta\bp^A+\bN_R^\top\Delta\bp^R\\
&-\bK\Delta\bq-\partial_{\bu}\boldsymbol{\eta}(\bar{\bq},\bar{\bu})\Delta\bu.
\end{align*}
The reservoir equation gives $\Delta\bp^R=\mathbf{0}$. Linearizing the junction and tank equations gives $\mathbf{0}=\bN_J\Delta\bq-\Delta\widetilde{\bd}^{J}$ and $\bA^A\Delta\dot{\bp}^{A}=-\bN_A\Delta\bq+\Delta\widetilde{\bd}^{A}$. Collecting these three equations yields~\eqref{eq:linearized_graph_dae}--\eqref{eq:linear_dae_blocks}. The matrices $\bB_u$ and $\bB_d$ are the corresponding input blocks.
\begin{figure*}[t]
	\centering
	\vspace{-0.4cm}
	\includegraphics[width=\textwidth]{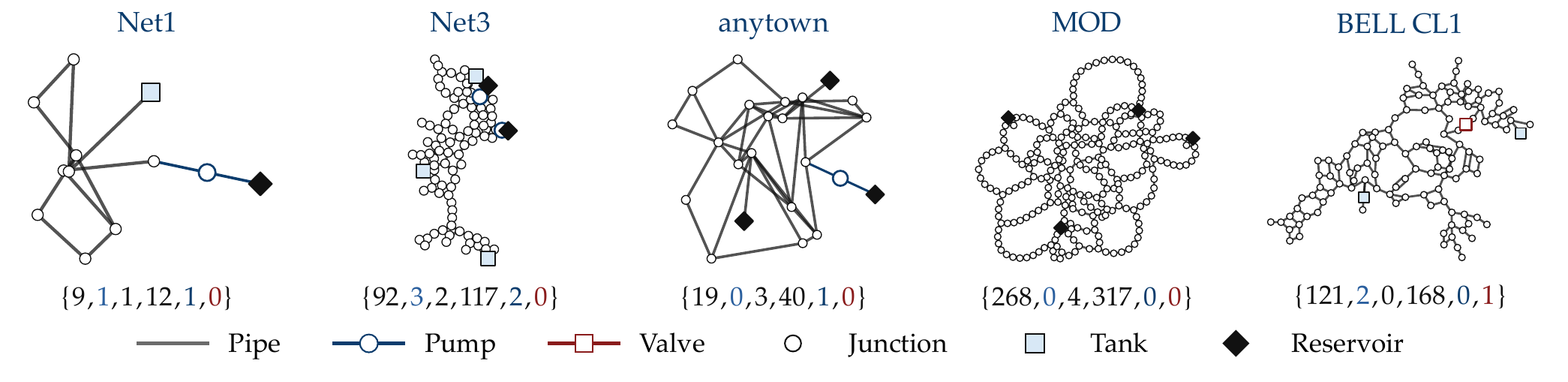}
	\caption{Topologies of the five networks used in the case studies. The layouts are topology-preserving but not to scale. Below each network the size tuple $\{n_J,n_A,n_R,n_{\mathrm{pipe}},n_M,n_W\}$ lists the number of junctions, tanks, reservoirs, pipes, pumps, and valves.}
	\label{fig:network_overview}
	\vspace{-0.3cm}
\end{figure*}

At equilibrium, the link equation reduces to $\bN^\top\bar{\bp}=\boldsymbol{\eta}(\bar{\bq},\bar{\bu})$. Linearizing this static relation gives $\bN^\top\Delta\bp=\bK\Delta\bq$, hence $\Delta\bq=\bK^{-1}\bN^\top\Delta\bp=\bW\bN^\top\Delta\bp$. Multiplication by $\bN$ gives $\Delta\boldsymbol{\iota}=\bN\Delta\bq=\bN\bW\bN^\top\Delta\bp$, which proves~\eqref{eq:weighted_laplacian}.

For the energy identity, observe that~\eqref{eq:linear_dae_energy} is exactly
\[
V=\frac{1}{2}\Delta\bq^\top\boldsymbol{\Lambda}_q\Delta\bq
+\frac{1}{2}\Delta\bp^{A\top}\bA^A\Delta\bp^A,
\]
because the junction head block of $\bE_h$ is zero. Along an unforced consistent trajectory, $\bN_J\Delta\bq=\mathbf{0}$ and $\bA^A\Delta\dot{\bp}^{A}=-\bN_A\Delta\bq$. Therefore
\begin{align*}
\dot V
={}&
\Delta\bq^\top
\left(\bN_J^\top\Delta\bp^J+\bN_A^\top\Delta\bp^A-\bK\Delta\bq\right)\\
&{}+\Delta\bp^{A\top}(-\bN_A\Delta\bq)\\
={}&
(\bN_J\Delta\bq)^\top\Delta\bp^J
-\Delta\bq^\top\bK\Delta\bq=
-\Delta\bq^\top\bK\Delta\bq,
\end{align*}
which proves the energy identity in~\eqref{eq:linear_dae_energy}. It remains to identify the largest invariant set inside $\{\dot V=0\}$. Since $\bK\succ 0$, $\dot V=0$ forces $\Delta\bq=\mathbf{0}$, and the tank equation then gives $\Delta\dot{\bp}^A=\mathbf{0}$, so $\Delta\bp^A$ is constant. With $\Delta\bq=\mathbf{0}$ and $\Delta\bp^R=\mathbf{0}$, the linearized link relation $\bN^\top\Delta\bp=\bK\Delta\bq$ reduces to $\bN_J^\top\Delta\bp^J+\bN_A^\top\Delta\bp^A=\mathbf{0}$, i.e., $\bN_{\mathrm{nr}}^\top\Delta\bp_{\mathrm{nr}}=\mathbf{0}$, where $\bN_{\mathrm{nr}}=[\bN_J;\bN_A]$ is the incidence matrix restricted to nonreservoir nodes and $\Delta\bp_{\mathrm{nr}}=\{\Delta\bp^J,\Delta\bp^A\}$. The rank of the incidence matrix of a directed graph equals the number of vertices minus the number of connected components~\cite[Thm.~8.3.1]{godsil2001algebraic}. Therefore the full incidence $\bN$ has rank $(n_J+n_A+n_R)-c$, where $c$ is the number of components of the active link graph. The left null space of $\bN$ is spanned by the indicator vectors of those components. By assumption, every component contains at least one reservoir, so removing one reservoir row per component eliminates one direction of the left null space per component, leaving a matrix of full row rank. The nonreservoir submatrix $\bN_{\mathrm{nr}}$ inherits this property, so $\bN_{\mathrm{nr}}^\top$ has trivial null space, forcing $\Delta\bp^J=\mathbf{0}$ and $\Delta\bp^A=\mathbf{0}$. The largest invariant set inside $\{\dot V=0\}$ is therefore $\{\mathbf{0}\}$. By LaSalle's invariance principle~\cite[Thm.~4.4]{khalil2002nonlinear}, every bounded trajectory of the unforced system is contained in the largest invariant subset of $\{\dot V=0\}$. Applied to the linearized system on the consistency set (where algebraic constraints are enforced), this yields local asymptotic stability of the equilibrium.
\end{proof}

\section{Proof of Theorem~\ref{thm:linearization_bounds}}\label{app:proof_linearization_bounds}
\begin{proof}
Because $\bJ_x$ is nonsingular, the implicit function theorem yields a unique continuously differentiable equilibrium map $\bx^\star(\boldsymbol{\vartheta})$ on a neighborhood of $\bar{\boldsymbol{\vartheta}}$. Differentiate the identity $\mathbf{r}_{\vartheta}(\bx^\star(\boldsymbol{\vartheta}),\boldsymbol{\vartheta})=\mathbf{0}$ with respect to $\boldsymbol{\vartheta}$ and evaluate at $\bar{\boldsymbol{\vartheta}}$ to obtain
$\bJ_x\bS_\vartheta+\bJ_\vartheta
=
\mathbf{0},$
which is equivalent to~\eqref{eq:equilibrium_sensitivity}. Since $\mathbf{r}_{\vartheta}$ is twice continuously differentiable, the equilibrium map is locally twice differentiable on a possibly smaller neighborhood. Taylor's theorem therefore gives
\begin{align*}
\bx^\star(\bar{\boldsymbol{\vartheta}}+\Delta\boldsymbol{\vartheta})
= {}& \bar{\bx}
+\bS_\vartheta\Delta\boldsymbol{\vartheta}
+\mathbf{r}_x(\Delta\boldsymbol{\vartheta}), \;\; \|\mathbf{r}_x(\Delta\boldsymbol{\vartheta})\|
\le  \tfrac{L_x}{2}\|\Delta\boldsymbol{\vartheta}\|^2,
\end{align*}
which proves~\eqref{eq:equilibrium_shift_bound}. The local DAE matrix $\bA_h(\boldsymbol{\vartheta})$ is obtained by differentiating the fixed control DAE with respect to the state and then evaluating the result at the equilibrium $\bx^\star(\boldsymbol{\vartheta})$. Since both operations are twice continuously differentiable in $\boldsymbol{\vartheta}$, another Taylor expansion gives
\begin{align*}
\bA_h(\bar{\boldsymbol{\vartheta}}+\Delta\boldsymbol{\vartheta})
= {}& \bA_h(\bar{\boldsymbol{\vartheta}})
+\nabla_{\boldsymbol{\vartheta}}\bA_h(\bar{\boldsymbol{\vartheta}})
\Delta\boldsymbol{\vartheta}
+\mathbf{R}_A(\Delta\boldsymbol{\vartheta}),\\
\|\mathbf{R}_A(\Delta\boldsymbol{\vartheta})\|
\le {}& \tfrac{L_A}{2}\|\Delta\boldsymbol{\vartheta}\|^2,
\end{align*}
which is exactly~\eqref{eq:linearization_matrix_bound}. Finally, enforcing $\frac{L_A}{2}\|\Delta\boldsymbol{\vartheta}\|^2\le \varepsilon_{\mathrm{lin}}\|\bA_h(\bar{\boldsymbol{\vartheta}})\|$ yields~\eqref{eq:linearization_validity_radius}. This completes the proof.
\end{proof}

\section{Proof of Theorem~\ref{thm:property_robustness}}\label{app:proof_property_robustness}
\begin{proof}
\textit{(Stability.)} Let $\Delta\bA=\bA_{\mathrm{red}}(\bar{\boldsymbol{\vartheta}}+\Delta\boldsymbol{\vartheta})-\bA_{\mathrm{red}}(\bar{\boldsymbol{\vartheta}})$. Since $\bA_{\mathrm{red}}(\bar{\boldsymbol{\vartheta}})=\bV\boldsymbol{\Lambda}\bV^{-1}$ is diagonalizable, the eigenvalue perturbation bound stating that every eigenvalue of the perturbed matrix lies within $\kappa(\bV)\|\Delta\bA\|$ of some nominal eigenvalue~\cite[Thm.~6.3.2]{hornjohnson2013matrix} gives, for any eigenvalue $\lambda$ of $\bA_{\mathrm{red}}(\bar{\boldsymbol{\vartheta}}+\Delta\boldsymbol{\vartheta})$,
\[
\min_j|\lambda-\lambda_j(\bA_{\mathrm{red}}(\bar{\boldsymbol{\vartheta}}))|
\le
\kappa(\bV)\|\Delta\bA\|,
\]
where $\kappa(\bV)=\|\bV\|\,\|\bV^{-1}\|$. Combining with~\eqref{eq:state_space_perturbation_bound_A} gives
\[
\Re\lambda\le -\alpha_s+\kappa(\bV)c_A\|\Delta\boldsymbol{\vartheta}\|,
\]
so every perturbed eigenvalue lies in the open left half plane whenever $\kappa(\bV)c_A\|\Delta\boldsymbol{\vartheta}\|<\alpha_s$. This proves exponential stability and the first bound in~\eqref{eq:stability_margin_perturbation} up to the $O(\|\Delta\boldsymbol{\vartheta}\|^2)$ remainder, which collects the higher-order terms from the local Taylor expansion of $\bA_{\mathrm{red}}(\boldsymbol{\vartheta})$ around $\bar{\boldsymbol{\vartheta}}$.

\textit{(Controllability.)} Let $\Delta\boldsymbol{\mathcal{K}}_N=\boldsymbol{\mathcal{K}}_N(\bar{\boldsymbol{\vartheta}}+\Delta\boldsymbol{\vartheta})-\boldsymbol{\mathcal{K}}_N(\bar{\boldsymbol{\vartheta}})$. The singular value perturbation inequality stating that the smallest singular value of a matrix changes by at most the spectral norm of the perturbation~\cite[Cor.~7.3.5]{hornjohnson2013matrix} gives
\[
\big|\sigma_{\min}(\boldsymbol{\mathcal{K}}_N(\bar{\boldsymbol{\vartheta}}+\Delta\boldsymbol{\vartheta}))-\sigma_{\min}(\boldsymbol{\mathcal{K}}_N(\bar{\boldsymbol{\vartheta}}))\big|
\le
\|\Delta\boldsymbol{\mathcal{K}}_N\|.
\]
Combining with~\eqref{eq:kalman_perturbation_bound} yields
\[
\sigma_{\min}\!\big(\boldsymbol{\mathcal{K}}_N(\bar{\boldsymbol{\vartheta}}+\Delta\boldsymbol{\vartheta})\big)
\ge
\sigma_c-c_{\mathcal{K}}(c_A+c_B)\|\Delta\boldsymbol{\vartheta}\|,
\]
which is~\eqref{eq:controllability_margin_perturbation}. The right-hand side is positive whenever $c_{\mathcal{K}}(c_A+c_B)\|\Delta\boldsymbol{\vartheta}\|<\sigma_c$. In that regime the controllability matrix retains full row rank and the perturbed reduced model is controllable over horizon $N$. This ends the proof. 
\end{proof}

\begin{table}[t]
	\centering
	\vspace{-0.4cm}
	\caption{Numerical parameters used in the case studies (we use the other default \texttt{.INP} network parameters from EPANET). Here $N_t$ is the number of reported samples, $N_{\mathrm{IE}}$ is the number of implicit Euler steps, $h_\theta$ is the finite difference step for roughness sensitivities, and $\Delta t_m$ is the operating point sampling step.}
	\label{tab:case_study_numerical_parameters}
	\renewcommand{\arraystretch}{1.08}
	\setlength{\tabcolsep}{2.5pt}
	\fontsize{6.8}{7.5}\selectfont
	\begin{tabular}{
			>{\centering\arraybackslash\color{mainblue}\bfseries}p{0.15\columnwidth}
			m{0.45\columnwidth}
			>{\centering\arraybackslash}p{0.2\columnwidth}}
		\blueRule
		\rowcolor{aliceblue}
		\textbf{\color{black}Study} &
		\textbf{\color{black}Numerical settings} &
		\textbf{\color{black}Purpose} \\ \midrule
		Nominal DAE~\eqref{eq:compact_dae_form} &
		\makecell[l]{Pipe inertance $\gamma_e=gA_e/\ell_e$.\\
			Pump and valve inertance use\\
			the median pipe inertance.\\
			Pump speed scale $s_{\mathrm{scale}}=1$.} &
		DAE setup \\ \thinBlueRule
		Theorem~\ref{thm:local_index_one} &
		\makecell[l]{One hour horizon.\\
			$N_t=61$ time samples.\\
			Smooth demand profile used\\
			only for this fixed control test.} &
		Fixed control DAE \\ \thinBlueRule
		Theorem~\ref{thm:smoothed_switching} &
		\makecell[l]{$T=24$ h horizon.\\
			EPANET step $\Delta t_E=300$ s.\\
			Two DAE substeps per EPANET step.\\
			Smoothing width $\tau_s=300$ s.} &
		Switching and Smoothed DAE \\ \thinBlueRule
		Theorem~\ref{thm:graph_linear_dae} &
		\makecell[l]{Fixed operating points for\\
			\texttt{Net3} and \texttt{Anytown}.\\
			Initial perturbation size\\
			$\|\Delta \bx_h(0)\|_2=10^{-3}$.\\
			$N_{\mathrm{IE}}=2000$ implicit Euler steps.} &
		Energy decay and stability assessment\\ \thinBlueRule
		Theorem~\ref{thm:linearization_bounds} &
		\makecell[l]{Global roughness factors\\
			$\delta\in\{0.25,0.5,1,2,5,10\}\%$.\\
			Finite difference step $h_\theta=10^{-4}$.} &
		Screening validity of linear DAE \\ \thinBlueRule
		Theorem~\ref{thm:property_robustness} &
		\makecell[l]{Single pipe roughness change\\
			$\Delta\theta_i=5\%$.\\
			Top $6$ pipes reported.} &
		Controllability and stability margins \\ \thinBlueRule
		Demand margins &
		\makecell[l]{\texttt{Net3} demand range\\
			$D\in[200,800]$ L/s.\\
			Sampling step $\Delta t_m=0.5$ h.\\
			Pump authority horizon $H=2$ h.} &
		Sensitivity of linear DAE to operating point changes \\
		\blueRule
	\end{tabular}
\end{table}

\section{Case Study Parameter Settings}\label{app:case_study_parameters}

Fig.~\ref{fig:network_overview} presents the topology of the five networks used in this paper. Tab.~\ref{tab:case_study_numerical_parameters} lists the numerical settings that are used in the case studies.

\end{document}